\begin{document}

\title{Exploiting Neighborhood Interference with Low Order Interactions under Unit Randomized Design}

\author[1]{Mayleen Cortez}
\author[1]{Matthew Eichhorn}
\author[2]{Christina Lee Yu}
\affil[2]{School of Operations Research and Information Engineering, Cornell University}
\affil[1]{Center for Applied Mathematics, Cornell University}

\maketitle

\begin{abstract}%
Network interference, where the outcome of an individual is affected by the treatment assignment of those in their social network, is pervasive in real-world settings. However, it poses a challenge to estimating causal effects. We consider the task of estimating the total treatment effect (TTE), or the difference between the average outcomes of the population when everyone is treated versus when no one is, under network interference. Under a Bernoulli randomized design, we provide an unbiased estimator for the TTE when network interference effects are constrained to low order interactions among neighbors of an individual. We make no assumptions on the graph other than bounded degree, allowing for well-connected networks that may not be easily clustered. We derive a bound on the variance of our estimator and show in simulated experiments that it performs well compared with standard estimators for the TTE. We also derive a minimax lower bound on the mean squared error of our estimator which suggests that the difficulty of estimation can be characterized by the degree of interactions in the potential outcomes model. We also prove that our estimator is asymptotically normal under boundedness conditions on the network degree and potential outcomes model. Central to our contribution is a new framework for balancing model flexibility and statistical complexity as captured by this \textit{low order interactions} structure.
\end{abstract}

\section{Introduction}

Accurately estimating causal effects is relevant in numerous applications, from pharmaceutical companies researching the efficacy of a new medication, to policy makers understanding the impact of social welfare programs, to social media companies evaluating the impact of different recommendation algorithms on user engagement across their platforms. Often, the entity interested in understanding a causal effect will design an experiment where they randomly assign subsets of the population to treatment (e.g. new medication) and to control (e.g. a placebo) and draw conclusions based on the observed outcomes of the participants (e.g. health outcomes). Other times, the entity may need to determine causal effects from observational data accrued in a previous study where they did not have full control over the treatment assignment mechanism. 

Our work focuses on estimating the \textit{total treatment effect} (TTE), or the difference between the average outcomes of the population when everyone is treated versus when no one is treated, given data collected from a \textit{randomized experiment}. This estimand is sometimes referred to as the global average treatment effect (GATE), as in \cite{UganderYin2020}. The TTE is a quantity of interest in scenarios where the decision maker must choose between adopting the proposed treatment or sticking with the status quo. For example, a social media company may develop a new recommendation algorithm for suggesting content to their users, and they want to decide whether or not to roll out this new algorithm across their platform. As another example, suppose that a pharmaceutical company develops a vaccine for some infectious disease. Then, public health experts and officials must decide whether the vaccine is safe and efficacious enough to warrant its recommendation to the general population. As the side effects of the new treatment are unknown, the goal is to determine the efficacy of the treatment relative to the status quo baseline by running a budgeted randomized trial, where the number of treated individuals in the trial is limited for safety reasons.

The techniques and guarantees for estimating causal effects in classical causal inference heavily rely upon the stable unit treatment value assumption (SUTVA), which posits that the outcome of each individual is independent of the treatment assignment of all other individuals \cite{Rubin1980SUTVA}. Unfortunately, SUTVA is violated in all the above mentioned applications because people influence and are impacted by their peers. In the presence of this \textit{network interference}, an individual's outcome is affected by the treatment assignment of others in their social network, and SUTVA no longer holds. Distinguishing between the direct effect of treatment on an individual and the network effect of others' treatment on the individual can be challenging.
This has resulted in a growing literature on causal inference in the presence of network (interference) effects, sometimes referred to as spillover or peer influence effects. In this work, we consider the task of estimating the TTE from unit randomized trials under \textit{neighborhood interference}, when an individual is affected by the treatment of its direct neighbors but is otherwise unaffected by the treatment of the individuals outside their neighborhood. Furthermore, we focus on \textit{unit randomized designs}, wherein individuals are independently assigned to either treatment or control in a randomized experiment. This is in contrast to cluster randomized designs, which has been proposed as an approach to address network interference for randomized experimental design but which may not be feasible in practice due to an incompatibility with existing experimental platforms.

Estimating a causal effect from randomized experiments involves two decisions. First, we must decide what kind of experiment to run, i.e. how we choose the individuals to participate in the study, and how we determine which individuals are assigned to receive the new treatment. In this paper we focus on randomization inference, in which we assume the entire population participates in the study, and the randomness arises from the assignment of treatment or control. Second, after the experiment is conducted, we must decide how to analyze the data and construct an estimate from the measured observations. The literature addressing network interference can largely be categorized as either constructing clever randomized designs to exploit clustering structure in the network, or designing new estimators that exploit structure in the potential outcomes model. Without making assumptions on either the network or the potential outcomes model, it is impossible to estimate the TTE under arbitrary interference \cite{AronowSamii17,Manski13,BasseAiroldi17}.

Table \ref{tab:literature} summarizes how different assumptions on the potential outcomes model or network structure lead to different types of solutions, with a focus on unbiased estimators and neighborhood interference models where the network is known. The columns correspond to assumptions on network structure (in order of increasing generality from left to right), while the rows correspond to assumptions on the structure in the potential outcomes model (in order of increasing generality from top to bottom). The body of the graph lists proposed solutions for estimating the TTE under the corresponding assumptions on network/model structure. For example, under a fully general network structure and a linear potential outcomes model, \cite{ToulisKao13,GuiXuBhasinHan15,BasseAiroldi15,cai2015social,parker2016optimal,chin2019regression} propose using an ordinary least squares (OLS) estimator with a Bernoulli randomized design (RD). Since we focus on solutions proposing \textit{unbiased} estimators, we also list bounds on the variance of the estimators in the table, as a point of comparison. As Table \ref{tab:literature} indicates, the literature has either focused on analyzing the Horvitz-Thompson estimator under new randomized designs that exploit network structure by increasing the correlation between neighbors' treatment assignments or alternatively using regression-style estimators with Bernoulli randomized design, exploiting strong functional assumptions on the potential outcomes model. Without assuming any structure beyond neighborhood interference, the baseline solution of using the Horvitz-Thompson estimator under Bernoulli randomized design is an unbiased estimator whose variance scales exponentially in the degree of the network.

\newcommand{\nl}{\hspace{-24pt} \newline}

\begin{table}
\centering
\def\arraystretch{1.3}
\begin{tabular}{P{70pt}|P{85pt}|P{125pt}|P{100pt}} 
    & \multicolumn{3}{c}{Network Structure} \\ \hhline{~|---} 
    Model Structure & $C$ Disconnected \nl Subcommunities & $\kappa$-restricted Growth & \cellcolor{yellow} Fully General \\ \hhline{----}
\textbf{Linear} & \multicolumn{2}{c|}{\multirow{3}{*}[-20pt]{Directions for Future Work}} & OLS, \textit{Bernoulli RD}; \cite{ToulisKao13,GuiXuBhasinHan15,BasseAiroldi15,cai2015social,parker2016optimal,chin2019regression} \\ \hhline{-|~~|-}
\textbf{Generalized Linear} & \multicolumn{2}{c|}{} & Regression / Machine \nl Learning, \textit{Bernoulli RD}; \cite{chin2019regression} \\
\hhline{-|~~|-}
 \cellcolor{yellow} \textbf{$\beta$-order} \nl \textbf{Interactions} &  \multicolumn{2}{c|}{}  & \cellcolor{yellow} SNIPE, \textit{Bernoulli RD}; \nl $\displaystyle{O\left(\frac{Y_{\max}^2 d^{2\beta + 2}}{np^{\beta}}\right)}$ \nl \\ \hhline{-|--|-}
{\textbf{Arbitrary} \nl \textbf{Neighborhood} \nl \textbf{Interference}}  
& \textit{Horvitz-Thompson}, \nl Cluster RD; \nl $\displaystyle{O\left(\frac{Y_{\max}^2}{Cp}\right)}$;\space \space\cite{Sobel06,Rosenbaum07,HudgensHalloran08,TchetgenVanderWeele12}  
& \textit{Horvitz-Thompson}, \nl Randomized Cluster RD; \nl $\displaystyle{O\left(\frac{Y_{\max}^2 \kappa^4 d^2}{np}\right)}$;\space \space\cite{GuiXuBhasinHan15,EcklesKarrerUgander17,UganderKarrerBackstromKleinberg13,UganderYin2020} 
& \textit{Horvitz-Thompson}, \textit{Bernoulli RD}; $\displaystyle{O\left(\frac{Y_{\max}^2 d^2}{np^d}\right)}$;\space \space\cite{AronowSamii17}\\
\end{tabular}

\caption{Literature Summary. Each row corresponds to an assumption on the structure of the potential outcomes model and each column corresponds to an assumption on the structure of network.
We list a proposed solution under the corresponding model/network assumptions in the following order: an \underline{unbiased} estimator for the total treatment effect, the randomized design (RD), a bound on the variance (if available), and citations to related work. In the variance bounds, $Y_{\max}$ is a bound on the effects on any individual, $\dmax$ is the maximum neighborhood size, $C$ is the number of subcommunities or clusters, $p$ is the treatment {probability} which is assumed to be small, $\kappa$ is the restricted growth parameter, and $\beta$ is the polynomial degree of the potential outcomes model. Our result proposes an estimator for the TTE under a $\beta-$order interactions (equivalently, $\beta-$degree polynomial) structure, a fully general graph, and Bernoulli design. All the solutions in the table rely on full knowledge of the network under neighborhood interference and we focus on unbiased estimators.}
\label{tab:literature}
\end{table}

In our work, we propose a hierarchy of model classes that extrapolates between simple linear models and complex general models, such that a practitioner can choose the strength of the model assumptions they are willing to impose. Naturally, assuming a more limited model simplifies the causal inference task. We characterize the complexity of a potential outcomes model with the order of interactions $\beta$, which also corresponds to the polynomial degree of the potential outcomes function when viewed as a polynomial of the treatment vector. A $\beta$-order interactions model is one in which each neighborhood set of size at most $\beta$ can have a unique additive network effect on the potential outcome. Our model allows for heterogeneity in the influence of different sets of treated neighbors, strictly generalizing beyond the typical parametric model classes used in the literature, which oftentimes assumes anonymous interference. We make no assumptions on the graph beyond bounded degree, so the graph may be well connected and not easily clustered. We summarize our contributions and results below:

\begin{enumerate}[label=(\arabic*)]
\item Assuming a $\beta$-order interactions model, under a non-uniform Bernoulli randomized design with treatment probabilities $\{p_i\}_{i=1}^n$ for $p_i \in [p,1-p]$, we present the Structured Neighborhood Interference Polynomial Estimator (SNIPE), a simple unbiased estimator for the TTE.

\item We derive a bound on the variance of our estimator which scales polynomially in the degree $d$ of the network and exponentially in degree $\beta$ of the potential outcomes model. We also show that our estimator is asymptotically normal.

\item For a $d$-regular graph and uniform treatment probabilities $p_i = p$ with $p^\beta < 0.16$ and $\beta \ll d$, we prove that the minimax optimal mean squared error for estimating the TTE is lower bounded by $\Omega\big(\frac{1}{np^\beta}\big)$, implying that the exponential dependence on $\beta$ is necessary. 

\item We provide experimental results to illustrate that using regression models that do not allow for heterogeneity amongst the network effects can lead to considerable bias when the anonymous interference assumption is violated. The experiments validate that our estimator is unbiased for $\beta$-order interaction models, and obtains a lower mean squared error than existing alternatives.
\end{enumerate}

To interpret the upper bound on the variance of our proposed estimator for the TTE, we note that our variance scales as $O(\text{poly}(d)/np^{\beta})$, compared to the variance of the Horvitz-Thompson estimator under Bernoulli randomized design scales as $O(1/np^d)$, where $\beta$ is always bounded above by $d$. For smaller values of $\beta$, the $\beta$-order interactions model imposes stronger structural assumptions on the potential outcomes model than required by the Horvitz-Thompson estimator. In turn, our estimator has significantly lower variance, scaling only polynomially in the network degree $d$ yet exponential in $\beta$, as opposed to exponential in $d$. In addition, the minimax lower bound shows that the exponential dependence on $\beta$ is also minimax optimal amongst $\beta$-order interaction models, implying that the order of interactions, or polynomial degree, of the potential outcomes model is a meaningful property that expresses the complexity of estimating the total treatment effect.
\section{Related Literature}
While there has been a flurry of recent activity in addressing the challenges that arise from network interference, every proposed solution concept fundamentally hinges upon making key assumptions on the form of network interference. Without any assumptions, the vector of observed outcomes under a particular treatment vector $\bz \in \{0,1\}^n$ may have no relationship to the potential outcomes under any other treatment vector. We should naturally expect that if we are willing to assume stronger assumptions, then we may be able to obtain stronger results conditioned on those assumptions being satisfied. As such, the literature ranges between results that impose fewer assumptions on the model and graph, resulting in unbiased estimators with high variance, or results that impose strong assumptions on the model and graph, resulting in simple, unbiased estimators with low variance.

Early works propose framing assumptions via exposure functions, or constant treatment response \cite{Manski13, AronowSamii17}. This assumes that there is some known exposure mapping, $f(\bz, \theta_i) \in \Delta$, which maps the treatment vector $\bz$, along with unit-specific traits $\theta_i$, to a discrete space $\Delta$ of \textit{exposures}, or effective treatments. The potential outcomes function for unit $i$ is then assumed to only be a function of its \textit{exposure}, or effective treatment, such that if $f(\bz, \theta_i) = f(\bz',\theta_i)$, then $Y_i(\bz) = Y_i(\bz')$. If $|\Delta|$ is as large as $2^n$, then this assumption places no effective restriction on the potential outcomes function; thus this assumption is only meaningful when $|\Delta|$ is relatively small. One commonly used exposure mapping expresses the neighborhood interference assumption \cite{UganderKarrerBackstromKleinberg13,  SussmanAiroldi17, bargagli2020heterogeneous, pmlr-v115-bhattacharya20a}, in which each unit $i$ is associated to a neighborhood $\cN_i \subseteq [n]$, and unit $i$'s potential outcome is only a function of the treatments of units within $|\cN_i|$. We could use an exposure mapping to formulate this assumption, where $|\Delta| = 2^{d}$ for $d$ denoting the maximum neighborhood size, and $f(\bz,\cN_i) = f(\bz',\cN_i)$ if the treatments assigned to individuals in $\cN_i$ are the same between $\bz$ and $\bz'$. Another commonly used exposure mapping expresses the anonymous interference assumption \cite{HudgensHalloran08,LiuHudgens14,WagerLi2020,Davide2020}, in which the potential outcomes are only a function of the treatments through the number of treated neighbors, i.e. $f(\bz,\cN_i) = f(\bz',\cN_i)$ if the number of treated individuals in $\cN_i$ via $\bz$ and $\bz'$ are the same. While the exposure mapping framework provides a highly generalizable and abstract framework for assumptions, it is fundamentally discrete in nature and the complexity of estimation is characterized by the number of possible exposures $|\Delta|$, which could still be large. As a result, \cite{Manski13} suggests a collection of additional assumptions that can be imposed on top of anonymous neighborhood interference, including distributional or functional form assumptions, or additivity assumptions as suggested in \cite{SussmanAiroldi17}.



The majority of works in the literature (along with our work) assume neighborhood interference with respect to a known graph. Notable exceptions include \cite{Leung2021}, which considers a highly structured \textit{spatial interference} setting with network effects decaying with distance, and \cite{AtheyEcklesImbens17}, which provides methods for testing hypotheses about interference effects including higher order spillovers. Without imposing any additional assumptions on the potential outcomes besides neighborhood interference, a natural solution is to use the Horvitz-Thompson estimator to estimate the average outcomes under full neighborhood treatment and full neighborhood control \cite{AronowSamii17}. While the estimator is unbiased, the variance of the estimator scales inversely with the probability that a unit's full neighborhood is in either treatment or control. Under a Bernoulli($p$) randomized design, where each individual is treated independently with probability $p$, the variance scales as $O(1/np^d)$, as indicated in the bottom right cell of Table \ref{tab:literature}. The exponential dependence on $d$ renders the estimator impractical for realistic networks.

One approach in response to the above challenge is to consider cleverly constructing randomized designs that increase overlap, i.e. the probability that a unit's entire neighborhood is assigned to treatment or control.
The earliest literature in this line of work additionally assumes \textit{partial interference}, also referred to as the multiple networks setting, in which the population can be partitioned into disjoint groups, and network interference only occurs within groups and not across groups \cite{Sobel06,HudgensHalloran08,TchetgenVanderWeele12,LiuHudgens14,VanderweeleTchetgenHalloran14,pmlr-v115-bhattacharya20a, auerbach2021local}.
This assumption makes sense in contexts where interference is only expected between naturally clustered groups of individuals, such as households, cities or countries. 
Given knowledge of the groups, we can then randomly assign groups to different treatment saturation levels, e.g. jointly assigning groups to either treatment or control, increasing the correlation of treatments within neighborhoods. Then, a difference in means estimator or a Horvitz-Thompson estimator can be used to estimate the TTE. The asymptotic consistency of these estimators relies on the number of groups going to infinity, with a variance scaling inversely with the fraction of treated clusters, i.e. $O(1/Cp)$ as indicated in the bottom left cell of Table \ref{tab:literature}. 
In practice, even networks that are clearly clustered into separate groups may not have a sufficiently large number of groups to result in accurate estimates.

For general connected graphs one can still implement a cluster-based randomized design on constructed clusters, where the clusters are constructed to minimize the number of edges between clusters \cite{ UganderKarrerBackstromKleinberg13,GuiXuBhasinHan15, EcklesKarrerUgander17,UganderYin2020}. 
\cite{UganderKarrerBackstromKleinberg13,UganderYin2020} provide guarantees for graphs exhibiting a restricted growth property, which limits the rate at which local neighborhoods of the graph can grow in size, and \cite{UganderYin2020} prove that using randomized cluster randomized design along with the Horvitz-Thompson estimator achieves a variance of $O(1/np)$ for restricted growth networks, which is a significant gain from the exponential dependence on $d$ under Bernoulli design. 
A limitation of these solutions is that cluster randomized design can be difficult to implement due to incompatibility with existing experimentation platforms or to ethical concerns. When the existing experimentation platform is already set up for a unit randomized design, the experimenter may have the desire to avoid overhauling the platform to work with cluster randomized design due to time or resource constraints \cite{chin2019regression}. Additionally, \cite{taljaard2009ethical, edwards1999ethical, HuttonEthicalCRCTs, DonnerAllanPitfalsCRT} detail some ethical issues that arise in cluster randomized designs including, but not limited to, problems with informed consent (e.g. it may be infeasible to gain informed consent from every unit in a cluster) and concerns about distributional justice (e.g. with regards to how clusters are selected and assigned to treatment). Furthermore, both \cite{taljaard2009ethical} and \cite{HuttonEthicalCRCTs} comment that many of the existing, formal research ethics guidelines were designed with unit randomized design in mind and thus, offer little guidance for considerations that arise in cluster randomized designs. The work we present here focuses on scenarios with unit randomized designs, when cluster randomized designs may not be feasible or may be undesirable due to any of the aforementioned concerns.

The alternate approach that has gained traction empirically is to impose additional functional assumptions on the potential outcomes in addition to anonymous neighborhood interference. The most common assumption is that the potential outcomes are linear with respect to a particular statistic of the treatment vector, where the linear function is shared across units \cite{ToulisKao13, GuiXuBhasinHan15, BasseAiroldi15,cai2015social,parker2016optimal,chin2019regression}. 
Under this assumption, estimating the entire potential outcomes function reduces to linear regression, which one could solve using ordinary least squares, as indicated in the upper right most cell of Table \ref{tab:literature}. After recovering the linear model, one could estimate any desired causal estimand. The results rely on correctly choosing the statistic that governs the linearity, or more generally reduces the task to heuristic \textit{feature engineering} for generalized linear models \cite{chin2019regression}. One could plausibly extend the function class beyond linear and instead apply machine learning techniques to estimate the function that generalizes beyond linear regression. While we do not state a variance bound in Table \ref{tab:literature}, one would expect that when $p$ is sufficiently large, the variance would scale with $O(1/n)$, with some dependence on the complexity of the model class; when $p$ is very small, the variance may scale with $O(1/pn)$, as the regression still requires sufficient variance of covariates represented in the data, i.e. a sufficient spread of number of neighbors treated. \cite{WagerLi2020} considers nonparametric models yet focuses on estimating a locally linearized estimand. A drawback of these approaches is that they assume \textit{anonymous interference}, which imposes a symmetry in the potential outcomes such that the causal network effect of treating any of an individual's neighbors is equal regardless of which neighbor is chosen. Additionally they assume that the function that we are learning is shared across units, or at least units of the same known covariate type, which can be limiting.

While we have primarily focused on summarizing the literature for unbiased estimators, there has also been some work considering how structure in the potential outcomes model can be exploited to reduce bias of standard estimators. \cite{johari2022experimental,li2022interference, spang2021unbiased} use application domain specific structure in two-sided marketplaces and network routing to compare the bias of the difference in means estimators under different experimental designs. \cite{bright2022reducing} uses structure in ridesharing platforms to propose a new estimator with reduced bias. Additionally there has been some limited empirical work studying bias under model misspecification \cite{EcklesKarrerUgander17}.

Complementary to the literature on randomized experiments, there has been a growing literature considering causal inference over observational studies in the presence of network interference. However, the limitations similarly mirror the above stated concerns. A majority of the literature assumes partial interference \cite{TchetgenVanderWeele12, perez2014assessing, liu2016inverse, DiTraglia2020, vazquez-bare2022}, i.e. the multiple networks setting, which then enables causal inference of a variety of different estimands. In particular, it is commonly assumed that the different groups in the network are sampled from some underlying distribution, and the statistical guarantees are given with respect to the number of groups going to infinity. Alternately, other works assume that the potential outcomes only depend on a simple and known statistic of the neighborhood treatment, most commonly the number or fraction of treated \cite{verbitsky2012causal, chin2019regression, ogburn2017causal}. Either the neighborhood statistic only takes finite values, or assumptions are imposed on the functional form of the potential outcomes which imply anonymous interference and reduce inference to a regression task or maximum likelihood calculation. \cite{forastiere2021identification} considers a general exposure mapping model alongside an inverse propensity weighted estimator, but the estimator has high variance when the exposure mapping is complex.

In contrast to the majority of the mentioned literature, we neither rely on cluster randomized designs nor anonymous interference assumptions. We instead propose a potential outcomes model with \textit{low order interactions} structure, where the degree of interactions $\beta$ characterizes the difficulty of inference, also studied in \cite{YuCortezEichhorn22}. For $\beta=1$, this model is equivalent to heterogeneous additive network effects in \cite{YuAiroldiBorgsChayes22}, which can be derived from the joint assumptions of additivity of main effects and interference in \cite{SussmanAiroldi17}. When $\beta$ is larger than the network degree, then this assumption is equivalent to an arbitrary neighborhood interference, providing a nested hierarchy of models that encompass both the simple linear model class, the fully general model class, as well as model classes of varying complexity in between. \cite{YuAiroldiBorgsChayes22} and \cite{YuCortezEichhorn22}, which consider similar models as we do, focus on the setting when the underlying network is fully unknown, and yet there is richer available information either in the form of historical data \cite{YuAiroldiBorgsChayes22} or multiple measurements over the course of a multi-stage experiment \cite{YuCortezEichhorn22}, neither of which we assume in this work. Our estimator also has close connections to the pseudoinverse estimator in \cite{swaminathan2017off}\mcedit{, the Riesz estimator in \cite{harshaw2022design},} and the estimator in \cite{WagerLi2020}, which is discussed in \Cref{sec:estimator}.

\section{Model}
\label{sec:model}

\subsection{Causal Network}

Let $[n] := \{ 1, \hdots, n \}$ denote the underlying population of $n$ individuals. We model the network effects in the population as a directed graph over the individuals with edge set $E \subseteq [n] \times [n]$. An edge $(j,i) \in E$ signifies that the treatment assignment of individual $j$ affects the outcome of individual $i$. As an individual's own treatment is likely to affect their outcome, we expect self-loops in this graph. 
In much of the paper, we are concerned with neighborhood interference effects and we use $\cN_i := \{ j \in [n]: (j,i) \in E \}$ to denote the in-neighborhood of an individual $i$. Note that this definition allows $i \in \cN_i$. Many of our variance bounds reference the network degree. We let $\din$ denote the maximum in-degree of any individual, $\dout$ denote the maximum out-degree, and $\dmax := \max \{ \din, \dout\}$.

\subsection{Potential Outcomes Model}

To each individual $i$, we associate a treatment assignment $z_i \in \{0,1\}$, where we interpret $z_i=1$ as an assignment to the treatment group and $z_i=0$ as an assignment to the control group. We collect all treatment assignments into the vector $\bz$. We use $Y_i$ to denote the outcome of individual $i$. As our setting assumes network interference, the classical SUTVA assumption is violated. That is, $Y_i$ is not a function only of $z_i$. Rather, $Y_i : \{0,1\}^n \to \mathbb{R}$ may be a function of $\bz$, the treatment assignments of the entire population. Since each treatment variable $z_i$ is binary, we can indicate an exact treatment assignment as a product of $z_i$ (for treated individuals) and $(1-z_i)$ (for untreated individuals) factors. As such, we can express a general potential outcome function $Y_i$ as a polynomial in $\bz$, 
\[
    Y_i(\bz) = \sum_{\cT \subseteq [n]} a_{i,\cT} \prod_{j \in \cT} z_j \prod_{k \in [n] \setminus \cT} (1-z_k),
\] 
where $a_{i,\cT}$ is individual $i$'s outcome when their set of treated neighbors is exactly $\cT$.
Via a change of basis, we can equivalently express $Y_i(\bz)$ as a polynomial in the ``treated subsets''
\begin{equation} \label{eq:pom_general}
    Y_i(\bz) = \sum_{\cS' \subseteq [n]} c_{i,\cS'} \prod_{j' \in \cS'} z_{j'},
\end{equation}
where $c_{i,\cS'}$ represents the additive effect on individual $i$'s outcome that they receive when the entirety of subset $\cS'$ (perhaps among other individuals) is treated. Note that $c_{i,\varnothing}$ represents the \textit{baseline effect}, the component of $i$'s outcome that is independent of the treatment assignments.

So far, the potential outcomes model described in (\ref{eq:pom_general}) is completely general. However, it is parameterized by $2^n$ coefficients $\{c_{i,\cS'} \}$, which makes it untenable in most settings. To combat this, we impose some structural assumptions on these coefficients. First, we observe that the populations of interest can be quite large (e.g. the population of an entire country), and their influence networks may have high diameter. Throughout most of the paper, we assume that individuals' outcomes are influenced only by their immediate in-neighborhood. 

\begin{assumption}[Neighborhood Interference]
\label{assp:neighborhood_interference}
    $Y_i(\bz)$ only depends on the treatment of individuals in $\cN_i$. Equivalently, $Y_i(\bz) = Y_i(\bz')$ for any $\bz$ and $\bz'$ such that $z_j = z_j'$ for all $j \in \cN_i$. In our notation $c_{i,\cS'} = 0$ for any $\cS' \not\subseteq \cN_i$. 
\end{assumption}

Next, we note that the degree of each $Y_i(\bz)$ can (under the neighborhood interference assumption) be as large as $\din$. In such a model, one's outcome may be differently influenced by a treated coalition of any size in their neighborhood. Contrast this with a simpler linear potential outcomes model, wherein an individual's outcome receives only an independent additive effect from each of their treated neighbors. This illustrates that the degree of the polynomial may serve as a proxy for its complexity. In this work we consider the scenario where the polynomial degree may be significantly smaller than $\din$.

\begin{assumption}[Low Polynomial Degree]
\label{assp:low_degree}
    Each potential outcome function $Y_i(\bz)$ has degree at most $\beta$. In our notation, $c_{i,\cS'} = 0$ whenever $|\cS'| > \beta$. Along with \cref{assp:neighborhood_interference}, it follows that the potential outcomes function $Y_i(\bz)$ from (\ref{eq:pom_general}) can be expressed in the form,
    \begin{equation} \label{eq:pom_ld_neighbor}
        Y_i(\bz) = \sum_{\meedit{\cS' \in \cS_i^\beta}} c_{i,\cS'} \prod_{j \in \cS'} z_j,
    \end{equation}
    \meedit{where we define $\cS_i^\beta := \{ \cS \subseteq \cN_i \colon |\cS| \leq \beta \}$.}
\end{assumption}

We remark that while we use the formal mathematical term of ``low polynomial degree'', since this describes a function over a vector of binary variables, a low polynomial degree constraint is equivalent to a constraint on the \textit{order of interactions} amongst the treatments of neighbors. In the simplest setting when $\beta=1$, this is equivalent to a model in which the networks effects are additive across treated neighbors, strictly generalizing beyond all linear models that have been widely used in applied settings.

We use the notation in \eqref{eq:pom_ld_neighbor} to express the potential outcomes model for the remainder of the paper. 
If $\beta \geq \din$, note that $\beta$ could be completely removed from the definition of $Y_i$ in Equation \ref{eq:pom_ld_neighbor}, reducing to the arbitrary neighborhood interference setting. However, we turn our attention to settings where $\beta$ might be much smaller than the degree of the graph ($\beta \ll \din$) and we can assume that only low-order interactions within neighborhoods have an effect on an individual. As noted above, taking $\beta = 1$ corresponds to the heterogeneous linear outcomes model in \citep{YuAiroldiBorgsChayes22}. We include further examples to help in understanding this low polynomial degree assumption in Section~\ref{subsec:examples}.

Many of our variance bounds utilize an upper bound on the treatment effects for each individual. We define $Y_{\max}$ such that
\begin{align}\label{eq:Y_max}
    Y_{\max} := \max_{i \in [n]} \sum_{\cS' \subseteq \cS_i^\beta} |c_{i,\cS'}|.
\end{align}
It follows that $|Y_i(\bz)| \leq Y_{\max}$ for any treatment vector $\bz$. 

\begin{remark}
    We emphasize that the model in \eqref{eq:pom_ld_neighbor} captures fully general neighborhood interference when $\beta = d$. 
    \mcreplace{When $\beta < d$, our model may be viewed as \textit{semiparametric}. We note that the number of unknown parameters in this model is $\sum_{i \in [n]} \sum_{k = 0}^{\beta} \binom{|\cN_i|}{k}$, which scales as $n \din^{\beta}$. Although this quantity is finite for a fixed $n$, since we allow these coefficients to vary for each $i$, we can view the coefficients $c_{i,\cS}$ as being generated by an unknown family of functions $f_i:2^{\cN_i}\to\Reals$ in some function space $F$ such that for any subset $\cS\subseteq\cN_i$ we have $f_i(\cS)=c_{i,\cS}$. Without imposing assumptions on $F$, we can see that to specify the coefficients requires specifying $\{f_i\}_{i \in [n]}$, which belongs to an infinite-dimensional space of functions. On the other hand, if each $f_i$ was known, then for fixed $n$ the parameter space of the model is indeed finite. As our model consists of both finite and infinite components, we may view it as semiparametric.}{Even when $\beta < d$, the number of parameters in the model grows with $n$. This results in the total number of observations always being smaller than the number of parameters in the model, so using simple regression-style estimators to identify the model is impossible.}
\end{remark}

\begin{remark}
    We can consider the order of interactions or polynomial degree $\beta$ as a way to measure the complexity of the model class. Our subsequent results suggest that $\beta$ is meaningful as it also captures a notion of statistical complexity or difficulty of estimation. This is evidenced by the variance upper bound and minimax lower bound results in \Cref{sec:properties}, where we see that the smaller $\beta$ is relative to the degree of the graph, the smaller the variance incurred and the larger $\beta$ is with respect to the graph degree, the higher the variance incurred. In some sense, the ``lower'' the degree of the model, i.e. the more structure imposed, the ``easier'' it is to estimate the total treatment effect. On the flip side, the ``higher'' the degree of the model, i.e. the closer it is to being fully general, the ``harder'' it is to estimate.
\end{remark}


\begin{remark}
\cyreplace{Via a change of basis, we write potential outcomes models as linear combinations of treated subsets, i.e. an individual's outcome under neighborhood interference is a linear combination of fully treated subsets of their in-neighborhood. Function sparsity in this context means that there are few subsets of an individual's neighborhood that have an effect on the individual's outcome when treated. This is a natural way to interpret ``low order interactions'' structure: if $\beta$ is small relative to the degree $d$ of the graph, then the potential outcomes function is sparse in this basis.}{
Assumption \ref{assp:low_degree} implies that the model exhibits a particular type of sparsity in the coefficients with respect to the monomial basis, in which the coefficients corresponding to sets larger than size $\beta$ are zero. In our setting when the treatments are budgeted, i.e. $p$ is small, these coefficients precisely correspond to effects that are observable in a Bernoulli randomized design, i.e. the probability for observing or measuring the coefficient corresponding to set $\cS$ is $p^{|\cS|}$. As such, there is an direct connection between this hierarchy of models as parameterized by $\beta$ and the ability to measure and estimate the total treatment effect, which is formalized by our subsequent analysis.
}
\end{remark}


\subsubsection{Examples of Low Degree Interaction Models}\label{subsec:examples}

We provide a few examples to illustrate when the polynomial degree  of the potential outcomes model may naturally be smaller than the total neighborhood size (i.e. $\beta < d$). 

\begin{example}
Consider a potential outcomes model that exhibits the joint assumptions of additivity of main effects and interference effects as defined in \cite{SussmanAiroldi17}. This imposes that the potential outcomes satisfy
\[
    Y_i(\bz) = Y_i(\mathbf{0}) + \big(Y_i(z_i \, \mathbf{e}_i) - Y_i(\mathbf{0})\big) + \sum_{k\in [n]} \big(Y_i(z_k \,\mathbf{e}_k ) - Y_i(\mathbf{0}) \big),
\]
where $\mathbf{0} \in \mathbb{R}^n$ is a vector of all zeros and $\mathbf{e}_j\in\{0,1\}^n$ is a standard basis vector.
As discussed in \cite{YuAiroldiBorgsChayes22}, this assumption implies a low degree interaction model with $\beta = 1$, which they refer to as heterogeneous additive network effects.
\end{example}

\begin{example}
Consider a hypothesized setting where each individual's neighborhood can be divided into smaller \textit{sub-communities}. For example, one's close contacts may include their immediate family, their close friends, their work colleagues, etc. Within each of these sub-communities, there may be non-trivial (higher order) interactions between the treatments of its members. However, it is reasonable to assume that the cumulative effects of each sub-community have an additive effect on the individual's outcome. That is, there are no non-trivial interactions resulting from the treatment of individuals across different sub-communities. In this case, a natural choice for $\beta$ is the size of the largest sub-community, which could be significantly smaller than the size of the largest neighborhood.
\end{example}

\begin{example}
Suppose a social networking platform is testing a new ``hangout room'' feature where groups of up to five people can interact in a novel environment on the platform. One can posit that a natural choice for $\beta$ in this setting is $5$, as the change in any individual's usage on the platform can be attributed to how they utilize this new feature, which in turn is a function of it being introduced to various subsets of up to $5$ users in that individual's neighborhood.
\end{example}

\begin{example}
Consider a setting where an individual's outcome is a low degree polynomial in some auxiliary quantity which is itself linear in the treatment assignments of their neighborhood. For example, we might have
\[
    Y_i(\bz)=c_{i,\emptyset} + \sum_{j\in\cN_i}c_{ij}z_j + \Big(\sum_{j\in\cN_i}c_{ij}z_j\Big)^2 + \cdots + \Big(\sum_{j\in\cN_i}c_{ij}z_j\Big)^\beta.
\]
A similar setting is explored in our simulated experiments in Section~\ref{sec:experiments}.
\end{example}

\begin{example}
Consider an example in which network effects only arise from pairwise edge interactions and triangular interactions, i.e. individual $i$'s outcome consists of a sum of its baseline outcome, it's own direct effect $c_{i,i}$, pairwise edge effects $c_{i,j}$ for $j \in \cN_i$, and triangle effects $c_{i,\{j,j'\}}$ for $j,j' \in \cN_i$ such that there is also an edge between $j$ and $j'$, indicating that the three individuals are mutual connections. For such a model, $\beta$ would be 2 due to the triangular interactions.
\end{example}


\begin{remark}
We emphasize that any potential outcomes model that takes a binary treatment vector $\bz$ as its input can be written as a polynomial in $\bz$, taking the form in Equation \eqref{eq:pom_ld_neighbor} for general $\beta$. However, the low degree assumption, i.e. that $\beta \ll d$, will not generally admit high-degree models. For example, threshold models and saturation models both generally require the degree of $Y_i(\bz)$ to be $|\cN_i|$. In a threshold potential outcomes model, an individual experiences network effects once a particular threshold of their neighbors are treated \citep{GuiXuBhasinHan15}. An example of this type of model is given by
\[
    Y_i(\bz) = \Ind\left(\textstyle{\sum_{j\in \cN_i}z_j \geq \theta}\right)\cdot \sum_{\cS\subseteq \cN_i}c_{i,\cS}\prod_{j \in \cS}z_j,
\]
where $0 \leq \theta \leq|\cN_i|.$ Saturation models allow for network or peer effects to increase up to a particular saturation level, such as
\[
    Y_i(\bz) = \min \Big(\theta \;,\sum_{\cS\subseteq \cN_i}c_{i,\cS}\prod_{j \in \cS}z_j\Big),
\]
where $\theta$ is some maximum saturation threshold. In this model, an individual receives additive effects from each subset of treated neighbors until a certain threshold effect is reached, after which the networks effects have ``saturated'' and the treatment of additional neighbors contributes no additional effect.
\end{remark}

\subsection{Causal Estimand and Randomized Design}

Throughout most of the paper, we concern ourselves with estimating the total treatment effect ($\TTE$). This quantifies the difference between the average of individual's outcomes when the entire population is treated versus the average of individual's outcomes when the entire population is untreated:
\begin{equation} \label{eq:TTE_Ydefn}
    \TTE := \frac{1}{n} \sum_{i=1}^n \big( Y_i({\bf 1}) - Y_i({\bf 0}) \big),
\end{equation}
where $\mathbf{1}$ represents the all ones vector and $\mathbf{0}$ represents the zero vector. Plugging in our parameterization from equation (\ref{eq:pom_ld_neighbor}), we may re-express the total treatment effect as
\begin{equation} \label{eq:TTE_cdefn}
    \TTE = \frac{1}{n}\sum_{i=1}^{n} \sum_{\substack{\cS' \in \cS_i^\beta \\ \cS' \ne \varnothing}} c_{i,\cS'}.
\end{equation}

Since exposing individuals to treatment can have a deleterious and irreversible effect on their outcomes, we wish to estimate the total treatment effect after treating a small random subset of the population. We focus on a \textit{non-uniform} Bernoulli design, wherein each individual $i$ is independently assigned treatment with probability $p_i \in [p, 1-p]$ for $p > 0$. That is, each $z_i \sim \textrm{Bernoulli}(p_i)$. Such a randomized design is both straightforward to implement and to understand. Furthermore, many existing experimentation platforms are already designed for Bernoulli randomization, making it easy to collect new data or to re-analyze existing data and adjust for network interference, rather than requiring a complete overhaul of the existing experimentation platform to allow for more complex randomization schemes.

\section{Estimator} \label{sec:estimator}


    \mereplace{In this section, we introduce the class of estimators that we study, which are a special case of the \textit{pseudoinverse estimator} first introduced by Swaminathan et. al.~\citep{swaminathan2017off}. This class is fairly robust; it can be used for a wide variety of causal estimands and experimental designs, which we discuss more in Section \#. We focus on estimating the total treatment effect. We also limit our discussion in this section to non-uniform Bernoulli design, both because it is among the most widely-used and accessible designs and because it admits clean calculations.}
    {In this section, we introduce the estimator that will serve as our central focus through the rest of the paper: the Structured Neighborhood Interference Polynomial Estimator (SNIPE). While we restrict our attention to non-uniform Bernoulli design, the techniques to derive this estimator can be generalized to a wide variety of causal estimands and experimental designs. In fact, our estimator is connected to the Horvitz-Thompson estimator and turns out to be a special case of both the \textit{pseudoinverse estimator} first introduced by Swaminathan et. al.~\citep{swaminathan2017off} and more recently the \textit{Riesz estimator} of Harshaw et. al.~\citep{harshaw2022design}. We discuss these connections in~\Cref{ssec:est_connections}.} 
    
    To provide intuition, we first derive the estimator in the $\beta = 1$ case with the linear heterogenous potential outcomes model \citep{YuAiroldiBorgsChayes22} via a connection to ordinary least squares. Then, we show how it can be extended to the more general polynomial setting. Our main result (Section~\ref{sec:properties}) establishes both the unbiasedness of this family of estimators and a bound on its variance under Bernoulli design.
    
    \meedit{Recalling that $\cS_i^\beta := \{ \cS \subseteq \cN_i \colon |\cS| \leq \beta \}$, the vector $\bc_i$ collects the parameters $\{ c_{i,\cS}\}_{\cS \in \cS_i^\beta}$ in some canonical ordering. We'll assume throughout that $\varnothing \in \cS_i^\beta$ is always first in this ordering and otherwise index the entries of these vectors by their corresponding set $\cS$. As an example, when $\beta = 1$ and $\cN_i = \{ j_1, \hdots, j_{d_i} \}$, we may have $\bc_i = \big[ c_{i,\emptyset} \quad c_{i,\{j_1\}} \quad \hdots \quad c_{i,\{j_{d_i}\}} \big]^\intercal$, where $d_i$ is the in-degree of unit $i$. In a similar manner, the \textit{treated subsets vector} $\tbz_i$ collects the indicators $\big\{ \prod_{j \in \cS} z_j \big\}_{\cS \in \cS_i^\beta}$ in the same ordering. In our $\beta = 1$ example, $\tbz_i = \big[ 1 \quad z_{j_1} \quad \hdots \quad z_{j_{d_i}} \big]^\intercal$. Using this notation, we may express
    \begin{equation} \label{eq:tte_as_ip}
        \bY_i(\bz) = \Big\langle \bc_i, \tbz_i \Big\rangle, 
        \hspace{60pt}
        \TTE = \tfrac{1}{n} \sum_{i=1}^{n} \Big\langle (\mathbf{1}_{|S_i|} - \mathbf{e}_1), \bc_i \Big\rangle,
    \end{equation}
    where the inner product argument in the second equation is the $|S_i|$-length vector with first entry 0 and remaining entries 1.
    }
    
\subsection{Building Intuition in the Linear Setting \texorpdfstring{($\boldsymbol\beta$ = 1)}{(beta = 1)}} \label{subsec:pi_linear}
    
    To motivate our estimator, we consider the linear heterogeneous potential outcomes model ($\beta = 1$) under non-uniform Bernoulli randomized design. \meedit{Using the TTE expression from~\eqref{eq:tte_as_ip}, we can recast the problem of estimating the total treatment effect into the problem of estimating the parameter vector $\bc_i$: by linearity of expectation an unbiased estimator of $\bc_i$ will give rise to an unbiased estimator of $\TTE$. 
    
    As a thought experiment toward estimating $\bc_i$, imagine that we can perform $M$ independent replications of our randomized experiment. In each replication $m \in [M]$, we observe the treated subsets vector $(\tbz_i)^m$, obtained from our Bernoulli randomized design, and the outcome $(Y_i)^m$. We visualize,}
    \[
        \underbrace{\begin{bmatrix}
            (Y_i)^1 \\ (Y_i)^2 \\ \vdots \\ (Y_i)^M
        \end{bmatrix}}_{\bY_i \in \mathbb{R}^{M}}
        = 
        \underbrace{\begin{bmatrix}
            1 & z_{j_1}^{1} & \hdots & z_{j_{d_i}}^{1} \\
            1 & z_{j_1}^{2} & \hdots & z_{j_{d_i}}^{2} \\
            \vdots & \vdots & \ddots & \vdots \\
            1 & z_{j_1}^{M} & \hdots & z_{j_{d_i}}^{M} \\
        \end{bmatrix}}_{\bX_i \in \mathbb{R}^{M \times (d_i+1)}}
        \underbrace{\begin{bmatrix}
            c_{i,\varnothing} \\ c_{ij_1} \\ \vdots \\ c_{ij_{d_i}}
        \end{bmatrix}}_{\bc_i \in \mathbb{R}^{(d_i+1)}}.
    \]
    To minimize the sum of squared deviations from the true coefficients, we use ordinary least squares, computing
    \[
        \hat{\bc}_i = \big( \bX_i^\intercal \bX_i \big)^{-1} \bX_i^\intercal \bY_i. 
    \]
    \meedit{We consider the limit of this estimate as $M \to \infty$. The consistency of the least squares estimator ensures that $\hat{\bc}_i \overset{P}{\longrightarrow} \bc_i$. By the law of large numbers, we have
    \[
        \bX_i^\intercal \bX_i = \sum_{m=1}^{M}
        \begin{bmatrix}
            1 & z_{j_1}^m & z_{j_2}^m & \hdots & z_{j_{d_i}}^m \\ 
            z_{j_1}^m & \big(z_{j_1}^m\big)^2 & z_{j_1}^m z_{j_2}^m & \hdots & z_{j_1}^m z_{j_{d_i}}^m \\
            z_{j_2}^m & z_{j_1}^m z_{j_2}^m & \big(z_{j_2}^m\big)^2 & \ddots & z_{j_2}^m z_{j_{d_i}}^m \\
            \vdots & \ddots & \ddots & \ddots & \vdots \\
            z_{j_{d_i}}^m & z_{j_1}^m z_{j_{d_i}}^m & z_{j_2}^m z_{j_{d_i}}^m & \hdots & \big(z_{j_{d_i}}^m\big)^2
        \end{bmatrix}
        \overset{a.s.}{\longrightarrow} M \cdot \mathbb{E} \big[ \tbz_i\tbz_i^\intercal  \big].
    \]
    Similarly, the vector $\bX_i^\intercal \bY_i \overset{a.s.}{\longrightarrow} M \cdot \mathbb{E} \big[ \tbz_i \cdot Y_i(\bz) \big]$. Assuming the invertibility of the matrix $\mathbb{E} \big[ \tbz_i\tbz_i^\intercal  \big]$ (we show this explicitly for Bernoulli design below), we obtain in the limit,}
    \[
        \bc_i = \big( M \cdot \mathbb{E} \big[ \tbz_i\tbz_i^\intercal \big] \big)^{-1} M \cdot \mathbb{E} \big[ \tbz_i \cdot Y_i(\bz) \big] 
        = \mathbb{E} \big[ \tbz_i\tbz_i^\intercal \big]^{-1} \mathbb{E} \big[ \tbz_i \cdot Y_i(\bz) \big].
    \]    
    Now, we return from our thought experiment to the actual experimental setting wherein a single instantiation $(\tbz_i, Y_i)$ is realized. We consider the estimator
    \begin{equation} \label{eq:chat_est}
        \widehat{\bc_i} := \mathbb{E} \big[ \tbz_i\tbz_i^\intercal \big]^{-1} \; \tbz_i \; Y_i.
    \end{equation}
    Note that the matrix $\mathbb{E} \big[ \tbz_i\tbz_i^\intercal \big]^{-1}$ depends only on our experimental design and thus can be utilized by our estimator. The unbiasedness of this estimator follows from our above computation; applying linearity, 
    \[
        \mathbb{E} \big[ \widehat{\bc_i} \big] = \mathbb{E} \big[ \tbz_i\tbz_i^\intercal \big]^{-1} \; \mathbb{E} \big[ \tbz_i \; Y_i \big] = \bc_i.
    \]
    Applying linearity once more, we may obtain the unbiased estimator for the total treatment effect,
    \begin{equation}
    \label{eqn:TTE_hat implicit}
        \widehat{\TTE} 
        = \tfrac{1}{n} \sum_{i=1}^{n} \big\langle (\mathbf{1}_{|\cS_i^\beta|} - \mathbf{e}_1) , \widehat{\bc_i} \big\rangle
        = \tfrac{1}{n} \sum_{i=1}^{n} Y_i \Big\langle \mathbb{E} \big[ \tbz_i \tbz_i^\intercal \big]^{-1} (\mathbf{1}_{|\cS_i^\beta|} - \mathbf{e}_1),  \tbz_i \Big\rangle.
    \end{equation}
    
    For the specific setting of non-uniform Bernoulli design with $\beta = 1$, one can use the fact that $\mathbb{E}[z_{j_k}] = \mathbb{E}[z_{j_k}^2] = p_{j_k}$ and $\mathbb{E}[z_{j_k} z_{j_{k'}}] = p_{j_k} p_{j_{k'}}$ for $k \ne k'$ \meedit{to compute,
    \[
        \mathbb{E} \big[ \tbz_i\tbz_i^\intercal  \big] = 
        \begin{bmatrix}
            1 & p_{j_1} & p_{j_2} & \hdots & p_{j_{d_i}} \\ 
            p_{j_1} & p_{j_1} & p_{j_1}p_{j_2} & \hdots & p_{j_1}p_{j_{d_i}} \\
            p_{j_2} & p_{j_1}p_{j_2} & p_{j_2} & \ddots & p_{j_2}p_{j_{d_i}} \\
            \vdots & \ddots & \ddots & \ddots & \vdots \\
            p_{j_{d_i}} & p_{j_1}p_{j_{d_i}} & p_{j_2}p_{j_{d_i}} & \hdots & p_{j_{d_i}}
        \end{bmatrix},
    \]
    which we can invert to obtain
    \[
        \E\Big[\tbz_i\tbz_i^\intercal\Big]^{-1} = 
        \begin{bmatrix}
            1 + \sum_k \tfrac{p_{j_k}}{1-p_{j_k}} & -\tfrac{1}{1-p_{j_1}} & \cdots & \cdots & -\tfrac{1}{1-p_{j_{d_i}}} \\
            -\tfrac{1}{1-p_{j_1}} & \tfrac{1}{p_{j_1}(1-p_{j_1})} & 0 & \cdots & 0 \\
            \vdots & 0 & \tfrac{1}{p_{j_2}(1-p_{j_2})} & 0 & \vdots \\
            \vdots & \vdots & 0 & \ddots & 0 \\
            -\tfrac{1}{1-p_{j_{d_i}}} & 0 & \cdots & 0 & \tfrac{1}{p_{j_{d_i}}(1-p_{j_{d_i}})}
        \end{bmatrix}.
    \]
    We calculate
    \[
        \mathbb{E} \big[ \tbz_i \tbz_i^\intercal \big]^{-1} (\mathbf{1}_{|\cS_i^\beta|} - \mathbf{e}_1) = \begin{bmatrix}
            - \sum\limits_{j \in \cN_i} \tfrac{1}{1-p_j} & \tfrac{1}{p_{j_1}(1-p_{j_1})} & \dots & \tfrac{1}{p_{j_{d_i}}(1-p_{j_{d_i}})} 
        \end{bmatrix}^\intercal.
    \]
    Plugging into equation~\eqref{eqn:TTE_hat implicit}, we obtain the explicit form for our estimator:
    \[
        \widehat{\TTE}_{\text{SNIPE}(1)} := \tfrac{1}{n} \sum_{i=1}^{n} Y_i \sum_{j \in \cN_i} \frac{z_j-p_j}{p_j(1-p_j)}\mcreplace{.}{,}
    \]  
    where SNIPE$(1)$ refers to the fact that this is the SNIPE estimator with $\beta=1$.} 

\subsection{SNIPE in the General Polynomial Setting}
\label{subsec:pi_polynomial}

\mereplace{
The key intuition for the estimator introduced in the previous subsection is that the potential outcomes $Y_i(\bz)$ are linear functionals of the covariate vectors $\tilde{\bz}_i$. In the general polynomial setting, we can still express the potential outcomes as a linear function of an augmented covariate vector, $\tilde{\bz}_i$, which is a binary vector of length $\sum_{k=0}^{\min(\beta,|\cN_i|)} \binom{|\cN_i|}{k}$, i.e. the number of subsets of $|\cN_i|$ of size at most $\beta$. For simplicity of notation, we index elements in the vector $\tilde{\bz}_i$ with the corresponding subset of $\cN_i$. The element in $\tilde{\bz}_i$ corresponding to subset $\cS \subseteq \cN_i$ takes the value $\prod_{j \in \cS} z_j$, i.e. indicating if $\cS$ is fully treated. By construction, we can verify that $Y_i(\bz)$ is a linear functional of $\tilde{\bz}_i$ with coefficients given by $\{c_{i,\cS}\}$ for $\cS \subseteq \cN_i$ such that $|\cS| \leq \beta$.

As long the matrix of expectations $\mathbb{E} \big[ \tilde{\bz}_i \tilde{\bz}_i^\intercal \big]$ is invertible, the pseudoinverse estimator introduced in the previous subsection is well-defined. Then, we could invert $\mathbb{E} \big[ \tilde{\bz}_i\tilde{\bz}_i^\intercal \big]^{-1}$ and plug it into a similar formula as \eqref{eqn:TTE_hat implicit} to derive the estimator. The results in \citep{swaminathan2017off} suggest that if the randomized design is such that every possible treatment vector has nonzero mass, the matrix $\E[\tbz_i\tbz_i^\intercal]$ will likely be invertible. On the other hand, as the naming of this estimator class suggests, if the matrix $\E[\tbz_i\tbz_i^\intercal]$ is not invertible, we can replace the true inverse in the estimator with the Moore-Penrose pseudoinverse, $\E[\tbz_i\tbz_i^\intercal]^\dagger$.

\begin{remark} \label{rem:pseudoinv}
Consider an estimand that is a linear function of the model parameters as expressed by $\frac{1}{n} \sum_{i=1}^n \langle \bt_i, \bc_i \rangle$ for some vectors $\bt_i$, e.g. for the total treatment effect $\bt_i$ is the all-ones vector except for the first element taking value zero. In this case, the pseudoinverse estimator will be unbiased as long as the vector $\bt_i$ belong to the column space of $\mathbb{E} \big[ \tilde{\bz}_i \tilde{\bz}_i^\intercal \big]$ for each $i$, such that $\mathbb{E} \big[ \tilde{\bz}_i \tilde{\bz}_i^\intercal \big] \mathbb{E} \big[ \tilde{\bz}_i \tilde{\bz}_i^\intercal \big]^{\dagger} \bt_i = \bt_i$.
\end{remark}

As it is algebraically tedious to show that the matrix under Bernoulli design in the general $\beta$ setting is invertible, we instead derive the estimator by solving the unbiasedness conditions directly using an application of M\"obius Inversion, as detailed in \cref{sec:main_proof}. The unbiasedness argument also proves that the vector $[0,1,1,1 \dots 1]$ lies in the column space of $\E[\tilde{\bz}_i \tilde{\bz}_i^\intercal]$, which is also sufficient for the unbiasedness of the pseudoinverse estimator. An explicit proof of this is in \cref{subsec:colspace}. The calculations result in the following estimator 

}
{

For larger $\beta$, we can use the same least squares approach to obtain an unbiased estimate of the coefficient vector $\bc_i$, and then plug this into equation~\eqref{eq:tte_as_ip} to obtain $\widehat{\TTE}$. The outcomes $Y_i(\bz)$ remain linear functions in $\tilde{\bz}_i$ --- albeit for a significantly longer $\tilde{\bz}_i$ containing $|\cS_i^\beta| = \sum_{k=0}^{\min(\beta,|\cN_i|)} \binom{|\cN_i|}{k}$ entries --- so the same convergence results apply. Suppose we index the entries of $\mathbb{E} \big[ \tbz_i \tbz_i^\intercal \big]$ by the sets $\cS$ and $\cT$ corresponding to the matrix row and column. By the independence and marginal treatment probabilities of non-uniform Bernoulli design, we see that
\[
    \Big(\mathbb{E} \big[ \tbz_i \tbz_i^\intercal \big]\Big)_{\cS,\cT} = \prod_{j \in \cS \cup \cT} p_j.
\]
Working with this matrix is significantly more tedious, so we relegate the details to Appendix~\ref{sec:explicit_est}. There, we show that $\mathbb{E} \big[ \tbz_i \tbz_i^\intercal \big]$ is invertible by giving an explicit formula for the entries of its inverse. Then, we plug these entries into equation~\eqref{eqn:TTE_hat implicit} to derive the following explicit formula for the estimator:
}

\begin{equation} \label{eqn:TTE_PI}
    \widehat{\TTE}\mcedit{_{\text{SNIPE}(\beta)}}
    := \frac{1}{n} \sum_{i=1}^n Y_i \sum_{\cS \in \cS_i^\beta} g(\cS) \prod_{j \in \cS} \tfrac{z_j - p_j}{p_j(1-p_j)},
\end{equation}
where we define the coefficient function $g: 2^{[n]} \to \Reals$ such that
\begin{equation*}
    g(\cS) = \prod_{s \in \cS} (1-p_s) - \prod_{s \in \cS} (-p_s)
\end{equation*}
for each $\cS \subseteq [n]$ and $g(\emptyset)=0$. \mcedit{When it is clear from context that $ \widehat{\TTE}$ refers to the SNIPE estimator, we suppress the subscript for cleaner notation.}

\meedit{
    \begin{remark}
        We pause here to again emphasize that our technique gives us an unbiased estimate of the \textit{coefficient vector} $\bc_i$ for each individual $i$. From here, we leverage the linearity of expectation to obtain an unbiased estimator for the $\TTE$, which is a linear function of these $\bc_i$ coefficients. This same strategy can be applied to develop estimators for any causal effect that is linear in the $\bc_{i,\cS}$ coefficients. We highlight some other potential estimands and the explicit form of their estimators in \Cref{sec:proof_other_est}. The techniques that we discuss in \Cref{sec:properties} can be used to establish further properties of these estimators. 
    \end{remark}
}

This estimator can be evaluated in $O(n \din^{\beta})$ time and only utilizes structural information about the graph (not any influence coefficients $c_{i,\cS}$). Structurally, the estimator takes the form of a weighted average of the outcomes $Y_i$ of each individual $i$, where the weights themselves are functions of the treatment assignments of all members $j$ of the in-neighborhood $\cN_i$. To make use of the low-order interference assumption, the estimator separately scales the effect of treatment of each sufficiently small subset of $\cN_i$ using the scaling function $g(\cS)$. The definition of this $g(\cS)$ ensures the unbiasedness of the estimator.

In the special case of a uniform treatment probability $p_i = p$ across all nodes, we can simplify this estimator to show that it is only a function of the number of treated individuals in $i$'s neighborhood and not the identities. Let $\cT = \{j: z_j = 1\}$ denote the set of treated units. We can rewrite the estimator as follows,
\begin{align} \label{est:uniform_prob}
\widehat{\TTE} &= \tfrac{1}{n} \sum_{i=1}^n Y_i \sum_{\cS \in \cS_i^\beta} \Big( \prod_{j \in \cS} \tfrac{z_j-p}{p} - \prod_{j \in \cS} \tfrac{z_j-p}{p-1} \Big) \nn \\
&= \tfrac{1}{n} \sum_{i=1}^n Y_i \sum_{k=0}^{\min(\beta,|\cN_i\cap\cT|)} \sum_{\substack{\cA \subseteq \cN_i\cap\cT \\ |\cA| = k}} \sum_{\substack{\cB \subseteq \cN_i\setminus\cT \\ |\cB| \leq \beta - k}} \left( \left(\tfrac{1-p}{p}\right)^{k} (-1)^{|\cB|} - (-1)^{k} \left(\tfrac{p}{1-p}\right)^{|\cB|}\right)  \nn \\
&= \tfrac{1}{n} \sum_{i=1}^n Y_i \sum_{k=0}^{\min(\beta,|\cN_i\cap\cT|)} \binom{|\cN_i\!\cap\!\cT|}{k} \sum_{\ell=0}^{\min(\beta-k,|\cN_i\setminus\cT|)} \binom{|\cN_i\!\!\setminus\!\! \cT|}{\ell} (-1)^{k+\ell} \left( \left(\tfrac{p-1}{p}\right)^{k} - \left(\tfrac{-p}{1-p}\right)^{\ell} \right).
\end{align}
Thus, while our estimation guarantees allow for heterogeneity in the potential outcomes model, when treatment probabilities are uniform, computing the estimator does not depend on the identity of the treated individuals, but only the number of treated neighbors. As a result, the estimator can be evaluated in $O(n\beta^2)$ time, which is a significant improvement compared to the $O(n\din^\beta)$ computational complexity when the treatment probabilities are nonuniform.

\meedit{
\subsection{Connection to Other Estimator Classes} \label{ssec:est_connections}

Our estimator takes the form of a linear weighted estimator
\[
    \widehat{\TTE} = \tfrac{1}{n} \sum_{i=1}^n Y_i \cdot w_i(\bz),
\]
under specifically constructed weight functions $w_i \colon \{0,1\}^n \to \mathbb{R}$. From this perspective, we can draw connections between our estimator and others appearing in the literature. 

\paragraph*{Horvitz-Thompson Estimator}

First, we show that in the special case where $|\cN_i| \leq \beta$, our estimator is identical to the classical Horvitz-Thompson estimator. In this case, the restriction $|\cS| \leq \beta$ is satisfied for every $\cS \subseteq \cN_i$, so that $\cS_i^\beta$ includes all subsets of $\cN_i$. Using this, we may simplify
\begin{align*}
    w_i(\bz) &= \sum_{\cS \subseteq \cN_i} g(\cS) \prod_{j \in \cS} \tfrac{z_j - p_j}{p_j(1-p_j)} \\
    &= \sum_{\cS \subseteq \cN_i} \prod_{j \in \cS} \tfrac{z_j - p_j}{p_j} - \sum_{\cS \subseteq \cN_i} \prod_{j \in \cS} \tfrac{-(z_j - p_j)}{(1-p_j)} \tag{by definition of $g(\cS)$}\\
    &= \prod_{j \in \cN_i} \Big(1 + \tfrac{z_j - p_j}{p_j} \Big) - \prod_{j \in \cN_i} \Big(1 - \tfrac{z_j - p_j}{1-p_j} \Big) \tag{distributivity} \\
    &= \prod_{j \in \cN_i} \tfrac{z_j}{p_j} - \prod_{j \in \cN_i} \tfrac{1-z_j}{1-p_j} \\
    &= \frac{\Ind(\bz \textrm{ treats all of } \cN_i)}{\Pr(\bz \textrm{ treats all of } \cN_i)} - \frac{\Ind(\bz \textrm{ treats none of } \cN_i)}{\Pr(\bz \textrm{ treats none of } \cN_i)}.
\end{align*}
Thus,
\[
    \widehat{\TTE}_\text{SNIPE} 
    = \tfrac{1}{n} \sum_{i=1}^n Y_i \left(\frac{\Ind(\bz \textrm{ treats all of } \cN_i)}{\Pr(\bz \textrm{ treats all of } \cN_i)} - \frac{\Ind(\bz \textrm{ treats none of } \cN_i)}{\Pr(\bz \textrm{ treats none of } \cN_i)}\right)
    = \widehat{\TTE}_{HT},
\]
so we exactly recover the Horvitz-Thompson estimator. As a result, when $\beta$ is sufficiently large relative to the degree of the nodes in the graph, our estimator is very similar to the Horvitz-Thompson estimator, only differing for the nodes which have graph degrees larger than $\beta$. In this sense, under Bernoulli randomization, our estimator can be thought of as a generalization of Horvitz-Thompson to additionally account for low polynomial degree structure, which is most relevant for simplifying the potential outcomes associated with high-degree vertices.

\paragraph*{Pseudoinverse Estimator}

\cyedit{The key technical steps of deriving our estimator can be described as constructing unbiased estimators for each unit $i$'s contribution to the total treatment effect using a connection to ordinary least squares for linear models, and subsequently averaging the unbiased estimates due to the fact that the causal estimand is a linear function of these individual contributions. This overall technique has also appeared in previous literature in semiparametric estimation, with one clear example described by}
Swaminathan et. al.~\citep{swaminathan2017off} in a seemingly different context of off-policy evaluation for online recommendation systems. In their model, a context $x \in X$ arrives, at which point the principal selects a tuple (or \textit{slate}) of actions $\mathbf{s} = (s_1, \hdots, s_\ell)$ and observes a random reward $r$ based on the interaction between the context and the slate. The authors make a linearity assumption that posits that $V(x,\mathbf{s}) = \E_r [ r | x,\mathbf{s}] =  \mathbf{1}_s^\intercal \phi_x$, where $\mathbf{1}_s^\intercal$ indicates the choice of a particular action in each entry of the tuple, and $\phi_x$ is a context-specific reward weight vector associated to context $x$. In our setting, the context $x$ is an individual $i$, the reward weights $\phi_x$ are their effect coefficients $\bc_i$, and the slate indicator vector $\mathbf{1}_s^\intercal$ is our treated subsets vector $\tbz_i$.

A primary goal of \citep{swaminathan2017off} is to estimate $\phi_x$, which can be used to inform a good slate selection policy. The mean squared error of an estimate $\mathbf{w}$ for $r$ is $\E_{\mu} [ (\mathbf{1}_s^\intercal \mathbf{w} - r)^2 ]$, where $\mu$ encodes the random selection of the context, slate, and reward. In this framing, the least squares estimator 
\[
    \overline{\phi}_x = \Big( \E_\mu [ \mathbf{1}_s\mathbf{1}_s^\intercal | x] \Big)^\dagger \E_\mu[ r\mathbf{1}_s^\intercal | x] 
\]
is the minimizer of the MSE with the minimum norm. As the reward distribution is unknown, it is replaced with an empirical estimate of $r$ given past data. We can perform the substitutions as described in the previous paragraph, noting that Bernoulli design ensures that $\E[ \tbz_i \tbz_i^\intercal]$ is invertible (see Appendix~\ref{sec:explicit_est}), to recover our estimator $\widehat{\bc_i}$ from \eqref{eq:chat_est}.

\paragraph*{Riesz Estimator}

In their recent work, Harshaw et. al.~\citep{harshaw2022design} consider a very general causal inference framework wherein a treatment $\bz$ is drawn from an underlying experimental design distribution over an intervention set $\mathcal{Z}$. We observe an outcome $Y_i(\bz)$ for each unit $i \in [n]$, where we assume that $Y_i$ belongs to a \textit{model space} $\mathcal{M}_i$, which is a subspace of $\mathcal{Y}$ containing all measurable and square-integrable (with respect to the distribution over $\mathcal{Z}$) functions $\mathcal{Z} \to \mathbb{R}$. Each individual is additionally endowed with a linear effect functional $\theta_i \colon \mathcal{Y} \to \mathbb{R}$, and the goal is to estimate the \textit{average individual treatment effect} $\tau = \tfrac{1}{n} \sum_{i=1}^{n} \theta_i(Y_i)$.

In our setting, $\mathcal{Z} = \{0,1\}^n$ with the non-uniform Bernoulli design distribution (i.e. $\textrm{Pr}(z_i = 1) = p_i$, with $\{z_i\}$ mutually independent). $\mathcal{Y}$ consists of all linear functions over the basis $\{ \prod_{j \in \cS} z_j \colon \cS \subseteq [n], |\cS| \leq \beta \}$, and each $\mathcal{M}_i$ restricts to the linear functions over $\{ \prod_{j \in \cS} z_j \colon \cS \in \cS_i^\beta \}$. Finally, each unit $i$ has effect functional $\theta_i(Y_i) = Y_i(\mathbf{1}) - Y_i(\mathbf{0})$, so that $\tau = \TTE$.

Under two assumptions --- correct model specification (needed in our work as well) and positivity (always satisfied in our setting since Bernoulli design ensures each treatment in $\mathcal{Z}$ occurs with positive probability) --- the Riesz Representation theorem guarantees the existence of an unbiased estimator for $\tau$, referred to as a Riesz Estimator, of the form
\[
    \widehat{\tau} = \tfrac{1}{n} \sum_{i=1}^{n} Y_i(\bz) \:\psi_i(\bz),
\]
where $\psi_i \in \mathcal{M}_i$ is a Riesz representer for $\theta_i$ with the property that 
\begin{equation}
    \theta_i(Y) = \mathbb{E}_{\bz} \big[ \psi_i(\bz) Y(\bz) \big] \tag{$\star$}
\end{equation}
for each $Y \in \mathcal{M}_i$. As each model space $\mathcal{M}_i$ has dimension $|\cS_i^\beta|$, we can identify $\psi_i$ by solving the linear system that verifies that ($\star$) holds for each function in some basis for $\mathcal{M}_i$. A canonical choice is the standard basis, giving rise to equations 
\[
    \theta_i\Big( \prod_{j' \in \cS'} z_{j'} \Big) = \mathbb{I} \big( \cS' \ne \varnothing \big) = \mathbb{E}_{\bz} \Big[ \psi_i(\bz) \prod_{j' \in \cS'} z_{j'} \Big]. 
\]
for each $\cS' \in \cS_i^\beta$. The choice $\displaystyle \psi_i(\bz) = \sum_{\cS \in \cS_i^\beta} g(\cS) \prod_{j \in \cS} \tfrac{z_j - p_j}{p_j (1-p_j)}$ is a solution to this system (see the unbiasedness calculation in  Appendix~\ref{sec:main_proof}) and gives rise to our estimator. 

}
\section{Properties of Estimator under Bernoulli Design}
\label{sec:properties}

The following theorem summarizes the key properties of our estimator.

\begin{theorem} \label{thm:est_bias_var}
    Under a potential outcomes model satisfying the neighborhood interference assumption with polynomial degree at most $\beta$, the estimator defined in \eqref{eqn:TTE_PI} is unbiased with variance upper bounded by
    \[
        \frac{\din \; \dout \; {Y_{\max}}^2}{n} \cdot \Big( \tfrac{e\din}{\beta} \cdot \max\big(4\beta^2, \tfrac{1}{p(1-p)}\big) \Big)^{\beta}, 
    \]
    where each $p_i \in [p,1-p]$ and $p>0$.
\end{theorem}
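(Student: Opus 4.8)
The plan is to establish unbiasedness by a direct moment computation and then to control the variance by exploiting the locality of the estimator's summands. Write $\widehat{\TTE}=\tfrac1n\sum_{i=1}^n Y_i\,w_i(\bz)$ with $w_i(\bz):=\sum_{\cS\in\cS_i^\beta}g(\cS)\prod_{j\in\cS}\tfrac{z_j-p_j}{p_j(1-p_j)}$, and recall $Y_i(\bz)=\sum_{\cS'\in\cS_i^\beta}c_{i,\cS'}\prod_{j'\in\cS'}z_{j'}$ while $\TTE=\tfrac1n\sum_i\sum_{\cS'\in\cS_i^\beta,\,\cS'\neq\varnothing}c_{i,\cS'}$. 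By linearity of expectation, unbiasedness reduces to proving $\E\big[w_i(\bz)\prod_{j'\in\cS'}z_{j'}\big]=\Ind(\cS'\neq\varnothing)$ for every $i$ and every $\cS'\in\cS_i^\beta$. Expanding $w_i$ and using independence of the $z_j$, the expectation factors over coordinates: a factor $j\in\cS\cap\cS'$ contributes $\E\big[\tfrac{(z_j-p_j)z_j}{p_j(1-p_j)}\big]=1$, a factor $j\in\cS\setminus\cS'$ contributes $\E\big[\tfrac{z_j-p_j}{p_j(1-p_j)}\big]=0$ (so only $\cS\subseteq\cS'$ survive), and a factor $j\in\cS'\setminus\cS$ contributes $\E[z_j]=p_j$. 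Hence $\E\big[w_i(\bz)\prod_{j'\in\cS'}z_{j'}\big]=\sum_{\cS\subseteq\cS'}g(\cS)\prod_{j\in\cS'\setminus\cS}p_j$; substituting $g(\cS)=\prod_{s\in\cS}(1-p_s)-\prod_{s\in\cS}(-p_s)$ and distributing, the first piece collapses to $\prod_{j\in\cS'}\big((1-p_j)+p_j\big)=1$ and the second to $\prod_{j\in\cS'}\big((-p_j)+p_j\big)=\Ind(\cS'=\varnothing)$, and their difference is $\Ind(\cS'\neq\varnothing)$. (This is precisely the Riesz-representer identity~($\star$).)

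For the variance, the structural point is that $Y_i\,w_i(\bz)$ depends only on $\{z_j:j\in\cN_i\}$, since both $Y_i$ and $w_i$ do by~\cref{assp:neighborhood_interference}. As the $z_j$ are mutually independent, $Y_iw_i(\bz)$ and $Y_kw_k(\bz)$ are independent whenever $\cN_i\cap\cN_k=\varnothing$, so $\mathrm{Cov}(Y_iw_i,Y_kw_k)=0$ unless $i$ and $k$ share an in-neighbor; there are at most $n\,\din\,\dout$ such ordered pairs (fix $i$: a shared in-neighbor lies in $\cN_i$, giving at most $\din$ choices, and $k$ must then be one of its at most $\dout$ out-neighbors). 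For each such pair $|\mathrm{Cov}(Y_iw_i,Y_kw_k)|\le\max_\ell\mathrm{Var}(Y_\ell w_\ell)\le Y_{\max}^2\max_\ell\E[w_\ell(\bz)^2]$ using $|Y_\ell|\le Y_{\max}$, so
\[
    \mathrm{Var}(\widehat{\TTE})=\tfrac1{n^2}\sum_{i,k}\mathrm{Cov}(Y_iw_i,Y_kw_k)\le\frac{\din\,\dout\,Y_{\max}^2}{n}\,\max_\ell\E[w_\ell(\bz)^2],
\]
which yields the theorem's prefactor.

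The remaining task is to show $\max_\ell\E[w_\ell(\bz)^2]\le\big(\tfrac{e\din}{\beta}\max(4\beta^2,\gamma)\big)^\beta$, where $\gamma:=\tfrac1{p(1-p)}\ge4$. Expanding the square and using independence with $\E[z_j-p_j]=0$ and $\E[(z_j-p_j)^2]=p_j(1-p_j)$, all cross terms vanish and $\E[w_\ell(\bz)^2]=\sum_{\cS\in\cS_\ell^\beta}\tfrac{g(\cS)^2}{\prod_{s\in\cS}p_s(1-p_s)}$. Setting $t_\cS:=\prod_{s\in\cS}\tfrac{1-p_s}{p_s}$ and using $\prod_{s\in\cS}(1-p_s)\prod_{s\in\cS}p_s=\prod_{s\in\cS}p_s(1-p_s)$ gives the exact identity $\tfrac{g(\cS)^2}{\prod_{s\in\cS}p_s(1-p_s)}=t_\cS+t_\cS^{-1}-2(-1)^{|\cS|}$; since $p_s\in[p,1-p]$ forces $\max(t_\cS,t_\cS^{-1})\le(\tfrac{1-p}{p})^{|\cS|}\le\gamma^{|\cS|}$ and $t_\cS t_\cS^{-1}=1$, each nonempty-$\cS$ term is at most $\gamma^{|\cS|}+3$ (the $\cS=\varnothing$ term being $0$). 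Grouping by $k=|\cS|$ and bounding the number of $\cS\in\cS_\ell^\beta$ with $|\cS|=k$ by $\binom{|\cN_\ell|}{k}\le\binom{\din}{k}\le(e\din/k)^k$,
\[
    \E[w_\ell(\bz)^2]\le\sum_{k=1}^{\min(\beta,\din)}\binom{\din}{k}\gamma^k+3\sum_{k=1}^{\min(\beta,\din)}\binom{\din}{k}.
\]
The second sum is $O\big(\binom{\din}{\beta}\big)$, negligible against the target. For the first, consecutive terms have ratio $\binom{\din}{k+1}\gamma^{k+1}/\binom{\din}{k}\gamma^k=\tfrac{\din-k}{k+1}\gamma$, so the summand is geometrically dominated by its value at $k=\min(\beta,\din)$; a case split then finishes, using that when $\gamma\ge4\beta^2$ this ratio is large enough that the sum is essentially $\binom{\din}{\beta}\gamma^\beta\le(e\din\gamma/\beta)^\beta$, while when $\gamma<4\beta^2$ replacing $\gamma$ by $4\beta^2$ buys enough slack to absorb the leftover constants. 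I expect this final geometric-series bookkeeping — in particular making the constants fit in the small-$p$ regime, where $\gamma$ and $\max(4\beta^2,\gamma)$ nearly coincide so there is almost no slack — to be the only real obstacle; the rest is routine.
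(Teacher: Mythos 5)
Your unbiasedness argument coincides with the paper's: factor the expectation coordinate-by-coordinate so that only $\cS \subseteq \cS'$ survives (this is exactly Lemma~\ref{lem:exp_prod}), then use distributivity and the definition of $g$ to get $\Ind(\cS' \ne \varnothing)$. Your variance argument, however, is a genuinely different and leaner route. The paper expands $\Cov(Y_i w_i, Y_{i'} w_{i'})$ in the coefficient basis and proves a dedicated covariance lemma (Lemma~\ref{lem:cov_bound}: nonnegativity, the support condition $\cS \triangle \cT \subseteq \cS' \cup \cT'$, and the bound $(p(1-p))^{-|\cS \cap \cT|}$), then counts admissible pairs $(\cS,\cT)$ by their intersection, paying a $(2\beta)^{2\beta}$ factor before the binomial sum. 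You instead apply Cauchy--Schwarz to each of the at most $n\,\din\dout$ interfering ordered pairs, reducing everything to $\max_\ell \E[w_\ell(\bz)^2]$, which you compute exactly (cross terms vanish by independence) as $\sum_{\cS \in \cS_\ell^\beta} g(\cS)^2 \prod_{s\in\cS}(p_s(1-p_s))^{-1}$. This avoids the multi-case covariance analysis and, once completed, actually gives the theorem's bound with $\tfrac{1}{p(1-p)}$ in place of $\max\big(4\beta^2,\tfrac{1}{p(1-p)}\big)$, i.e.\ a slightly stronger statement.

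The one soft spot is the endgame you yourself flag, and as sketched it does not quite close: the claim that $\binom{\din}{k}\gamma^k$ (with $\gamma := \tfrac{1}{p(1-p)}$) is geometrically dominated by its value at $k=\beta$ fails when $\gamma < 4\beta^2$, since the ratio $\tfrac{\din-k}{k+1}\gamma$ can drop below $1$ (e.g.\ $\din$ close to $\beta$); and carrying the additive ``$+3$'' per term overshoots the target by a constant factor precisely in the regime $\gamma \ge 4\beta^2$, where the theorem's factor is $(e\din\gamma/\beta)^\beta$ with no slack. The fix is immediate and removes both the case split and the geometric-series bookkeeping: each nonempty term obeys $g(\cS)^2 \prod_{s\in\cS}(p_s(1-p_s))^{-1} \le \prod_{s\in\cS}(p_s(1-p_s))^{-1} \le \gamma^{|\cS|}$, either because $|g(\cS)| \le 1$ (Lemma~\ref{lem:gs_bound}) or, from your own identity, because $t_\cS + t_\cS^{-1} \pm 2 \le \big(\sqrt{t_\cS} + 1/\sqrt{t_\cS}\big)^2 \le \prod_{s\in\cS}(p_s(1-p_s))^{-1}$. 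Then $\E[w_\ell^2] \le \sum_{k=1}^{\beta} \binom{\din}{k}\gamma^k \le \gamma^\beta \sum_{k=0}^{\beta}\binom{\din}{k} \le \big(\tfrac{e\din \gamma}{\beta}\big)^\beta$, using the same bound $\sum_{k\le\beta}\binom{\din}{k} \le (e\din/\beta)^\beta$ that the paper invokes (both arguments implicitly take $\beta \le \din$ at this step). Since this is at most the stated factor, your reduction proves the theorem.
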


Notably, a sequence of networks with $n \to \infty$ and $d = o(\log n)$ has variance asymptotically approaching 0. We defer the proof of this theorem to \cref{sec:main_proof}. 
Rather than appealing to the convergence properties of the pseudoinverse estimator to establish unbiasedness, we present an alternate combinatorial proof. To bound the variance, we carefully consider different possible overlapping subsets of individuals to separately bound many covariance terms that make up the overall variance expression.

In order to understand the variance bounds for our estimator, we can compare against the variance of Horvitz-Thompson under a Bernoulli design. In the simple setting of a $d$-regular graph and uniform Bernoulli$(p)$ randomization, \cite{UganderKarrerBackstromKleinberg13} showed that the Horvitz-Thompson estimator has a variance that is lower bounded by $\Omega(1/np^d)$. In contrast, the variance of our estimator only scales polynomially in the degree $d$, but exponentially in the polynomial degree $\beta$, which is achieved by simply changing the estimator, without requiring any additional clustering structure of the graph and without utilizing complex randomized designs. This is a significant gain when the polynomial degree $\beta$ is significantly lower than the graph degree $d$. The simplest setting of $\beta=1$ already expresses all potential outcomes models which satisfy additivity of main effects and additivity of interference, as defined in \cite{SussmanAiroldi17}; this subsumes all linear models which are commonly used in the practical literature, yet which require additional homogeneity assumptions.

We also remark that when $\beta=\dmax$, both the Horvitz-Thompson estimator and our estimator are unbiased for the same class of functions. When $\beta < \dmax$, the Horvitz-Thompson estimator is unbiased for a strictly larger class of functions than our estimator, precisely those characterized by $\beta=d$. In this way, if the practitioner can use domain knowledge to argue that the true potential outcomes model belongs to the class of functions parametrized by $\beta$ from Equation \ref{eq:pom_ld_neighbor} for $\beta < d$, then our estimator provides an advantage over the Hortvitz-Thompson estimator. This is because both estimators are unbiased but the variance of our estimator does not have exponential dependence on the graph degree. However, depending on the flexibility that the practitioner desires or needs to express in the potential outcomes, there is no clear winner. For example, if one desires a fully nonparametric potential outcomes to capture the most general neighborhood interference settings, they might set $\beta=d$. Then, both our estimator and the Horvitz-Thompson estimator are unbiased and both have variance scaling exponentially in $d$. On the other hand, suppose anonymous interference was satisfied for a particular application and the potential outcomes could be modeled: 
\[
    Y_i(\bz) = c_0 + c_1z_i + c_2\Big(\sum_{j \in \cN_i}z_j\Big).
\]
Then, using an ordinary least squares estimator would give an unbiased estimate with lower variance than our estimator. Thus, our model and estimator can be viewed as simply ``adding to the toolbox'' that practitioners may use depending on how expressive they need their potential outcomes models to be.

In the special setting of uniform Bernoulli design and $\beta=1$, our estimator as stated in \eqref{est:uniform_prob} is the same as an estimator presented in \cite{WagerLi2020}. They consider a fully nonparametric setting under anonymous interference, in which the goal is to estimate the derivative of the total outcomes under changes of the population-wide treatment probability. As the derivative can be estimated by locally linearizing the outcomes function, they derive the special case of the estimator in \eqref{est:uniform_prob} under $\beta=1$ by taking the derivative of the expected population outcomes under a Bernoulli randomization, constructed by inverse propensity weights. This suggests under a fully nonparametric setting, our estimator may be used for estimating an appropriately defined local estimand. There may also be opportunities to perform variance reduction on our estimator given knowledge of the graph structure, as proposed in \cite{WagerLi2020}; however, their solution requires anonymous interference, and it is not clear how to extend their solution concept beyond $\beta=1$, anonymous interference, and uniform treatment probabilities.

\subsection{Minimax Lower Bound on Mean Squared Error}

\mereplace{To understand the optimality of our estimator, we construct a lower bound on the minimax optimal mean squared error rate. In particular, we show that for a setting with a $d$-regular graph and uniform treatment probabilities $p$, the best achievable mean squared error is lower bounded by $\Omega\big(\frac{\sigma^2}{n} \big(\frac{d}{p}\big)^{\beta}\big)$ when $pd \ll 1/\beta$ and $\beta \ll d$. When $pd \gg \beta$ and $\beta \ll d$, the best achievable mean squared error is lower bounded by $\Omega\big(\frac{ \sigma^2}{n p^{2\beta}}\big)$. The first lower bound suggests that when $pd$ is small, our estimator is asymptotically minimax optimal. One can interpret $pd$ as the expected number of treated neighbors; if $pd$ is small, this means that the treatment budgets is limited enough that on average, many units have no treated neighbors.

Let $M = (E, \{c_{i,S}\}_{i, \cS \subset \cN_i, |\cS| \leq \beta})$ define the edge set $E$ and network effects for a $\beta$-degree model associated to the graph with edge set $E$. Let $\mathcal{M}_{d,\beta}$ denote the collection of all graphs with maximum degree bounded by $d$ and associated potential outcomes models with polynomial degree bounded by $\beta$, such that $M \in \mathcal{M}_{d,\beta}$.

\begin{theorem}[Minimax Lower Bound]
For any $n,d,\beta,p$, for any estimator $\widehat{\TTE}$, there exists an instance $M \in \mathcal{M}_{d,\beta}$ such that the minimax squared error under uniform treatment probabilities $p_i = p$ is bounded below by
\begin{align*}
 \E_M[(\widehat{\TTE} - TTE)^2] \geq \begin{cases}
\frac{\sigma^2}{32n} \left(\frac{d-\beta}{p}\right)^{\beta} (1 - p \beta (d-\beta)) &\text{ if } p(d-\beta) < 1/\beta, \\
\frac{\sigma^2}{16n} \left(\frac{d-\beta}{p}\right)^{\beta} \left(\frac{(p(d-\beta))^{\beta/2+1}\beta^{\beta/2}}{\beta - p(d-\beta)}
+ \frac{(p(d-\beta))^{\beta/2}}{p \beta (d-\beta) - 1}\right)^{-1}  &\text{ if } 1/\beta < p(d-\beta) < \beta, \\
\frac{\sigma^2}{32n p^{2\beta}} \left(\frac{p(d-\beta) - \beta}{p(d-\beta)}\right)  &\text{ if } p(d-\beta) > \beta.
\end{cases}
\end{align*}
\end{theorem}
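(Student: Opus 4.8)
The statement is an information-theoretic lower bound, so the plan is to prove it by Le Cam's two-point method. Since the $\sigma^2$ appearing in the bound must come from idiosyncratic randomness, I work in the model in which each outcome carries independent mean-zero noise of variance $\sigma^2$, i.e. $Y_i(\bz)=\sum_{\cS}c_{i,\cS}\prod_{j\in\cS}z_j+\varepsilon_i$ with $\varepsilon_i\perp\bz$; taking the $\varepsilon_i$ Gaussian is without loss for a lower bound and keeps the divergence computations clean. I would exhibit two instances of $\mathcal{M}_{d,\beta}$ on a common $d$-regular graph whose induced laws on $(\bz,Y_1,\dots,Y_n)$ are hard to distinguish but whose $\TTE$ values are well separated, and then convert the usual two-point bound $\inf_{\widehat{\TTE}}\max_{j\in\{0,1\}}\mathbb{E}_{M_j}[(\widehat{\TTE}-\TTE_j)^2]\ge\tfrac14(\TTE_1-\TTE_0)^2\big(1-\mathrm{TV}(P_0,P_1)\big)$ into the three displayed cases (a hypercube/Assouad argument would also work but is not needed here).

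\emph{Construction.} Let $M_0$ be the null model (all $c_{i,\cS}=0$, so $\TTE_0=0$). For $M_1$, activate only the top-order interactions: set $c_{i,\cS}=\delta$ for every $\cS\subseteq\cN_i$ with $|\cS|=\beta$, leaving all lower-order coefficients zero, where $\delta>0$ is a scale to be optimized. This is the ``hardest'' perturbation because, under a small-$p$ Bernoulli design, the monomial $\prod_{j\in\cS}z_j$ is nonzero only when the entire $\beta$-set $\cS$ is treated, an event of probability $p^{\beta}$. One checks that $M_1\in\mathcal{M}_{d,\beta}$ and that $\TTE_1-\TTE_0=\delta\binom{d}{\beta}$, i.e. $\delta$ times the number of degree-$\beta$ monomials in a neighborhood.

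\emph{Divergence.} Because $\bz$ has the same law under both instances, the Kullback--Leibler divergence factorizes as $\mathrm{KL}(P_0\Vert P_1)=\tfrac{1}{2\sigma^2}\sum_{i=1}^{n}\mathbb{E}_{\bz}\big[\mu_i(\bz)^2\big]$, where $\mu_i(\bz)=\delta\sum_{|\cS|=\beta,\,\cS\subseteq\cN_i}\prod_{j\in\cS}z_j$. Expanding the square and using independence of the $z_j$ gives $\mathbb{E}_{\bz}[\mu_i(\bz)^2]=\delta^2\binom{d}{\beta}\,G(d,\beta,p)$, where $G(d,\beta,p)=\sum_{u=0}^{\beta}\binom{\beta}{u}\binom{d-\beta}{u}p^{\beta+u}$ is obtained by grouping ordered pairs of $\beta$-sets by the size of their intersection. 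Choosing $\delta^2$ so that $\mathrm{KL}(P_0\Vert P_1)$ equals a fixed small constant and combining the Le Cam bound with Pinsker's inequality (or, for tighter constants, the Bretagnolle--Huber inequality) yields a lower bound of the form $c\,\binom{d}{\beta}\sigma^2/\big(n\,G(d,\beta,p)\big)$.

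\emph{Casework and obstacle.} The three regimes arise from two-sided control of $G(d,\beta,p)=p^{\beta}\sum_{u=0}^{\beta}\binom{\beta}{u}\binom{d-\beta}{u}p^{u}$ in terms of $p(d-\beta)$, which controls the ratio of consecutive terms of the sum (of order $\beta(d-\beta)p$ per step): when $p(d-\beta)<1/\beta$ the $u=0$ term dominates and a geometric bound gives $G\le p^{\beta}/(1-p\beta(d-\beta))$, producing the first case and its $(1-p\beta(d-\beta))$ correction; when $p(d-\beta)>\beta$ the $u=\beta$ term dominates, $G\asymp(d-\beta)^{\beta}p^{2\beta}/\beta!$, and the $\binom{d}{\beta}/G$ factor collapses to order $p^{-2\beta}$ (with the $\tfrac{p(d-\beta)-\beta}{p(d-\beta)}$ correction from the subleading terms); the intermediate regime is handled by retaining the two most significant terms of the sum, which is precisely the two-term reciprocal in the middle case. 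The explicit constants ($32$, $16$, and the correction factors) are what one gets by carrying the Le Cam constant and these non-asymptotic bounds on $G$ exactly rather than to leading order. I expect the main obstacle to be exactly this last step: obtaining clean, fully non-asymptotic, two-sided estimates of the combinatorial sum $G(d,\beta,p)$ uniformly over all $n,d,\beta,p$, with the thresholds at $p(d-\beta)=1/\beta$ and $p(d-\beta)=\beta$, while keeping the separation-to-divergence trade-off tight enough to land the stated constants; beyond that, the remaining work (verifying the construction lies in $\mathcal{M}_{d,\beta}$, handling edge cases such as $\beta$ close to $d$) is routine.
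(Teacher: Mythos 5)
Your proposal follows essentially the same route as the paper's argument for this statement: a two-point Le Cam bound with Gaussian observation noise supplying the $\sigma^2$, a pair of instances differing only in the order-$\beta$ coefficients (the paper takes the sign-flipped pair $c_{i,\cS}=\pm\delta\binom{d}{\beta}^{-1}$ on all $\beta$-subsets, which is equivalent to your null-versus-positive pair up to rescaling $\delta$), a conditional-on-$\bz$ Gaussian KL computation that reduces to the same intersection sum $\binom{d}{\beta}\sum_{u=0}^{\beta}\binom{\beta}{u}\binom{d-\beta}{u}p^{\beta+u}$, and a choice of $\delta$ equating the KL upper bound to a fixed constant before the casework on $p\beta(d-\beta)$. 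The only substantive work you defer, the non-asymptotic two-sided control of that sum in the three regimes, is exactly where the paper's own argument goes as well, so the plan is sound.
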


When $p\beta(d-\beta)$ is small, the lower bound scales as $\Omega\big(\frac{\sigma^2}{n} \big(\frac{d-\beta}{p}\big)^{\beta}\big)$ as stated below. 
\begin{corollary}
For any $n,d,\beta,p$ satisfying $p\beta(d-\beta) < 1- C$ for some constant $C \in (0,1)$,
\begin{align*}
\inf_{\widehat{TTE}} \sup_{M \in \mathcal{M}_{d,\beta}} \E_M[(\widehat{\TTE} - TTE)^2] \geq \frac{C \sigma^2}{32n} \left(\frac{d-\beta}{p}\right)^{\beta}
\end{align*}
\end{corollary}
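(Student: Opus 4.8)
The corollary is an immediate consequence of the minimax lower bound stated just above it, so I would first record that short reduction and then describe how I would prove the theorem itself, which is where the real work lies. \emph{Deducing the corollary:} since $C\in(0,1)$, the hypothesis $p\beta(d-\beta)<1-C$ forces $p(d-\beta)<(1-C)/\beta<1/\beta$, so the first case of the theorem applies. It gives, for every estimator $\widehat{\TTE}$, an instance $M\in\mathcal M_{d,\beta}$ with $\E_M[(\widehat{\TTE}-\TTE)^2]\ge\frac{\sigma^2}{32n}\big(\tfrac{d-\beta}{p}\big)^{\beta}\big(1-p\beta(d-\beta)\big)$, and $1-p\beta(d-\beta)>C$ bounds the last factor below by $C$. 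Hence $\sup_{M\in\mathcal M_{d,\beta}}\E_M[(\widehat{\TTE}-\TTE)^2]\ge\frac{C\sigma^2}{32n}\big(\tfrac{d-\beta}{p}\big)^{\beta}$ for every estimator; since the right-hand side is estimator-free, taking $\inf_{\widehat{\TTE}}$ yields the stated bound.

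\emph{Proving the underlying theorem.} The plan is to pass to Bayes risk: $\inf_{\widehat{\TTE}}\sup_M(\cdot)\ge\inf_{\widehat{\TTE}}\E_{M\sim\pi}(\cdot)$ for any prior $\pi$ on $\mathcal M_{d,\beta}$, so it suffices to exhibit a hard prior. I would fix one $d$-regular graph that splits into identical independent gadgets --- e.g.\ a disjoint union of $(d{+}1)$-cliques (adjusting for divisibility) --- and support $\pi$ on instances in which every coefficient vanishes except, for each unit $i$, those indexed by a carefully chosen family $\mathcal D_i$ of size-$\beta$ subsets of $\cN_i$ with $|\mathcal D_i|\asymp(d-\beta)^{\beta}$; on these, set $c_{i,\cS}$ to be i.i.d.\ mean-zero Gaussian with variance $\tau^2$. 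Then $\TTE=\tfrac1n\sum_i\sum_{\cS\in\mathcal D_i}c_{i,\cS}$ is a linear functional of the free parameters, and I would invoke the multivariate van Trees (Bayesian Cram\'er--Rao) inequality which, using the block structure across gadgets and $\nabla_\theta\TTE=\tfrac1n\mathbf 1$, lower bounds the Bayes risk by $\tfrac1n\,\mathbf 1^{\top}\big(\sigma^{-2}P+\tau^{-2}I\big)^{-1}\mathbf 1$, where $\sigma^2$ is the observation-noise scale and $P$ is the $|\mathcal D_i|\times|\mathcal D_i|$ Fisher-information matrix of one gadget's observation. The key elementary input is that under Bernoulli$(p)$ design the monomial $\prod_{j\in\cS}z_j$ equals $1$ with probability exactly $p^{|\cS|}$, so two subsets $\cS,\cS'$ are simultaneously ``active'' with probability $p^{|\cS\cup\cS'|}$; thus $P_{\cS,\cS'}=p^{|\cS\cup\cS'|}$. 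Sending $\tau\to\infty$ (allowed if the class imposes no magnitude cap) collapses the per-gadget term to $\sigma^2\,\mathbf 1^{\top}P^{-1}\mathbf 1$, and because $\mathcal D_i$ is permutation-symmetric, $P$ has constant row sum $\lambda=\sum_{t=0}^{\beta}\binom{\beta}{t}\binom{d-\beta}{\beta-t}p^{2\beta-t}$, so $\mathbf 1^{\top}P^{-1}\mathbf 1=|\mathcal D_i|/\lambda$ and the Bayes risk is of order $\tfrac{\sigma^2}{n}\cdot\tfrac{|\mathcal D_i|}{\lambda}\asymp\tfrac{\sigma^2}{n}\big(\tfrac{d-\beta}{p}\big)^{\beta}\cdot\tfrac{p^{\beta}}{\lambda}$.

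The hard part is controlling $\lambda$ (equivalently, the spectrum of the Johnson-scheme matrix $P$) across parameter regimes, which is exactly what produces the three cases: consecutive terms of $\lambda$ are in ratio $\approx p(d-\beta)/\beta$, so when $p(d-\beta)<1/\beta$ the $t=\beta$ term $p^{\beta}$ dominates and $\lambda\approx p^{\beta}\big(1+p\beta(d-\beta)+\cdots\big)$, giving $\tfrac{\sigma^2}{n}\big(\tfrac{d-\beta}{p}\big)^{\beta}$ up to the $(1-p\beta(d-\beta))$ factor from the subleading terms; when $p(d-\beta)>\beta$ the $t=0$ term $\binom{d-\beta}{\beta}p^{2\beta}$ dominates and the rate degrades to $\tfrac{\sigma^2}{np^{2\beta}}$; and the intermediate window, where no term dominates, interpolates to the middle expression. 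What remains --- verifying that $P$ is invertible (equivalently, that $\mathbf 1$ lies in its range so the inverse identity above is exact) and pinning the explicit constants $1/32,1/16$ --- is careful but routine bookkeeping on binomial sums. I flag two places where I am guessing at the paper's exact setup: the precise subset family $\mathcal D_i$ (the $(d-\beta)^{\beta}$ count suggests it is engineered to avoid the $\beta!$ loss incurred by taking all $\binom d\beta$ subsets), and whether $\sigma^2$ is an observation-noise variance or another normalization of the class; the architecture above is robust to the latter choice, though the former affects the constants.
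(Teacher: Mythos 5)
Your deduction of the corollary is correct and is exactly the paper's route: the hypothesis $p\beta(d-\beta)<1-C$ places you in the first case of the minimax theorem, the factor $1-p\beta(d-\beta)$ is bounded below by $C$, and since the resulting bound holds for every estimator (and is estimator-free), taking the infimum over $\widehat{\TTE}$ gives the stated inequality. So for the statement itself there is nothing missing.

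Where you diverge from the paper is in the sketch of the underlying theorem. The paper does not argue via a Bayesian Cram\'er--Rao/van Trees bound with a Fisher-information (Johnson-scheme) matrix; it uses LeCam's two-point method with two \emph{deterministic} instances on a $d$-regular graph: $M_A$ with $c_{i,\cS}=\delta/\binom{|\cN_i|}{\beta}$ on every size-$\beta$ subset of $\cN_i$ (all smaller coefficients zero) and $M_B$ its sign-flipped twin, so that $\TTE(M_A)-\TTE(M_B)=2\delta$. The risk is then bounded below by $\tfrac{\delta^2}{2}\bigl(1-\sqrt{D_{KL}(P_A\|P_B)/2}\bigr)$, where $P_A,P_B$ are the induced distributions over the observed data; these differ only through Gaussian observation noise, which resolves one of your flagged guesses: $\sigma^2$ is indeed the observation-noise variance, and the paper explicitly remarks the bound is vacuous when $\sigma=0$. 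The three regimes of the theorem, including the $(1-p\beta(d-\beta))$ factor and the constants $1/32$ and $1/16$, come out of the explicit upper bound on $D_{KL}(P_A\|P_B)$ (a sum over joint-activation probabilities of the monomials under Bernoulli$(p)$ design) together with the choice of $\delta$ that sets that upper bound to $1/2$. Your van Trees architecture is a plausible alternative and would yield a Bayes-risk bound without a two-point construction, but as written it leaves the quantitatively decisive steps (the choice of the family $\mathcal{D}_i$, invertibility and row-sum control of $P$ across regimes, and the exact constants) unverified, whereas in the paper these fall out directly of the KL computation; note also that the limit $\tau\to\infty$ you invoke needs care, since the coefficient class is only constrained through $Y_{\max}$-type bounds in the rest of the paper.
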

For a $\beta$-order interactions model, estimating the total treatment effect requires being able to measure the network effect of the size $\beta$ subsets of a unit's neighborhood. There are roughly $O(d^\beta)$ subsets of size $\beta$, and the probability that a set of size $\beta$ is jointly assigned to treatment is $p^{\beta}$. As a result, the scaling of $(d/p)^{\beta}$ is somewhat intuitive.

When $p(d-\beta)$ is larger than $\beta$, the lower bound scales as $\Omega\big(\frac{ \sigma^2}{n p^{2\beta}}\big)$ as stated below.
\begin{corollary}
For any $n,d,\beta,p$ satisfying $\beta < (1-C) p(d-\beta)$ for some constant $C \in (0,1)$,
\begin{align*}
\inf_{\widehat{TTE}} \sup_{M \in \mathcal{M}_{d,\beta}} \E_M[(\widehat{\TTE} - TTE)^2] \geq \frac{C \sigma^2}{32n p^{2\beta}}
\end{align*}
\end{corollary}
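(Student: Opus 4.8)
The plan is to prove the stated minimax lower bound by exhibiting a prior over a family of hard instances in $\mathcal{M}_{d,\beta}$ and running a Le~Cam/Assouad-type argument; both corollaries following the theorem are then immediate specializations of the three-case bound (the first is the regime $p(d-\beta)<1/\beta$ together with $1-p\beta(d-\beta)\geq C$; the second is the regime $p(d-\beta)>\beta$ together with $\tfrac{p(d-\beta)-\beta}{p(d-\beta)}\geq C$, which holds whenever $\beta<(1-C)p(d-\beta)$). So it suffices to prove the theorem, and since averaging over any prior lower-bounds the worst case, it suffices to lower-bound the Bayes risk under a well-chosen prior.

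\textbf{The construction.} I would index potential outcomes models by a sign vector $\epsilon\in\{-1,+1\}^n$: keep the baseline $c_{i,\varnothing}$ and all lower-order coefficients (those with $1\le|\cS|<\beta$) equal to zero, and for each unit $i$ set $c_{i,\cS}=\delta\,\epsilon_i$ for every subset $\cS$ in a fixed collection $\mathcal{C}_i\subseteq\cS_i^\beta$ of $m:=\Theta\big((d-\beta)^\beta\big)$ size-$\beta$ subsets of $\cN_i$, drawn from a designated $(d-\beta)$-element portion of each neighborhood (this keeps the construction inside $\mathcal{M}_{d,\beta}$ and accounts for $d-\beta$ appearing in place of $d$). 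Here $\delta>0$ is a scale to be optimized at the end, and one may take the graph to be any convenient $d$-regular graph, e.g.\ a disjoint union of small gadgets. Under the uniform prior on $\epsilon$ the estimand is a centered scaled sum of independent signs, $\TTE(\epsilon)=\tfrac{\delta m}{n}\sum_{i=1}^n\epsilon_i$, so $\mathrm{Var}_{\mathrm{prior}}(\TTE)=\tfrac{\delta^2 m^2}{n}$.

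\textbf{The information step.} The key structural fact is that $\epsilon_i$ influences the law of the data only through $Y_i$: only $c_{i,\cdot}$ depends on $\epsilon_i$, and by Assumption~\ref{assp:neighborhood_interference} this affects only unit $i$'s outcome, so $\bz$ and $\{Y_k\}_{k\ne i}$ are uninformative about $\epsilon_i$ and, conditionally on $\bz$, the signs are independent across $i$. Furthermore $Y_i$ depends on $\epsilon_i$ only through $\delta\,\epsilon_i\,N_i$, where $N_i:=\#\{\cS\in\mathcal{C}_i:\cS\subseteq\cT\}$ and $\cT$ is the treated set; in particular $Y_i\perp\epsilon_i$ whenever fewer than $\beta$ neighbors of $i$ are treated. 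I would then lower-bound the Bayes risk by $\sum_i(\delta m/n)^2\,\E[\mathrm{Var}(\epsilon_i\mid\bz,Y_i)]$ and, per coordinate, by $1-\mathrm{TV}$ of the two conditionals; under the Gaussian outcome-noise model with variance $\sigma^2$, $\mathrm{TV}\le\sqrt{\tfrac12\,\mathrm{KL}}$ with per-unit $\mathrm{KL}$ of order $\tfrac{\delta^2}{\sigma^2}\,\E[N_i^2]$, a second moment controlled by $\E\big[\binom{\mathrm{Bin}(d,p)}{\beta}^2\big]$. Summing and choosing $\delta$ to balance the two factors yields a bound of the form $\tfrac{c\,\sigma^2}{n}\cdot\tfrac{m^2}{\E[N_i^2]}$; the three cases of the theorem are precisely the three regimes of this ratio as the expected treated-neighbor count $p(d-\beta)$ crosses $1/\beta$ and $\beta$. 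For $p(d-\beta)<1/\beta$, $N_i$ behaves like the indicator of ``at least $\beta$ treated neighbors'' (probability $\approx\binom{d-\beta}{\beta}p^\beta$), giving the $(\tfrac{d-\beta}{p})^\beta$ rate with the $(1-p\beta(d-\beta))$ correction; for $p(d-\beta)>\beta$, $N_i$ concentrates near its mean so $\E[N_i^2]\approx\big(\binom{d-\beta}{\beta}p^\beta\big)^2$, the $m^2$ cancels, and the rate falls to $\tfrac{1}{p^{2\beta}}$; the intermediate case interpolates, which is where the compound expression in the middle bound comes from.

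\textbf{Main obstacle.} I expect the delicate part to be the intermediate regime $1/\beta<p(d-\beta)<\beta$: there, neither the all-or-nothing estimate (tight when $p(d-\beta)$ is tiny) nor the concentration estimate (tight when $p(d-\beta)$ is large) for $\E\big[\binom{\mathrm{Bin}(d,p)}{\beta}^2\big]$ suffices alone, so one must bound this moment — and the related first moment — precisely enough to obtain clean explicit constants and the stated compound expression; this is the combinatorial heart of the proof. Secondary points are verifying that the perturbed instances genuinely remain in $\mathcal{M}_{d,\beta}$ while arranging the $\mathcal{C}_i$ so the perturbation stays orthogonal to the baseline and lower-order coordinates (the source of the $d-\beta$ bookkeeping), and assembling the per-coordinate inequalities into a global Le~Cam/Assouad bound with the Gaussian noise model in force.
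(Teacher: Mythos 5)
Your reduction of the corollary to the theorem is exactly the paper's: $\beta<(1-C)p(d-\beta)$ forces $p(d-\beta)>\beta$ and $\tfrac{p(d-\beta)-\beta}{p(d-\beta)}\geq C$, so the third case of the three-regime theorem gives the stated $\tfrac{C\sigma^2}{32np^{2\beta}}$ bound. For the theorem itself, however, you take a genuinely different route. The paper uses a two-point LeCam argument: two deterministic instances in which every unit carries coefficients $+\delta\binom{d}{\beta}^{-1}$ (resp.\ $-\delta\binom{d}{\beta}^{-1}$) on all size-$\beta$ subsets, a single KL bound between the two full data distributions under Gaussian outcome noise, and a choice of $\delta$ making that total KL a constant. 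You instead place independent Rademacher signs $\epsilon_i$ per unit, lower-bound the Bayes risk by the posterior variance of $\TTE$, use conditional independence of the $\epsilon_i$ given $(\bz,\mathbf{Y})$ to decompose per coordinate, and control each coordinate by a per-unit KL of order $\delta^2\,\mathbb{E}[N_i^2]/\sigma^2$. Both arguments hinge on exactly the same moment, $\mathbb{E}\big[\tbinom{\mathrm{Bin}(\cdot,p)}{\beta}^2\big]$, and both land on a bound of the form $\sigma^2 m^2/(n\,\mathbb{E}[N_i^2])$; the per-unit KL budget you gain in the Assouad-style decomposition is exactly offset by the $\sqrt{n}$-scale (rather than $n$-scale) fluctuation of $\TTE$ under the product prior, so the rates agree. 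What your route buys is locality (the information constraint is per unit, which extends naturally to heterogeneous neighborhoods or to lower bounds on unit-level effects) and it avoids the estimator-to-test reduction with a tail bound on the prior fluctuation of $\TTE$ that the paper's prior-based version needs; what the paper's route buys is a shorter assembly (one KL computation, no posterior-independence bookkeeping). One caution, which you correctly flag as the combinatorial heart: to recover the clean $1/p^{2\beta}$ rate with the correction $\tfrac{p(d-\beta)-\beta}{p(d-\beta)}$ you must show $\sum_{j=0}^{\beta}\Pr(H=j)\,p^{-j}\lesssim\big(1-\tfrac{\beta}{p(d-\beta)}\big)^{-1}$ for the hypergeometric overlap $H$ of two random $\beta$-subsets; the naive bound $\Pr(H=j)\leq\binom{\beta}{j}\big(\tfrac{\beta}{d-2\beta}\big)^{j}$ only yields a geometric ratio of order $\beta^2/(p\,d)$, so getting the threshold at $p(d-\beta)>\beta$ rather than $p(d-\beta)\gtrsim\beta^2$ requires the sharper moment estimate — this is the one step of your plan that is not routine and where the constants of the target bound are actually decided.
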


The proof of Theorem \ref{thm:lower_bound} follows from an application of LeCam's method \cite{lecam1973convergence}, which states that for some $\delta > 0$, for any two instances $M_A, M_B \in \mathcal{M}$ such that $|\TTE(M_A) - \TTE(M_B)| \geq 2\delta$, it holds that
\begin{align*}
\inf_{\widehat{\TTE}} \sup_{M \in \mathcal{M}_{d,\beta}} \E_M[(\widehat{\TTE} - \TTE(M))^2] \geq \frac{\delta^2}{2} (1 - \sqrt{D_{KL}(P_A \| P_B)/2}),
\end{align*}
where $P_A$ denotes the distribution over the data with respect to instance $M_A$, $P_B$ denotes the distribution over the data with respect to instance $M_B$, and $D_{KL}(\cdot \| \cdot)$ denotes the KL-divergence.

We assume the graph is $d$-regular such that the in-degrees and out-degrees are all equal to $d$, and we consider uniform treatment probability $p_i = p$. We construct instance $M_A$ by setting $c_{i, \cS} = \delta / \binom{|\cN_i|}{\beta}$ for all $i \in [n]$ when $|\cS| = \beta$, and $c_{i,\cS} = 0$ when $|\cS| < \beta$. Let $c'_{i,S}$ denote the corresponding network effect coefficients for instance $B$. We set $c'_{i, \cS} = -\delta / \binom{|\cN_i|}{\beta}$ for all $i \in [n]$ when $|\cS| = \beta$, and $c'_{i,\cS} = 0$ when $|\cS| < \beta$. As a result, $\TTE(M_A) = \delta$, and $\TTE(M_B) = -\delta$, satisfying the condition that $|\TTE(M_A) - \TTE(M_B)| \geq 2\delta$.

The rest of the proof stated in \cref{sec:lower_bd_pf} follows from upper bounding $D_{KL}(P_A \| P_B)$ and subsequently choosing $\delta$ to set the upper bound on $D_{KL}(P_A \| P_B)$ to $1/2$.

\begin{remark}
The minimax lower bounds presented above require the presence of observation noise, i.e. $\sigma > 0$. As a result, when $\sigma=0$, the minimax lower bound is meaningless. This limitation is a product of our particular analysis, and it would be interesting to see if one could relax this assumption and prove a minimax lower bound result directly on the noiseless setting.
\end{remark}
}
{

To understand the optimality of our estimator, we construct a lower bound on the minimax optimal mean squared error rate. In particular, we show that for a setting with a $d$-regular graph and sufficiently-small uniform treatment probabilities $p$, the best achievable mean squared error is lower bounded by $\Omega\big( \tfrac{1}{np^{\beta}} \big)$.

\begin{theorem}[Minimax Lower Bound]\label{thm:lower_bound}
For any $n,d,\beta,p$ with $p^\beta < 0.16$, and any estimator $\widehat{\emph{TTE}}$, there exists a causal network on $n$ nodes with maximum degree $d$ and effect coefficients $\{c_{i,\cS}\}_{i \in[n], \cS \in \cS_i}$ for which the minimax squared error under uniform treatment probabilities $p_i = p$ is bounded below by
\[
 \E\big[(\widehat{\emph{TTE}} - \emph{TTE})^2\big] = \Omega \big( \tfrac{1}{np^\beta} \big).
\]
\end{theorem}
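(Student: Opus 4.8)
The plan is to prove \Cref{thm:lower_bound} via Le Cam's two-point method. I will construct two instances $M_A, M_B$ sharing a common $d$-regular graph and uniform treatment probability $p_i = p$, whose total treatment effects differ by a controlled amount $2\delta$ while the data they induce are nearly indistinguishable; Le Cam's inequality then gives
\[
    \inf_{\widehat{\TTE}} \sup_{M} \E_M\!\big[(\widehat{\TTE} - \TTE(M))^2\big] \;\ge\; \tfrac{\delta^2}{2}\Big(1 - \sqrt{\tfrac12 D_{KL}(P_A \,\|\, P_B)}\Big),
\]
so the task is to make $\delta$ as large as possible subject to $D_{KL}(P_A \,\|\, P_B) \le \tfrac12$. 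Following the construction hinted at in the earlier version, in instance $A$ I put all of the model's weight on the top-order interactions, $c_{i,\cS} = \delta / \binom{d}{\beta}$ for every $\cS \subseteq \cN_i$ with $|\cS| = \beta$ and $c_{i,\cS} = 0$ otherwise, and in instance $B$ I negate every coefficient. By \eqref{eq:TTE_cdefn}, $\TTE(M_A) = \delta$ and $\TTE(M_B) = -\delta$, giving the required separation $2\delta$. As noted in the earlier version of the paper and as is standard for such lower bounds, this argument is run in a model with additive observation noise of constant variance $\sigma^2$, which is what keeps $D_{KL}(P_A \,\|\, P_B)$ finite.

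The core of the argument is controlling $D_{KL}(P_A \,\|\, P_B)$. The Bernoulli design assigns $\bz$ with the same law under both instances, so conditioning on the treated set $\cT$ and using independence across units, $D_{KL}(P_A \,\|\, P_B) = \E_{\bz}\big[\sum_{i=1}^{n} D_{KL}(\text{unit } i \mid \bz)\big]$. Under either instance the conditional outcome of unit $i$ is Gaussian with the same variance, and its means differ by exactly $2\delta \binom{|\cN_i \cap \cT|}{\beta}/\binom{d}{\beta}$ because $\sum_{\cS \subseteq \cN_i,\, |\cS| = \beta} \prod_{j \in \cS} z_j = \binom{|\cN_i \cap \cT|}{\beta}$. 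Hence
\[
    D_{KL}(P_A \,\|\, P_B) \;=\; \frac{2\delta^2}{\sigma^2 \binom{d}{\beta}^2} \sum_{i=1}^{n} \E_{\bz}\!\left[ \binom{|\cN_i \cap \cT|}{\beta}^2 \right],
    \qquad |\cN_i \cap \cT| \sim \mathrm{Binomial}(d, p).
\]
To evaluate the binomial moment I would write $\binom{m}{\beta}^2$ as the number of ordered pairs of $\beta$-subsets of the $d$ trials and group by the overlap $J$ of the two subsets; this yields the identity $\E\big[\binom{\mathrm{Bin}(d,p)}{\beta}^2\big] = \binom{d}{\beta}^2 p^{2\beta}\, \E[p^{-J}]$ with $J$ hypergeometric ($d$ items, $\beta$ marked, $\beta$ drawn), $\E[J] = \beta^2/d$. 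Since $J \le \beta$ we get the one-sided bound $\E[p^{-J}] \le p^{-\beta}$, and therefore $\E\big[\binom{\mathrm{Bin}(d,p)}{\beta}^2\big] \le \binom{d}{\beta}^2 p^{\beta}$; in the small-$p$ regime of the statement one can sharpen this to $\Theta\big(\binom{d}{\beta}^2 p^\beta\big)$ (dominated by the $J = \beta$ term), but only the upper bound is needed.

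Plugging the moment bound in, $D_{KL}(P_A \,\|\, P_B) \le 2 n \delta^2 p^\beta/\sigma^2$, so choosing $\delta^2 = \sigma^2/(4 n p^\beta)$ forces $D_{KL}(P_A \,\|\, P_B) \le \tfrac12$, and Le Cam's inequality then gives $\E_M[(\widehat{\TTE} - \TTE)^2] \ge \tfrac{\delta^2}{2}(1 - \tfrac12) = \tfrac{\delta^2}{4} = \Omega(1/(np^\beta))$, treating $\sigma$ as a fixed constant; the hypothesis $p^\beta < 0.16$ is a mild smallness condition sitting in the same small-treatment regime and is used only to keep the constants in the divergence bound under control. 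I expect the only real subtlety to be the binomial-moment estimate: one has to check that the $d$-dependence of $\E[\binom{\mathrm{Bin}(d,p)}{\beta}^2]$ is exactly $\binom{d}{\beta}^2$ so that it cancels the $\binom{d}{\beta}^{-2}$ prefactor and the bound does not degrade with $d$, and that the lower-order terms of the hypergeometric sum do not spoil this. A technically lighter alternative that bypasses the moment computation is to put the weight on a single designated $\beta$-subset $\cS_i^\star \subseteq \cN_i$ per unit, so that unit $i$'s outcome depends on $\bz$ only through $\prod_{j \in \cS_i^\star} z_j$: then the discrepancy event $\{\cS_i^\star \subseteq \cT\}$ has probability exactly $p^\beta$, $D_{KL}(P_A \,\|\, P_B) = 2 n \delta^2 p^\beta/\sigma^2$ on the nose, and the same $\Omega(1/(np^\beta))$ rate follows; the all-subsets construction is only needed if one additionally wants the sharper $\binom{d}{\beta}$-factor version of the bound.
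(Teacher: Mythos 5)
Your construction and calculations are essentially the paper's \emph{earlier, superseded} argument, and the computations themselves are sound (the hypergeometric bound $\E\big[\binom{\mathrm{Bin}(d,p)}{\beta}^2\big] \le \binom{d}{\beta}^2 p^{\beta}$, hence $D_{KL}(P_A\|P_B) \le 2n\delta^2 p^{\beta}/\sigma^2$, and the choice $\delta^2 \asymp \sigma^2/(np^{\beta})$). The genuine gap is that your argument only works in a model with additive observation noise of constant variance $\sigma^2$, which is not part of the potential outcomes model of this paper and is not assumed in the theorem statement. In the actual noiseless model the outcomes are deterministic given $\bz$, so your two instances $M_A, M_B$ induce mutually singular laws on the event that some unit has at least $\beta$ treated neighbors --- an event of probability $1 - e^{-\Omega(n)}$ under Bernoulli($p$) design --- so $D_{KL}(P_A\|P_B) = \infty$, $1 - \|P_A - P_B\|_{TV}$ is exponentially small in $n$, and Le Cam's two-point bound degenerates. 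Concretely, the estimator that outputs $\delta\cdot\mathrm{sign}(Y_i)$ for the first unit with $Y_i \ne 0$ (and $0$ otherwise) has worst-case MSE $O(\delta^2 e^{-cn})$ over your pair, so no choice of $\delta$ can recover $\Omega(1/(np^{\beta}))$ from this construction; your single-designated-subset variant fails for the same reason. In effect you have proved the old version of the theorem, whose reliance on $\sigma > 0$ is precisely the limitation the authors removed.

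The paper's proof avoids this by replacing the two point hypotheses with two Gaussian \emph{priors} $\Gamma_0, \Gamma_1$ over the coefficients $c_{i,\cS}$ (means $\pm\delta\binom{d}{\beta}^{-1}$ and variance $\binom{d}{\beta}^{-1}$ for $|\cS| = \beta$), following the fuzzy-hypothesis variant of Le Cam's method: the coefficient randomness supplies the smoothing that your $\sigma^2$ was providing, the KL divergence between the induced laws of $(\bY,\bz)$ again comes out to $2n\delta^2 p^{\beta}$, and the statement for deterministic noiseless instances follows since $\sup_{\mathbf{c}}$ dominates the average over $\mathbf{c}\sim\Gamma$. This route has an extra step that is absent from your sketch: under a prior the $\TTE$ is itself random, so one must also bound the probability that $|\TTE - \E_{\mathbf{c}}[\TTE]| \ge \tfrac{9\delta}{10}$ via a Gaussian tail bound. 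After choosing $\delta^2 = \tfrac{8}{3np^{\beta}}$, that tail term becomes $2\exp\big(-\tfrac{27}{25p^{\beta}}\big)$, and the hypothesis $p^{\beta} < 0.16$ is exactly what keeps it below $1 - \sqrt{1 - e^{-16/3}}$ so the final bound stays positive --- it is not merely a cosmetic condition on constants in the divergence bound, as your write-up suggests. To repair your proof you would either have to add an observation-noise assumption to the theorem (weakening it back to the earlier version) or adopt the two-prior device.
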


For a $\beta$-order interactions model, estimating the total treatment effect requires being able to measure the network effect of the size $\beta$ subsets of a unit's neighborhood. The probability that a set of size $\beta$ is jointly assigned to treatment is $p^{\beta}$. As a result, the scaling of $\tfrac{1}{p^\beta}$ is somewhat intuitive.

The proof of Theorem \ref{thm:lower_bound} (given in Appendix~\ref{sec:lower_bd_pf}) uses a generalized variation of  LeCam's method for fuzzy hypothesis testing \cite{lecam1973convergence}, \cite[Sec. 2.7.4]{tsybakov2009springer}. We reduce the problem of $\TTE$ estimation to the creation of a hypothesis test to distinguish between two priors. 
We consider a setting with a $d$-regular graph, uniform treatment probabilities $p_i = p$, and two Gaussian priors $\Gamma_0, \Gamma_1$ over the effect coefficients. The priors have the same variances but with shifted means such that $\big| \E_{\Gamma_0} [ \TTE ] - \E_{\Gamma_1} [ \TTE ] \big| = 2\delta$ for some carefully tuned parameter $\delta$. The failure of hypothesis tests that rely on $\widehat{\TTE}$ is attributable to one of two factors: (1) a significant shift in $\TTE$ brought about by the variability of the coefficients, or (2) inaccuracy of the estimator. We bound the probability of the former with a Gaussian tail bound. We bound the latter in terms of the KL-Divergence between the distributions over $(\mathbf{Y},\mathbf{z})$ induced by $\Gamma_0$ and $\Gamma_1$. The rest of the argument involves a calculation of this KL-Divergence and a selection of $\delta$ to ensure a non-vanishing error probability. 

While our lower bound clearly indicates that the exponential dependence on $\beta$ as exhibited by $p^{-\beta}$ is necessary, a notable difference between our lower and upper bounds is the dependence on the graph degree $d$. Recall that the upper bound on the variance of our SNIPE estimator in \Cref{thm:est_bias_var} scales roughly as $d^{\beta+2}$; however, our lower bound result has no dependence on $d$. We attribute this gap to a weakness in the analyses and believe that it can be tightened with more careful calculation. In the upper bound, the $d^\beta$ arises as a bound on the sum of binomial coefficients of the form $\binom{d}{k}$ for $k \leq \beta$. While this bound is precise asymptotically for a regime where $\beta$ is a small constant and $d$ is large, it is loose in the most general case where $\beta = d$, where we know $2^d$ is sufficient. On the other hand, our lower bound argument considers a setting with a $d$-regular graph, but it only uses the degree of the graph to normalize the parameters on the prior distributions, such that we see no dependence on $d$ in the final lower bound.
}

\subsection{Central Limit Theorem}
\label{sec:clt}

\mcedit{In this section, we show $\widehat{\TTE}_\text{SNIPE}$ is asymptotically normal using Stein's method, a common approach to proving central limit theorems \cite{AronowSamii17, chin2018central, Leung2020, harshaw2023design, ogburn2022causal}.} In particular, we use Theorem 3.6 from Ross~\cite{ross2011fundamentals}, which we restate below for convenience.

\textbf{\cite{ross2011fundamentals} Theorem 3.6.} 
Let $X_1,\ldots,X_n$ be random variables such that $\E[X_i^4] < \infty$, $\E[X_i]=0$, $\nu^2 = \Var \big(\sum_i X_i\big)$, and define $W = \sum_i X_i/\nu.$ Let collection $(X_1, \ldots, X_n)$ have dependency neighborhoods $D_i$ for $i\in [n]$, and also define $D:=\max\limits_{1 \leq i \leq n}|D_i|.$ Then, for $Z$ a standard normal random variable,
\begin{align} \label{eq:d_W_bd}
d_\text{W}(W,Z) \leq \frac{D^2}{\nu^3} \sum_{i=1}^n \E|X_i|^3 + \frac{\sqrt{28}D^{3/2}}{\sqrt{\pi}\nu^2} \sqrt{\sum_{i=1}^n \E[X_i^4]},
\end{align}
where $d_\text{W}(W,Z)$ is the Wasserstein distance between $W$ and $Z$.

Our estimator can be written in the form of $\widehat{\TTE} = \frac{1}{n} \sum_{i \in [n]} Y_i \,  w_i(\bz)$, where $\Yobs{i}$ and $w_i(\bz)
$ depend only on the treatment assignments of individuals $j \in \cN_i$.
We apply Theorem 3.6 from \cite{ross2011fundamentals}, to the random variables $X_i := \tfrac{1}{n} (\Yobs{i}w_i(\bz) - \E\big[\Yobs{i}w_i(\bz)\big])$, such that by construction $W = \sum_i X_i/\nu = (\widehat{\TTE} - \TTE) / \nu$ and $\widehat{\TTE} = \nu W + \TTE$. An application of Theorem 3.6 from \cite{ross2011fundamentals} implies that $\widehat{\TTE}$ is asymptotically normal as long as the bound in \eqref{eq:d_W_bd} converges to 0 with $n$. 
As the size of the dependency neighborhoods $D$, the moments of $X_i$, and the scaling of the variance $\nu$ all depend on the network parameters, for simplicity we additionally assume in the conditions of Theorem \ref{thm:clt} that $d$, $Y_{\max}$, and $\beta$ are bounded above by a constant, and we also assume the treatment probability $p$ is also lower bounded as a function of $n$.

\mcedit{
\begin{assumption}
\label{assp:nondegeneracy}
For some constant $c > 0$, we have $n \cdot \emph{\Var}\big[\widehat{\TTE}\big] \to c$ as $n\to \infty$.
\end{assumption}
This is a typical assumption made in the literature (see \cite{AronowSamii17}, \cite{leung2022rate}, \cite{harshaw2023design}) that rules out degenerate cases, such as all the potential outcomes being $0$, that would result in the estimator having an unnaturally low variance scaling smaller than $1/n$. 
}

\mcedit{\begin{theorem}[CLT] \label{thm:clt}
Under Assumptions \ref{assp:neighborhood_interference} - \ref{assp:nondegeneracy}, and assuming that
$d$, $Y_{\max}$, and $\beta$ are all $O(1)$ with respect to $n$, $p = \omega(n^{-1/4\beta})$, and $p_i \leq \frac12$ for all $i$, the normalized error $\big(\widehat{\TTE} - \TTE\big)/\nu$ converges in distribution to a standard normal random variable as $n \to \infty$, for $\nu^2 = \Var[\widehat{\TTE}]$.
\end{theorem}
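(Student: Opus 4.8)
The plan is to apply Ross's Theorem 3.6 directly to the random variables $X_i := \tfrac{1}{n}\big(Y_i w_i(\bz) - \E[Y_i w_i(\bz)]\big)$, so that $W = \sum_i X_i / \nu = (\widehat{\TTE} - \TTE)/\nu$, and then show that the right-hand side of \eqref{eq:d_W_bd} vanishes as $n \to \infty$. Since Wasserstein convergence to a standard normal implies convergence in distribution, this suffices. Four ingredients are needed: (i) a valid choice of dependency neighborhoods $D_i$ and a bound on $D := \max_i |D_i|$; (ii) uniform bounds on $\E|X_i|^3$ and $\E[X_i^4]$; (iii) a lower bound on $\nu$; and (iv) combining these to show both terms in \eqref{eq:d_W_bd} are $o(1)$ under the stated hypotheses ($d, Y_{\max}, \beta = O(1)$, $p = \omega(n^{-1/4\beta})$, $p_i \le \tfrac12$, and the nondegeneracy Assumption~\ref{assp:nondegeneracy}).

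For (i): $X_i$ depends on $\bz$ only through the coordinates $\{z_j : j \in \cN_i\}$, so $X_i$ and $X_k$ are independent whenever $\cN_i \cap \cN_k = \varnothing$. Hence we may take $D_i = \{k : \cN_k \cap \cN_i \ne \varnothing\}$; since $|\cN_i| \le \din$ and each $j \in \cN_i$ lies in at most $\dout$ in-neighborhoods, $|D_i| \le \din \dout \le d^2$, which is $O(1)$ by assumption, giving $D = O(1)$. For (ii): $|Y_i| \le Y_{\max}$, and each factor $\tfrac{z_j - p_j}{p_j(1-p_j)}$ has magnitude at most $\tfrac{1}{p(1-p)} \le \tfrac{1}{p/2} = \tfrac 2p$ (using $p_i \in [p, 1-p]$ and $p \le \tfrac12$); the coefficient $g(\cS)$ is bounded by a constant, and $|\cS_i^\beta| = O(d^\beta) = O(1)$, so $|w_i(\bz)| = O(p^{-\beta})$. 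Thus $|X_i| = O\big(n^{-1} p^{-\beta}\big)$ deterministically, giving $\E|X_i|^3 = O(n^{-3} p^{-3\beta})$ and $\E[X_i^4] = O(n^{-4} p^{-4\beta})$, each uniformly over $i$. (These crude deterministic bounds suffice; a sharper argument using $\E[(z_j - p_j)^2] = p_j(1-p_j)$ would improve the power of $p$ but is not needed.) For (iii): Assumption~\ref{assp:nondegeneracy} gives $\nu^2 = \Var[\widehat{\TTE}] = \Theta(1/n)$, so $\nu = \Theta(n^{-1/2})$.

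Plugging these into \eqref{eq:d_W_bd}: the first term is
\[
    \frac{D^2}{\nu^3} \sum_{i=1}^n \E|X_i|^3 = O\!\left( \frac{n \cdot n^{-3} p^{-3\beta}}{n^{-3/2}} \right) = O\big(n^{-1/2} p^{-3\beta}\big),
\]
and the second term is
\[
    \frac{\sqrt{28}\, D^{3/2}}{\sqrt{\pi}\, \nu^2} \sqrt{\sum_{i=1}^n \E[X_i^4]} = O\!\left( \frac{\sqrt{n \cdot n^{-4} p^{-4\beta}}}{n^{-1}} \right) = O\big(n^{-1/2} p^{-2\beta}\big).
\]
Both are $o(1)$ precisely when $p^{-\beta} = o(n^{1/6})$, and in particular when $p^{-\beta} = o(n^{1/4})$, i.e. $p = \omega(n^{-1/(4\beta)})$, which is the stated hypothesis; so $d_{\text{W}}(W, Z) \to 0$ and $W \Rightarrow Z$. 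The main obstacle is really ingredient (iii): the variance lower bound. I would lean on Assumption~\ref{assp:nondegeneracy} to supply it directly rather than attempting to prove $\nu^2 = \Omega(1/n)$ from scratch, since in full generality the $X_i$ need not be positively correlated and cancellation could in principle shrink the variance — this is exactly why the paper introduces that nondegeneracy assumption, mirroring standard practice (e.g. Aronow–Samii). A secondary point to check carefully is that the deterministic moment bounds in (ii) are genuinely uniform in $i$ and that the constants hidden in $O(\cdot)$ do not depend on $n$ (they depend only on the $O(1)$ quantities $d$, $Y_{\max}$, $\beta$), so that the limit can be taken cleanly.
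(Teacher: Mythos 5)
Your setup is exactly the paper's: the same $X_i$, the same dependency neighborhoods with $D \le \din\dout$, the same crude deterministic bounds $|w_i(\bz)| = O(p^{-\beta})$, and the same use of Assumption~\ref{assp:nondegeneracy} to get $\nu^2 = \Theta(1/n)$. However, your final asymptotic deduction contains a genuine error: you need both $n^{-1/2}p^{-3\beta} \to 0$ and $n^{-1/2}p^{-2\beta}\to 0$, i.e. $p^{-\beta} = o(n^{1/6})$, and you claim this holds "in particular when $p^{-\beta} = o(n^{1/4})$." The implication runs the other way: $o(n^{1/6})$ implies $o(n^{1/4})$, not conversely. Under the stated hypothesis $p = \omega(n^{-1/(4\beta)})$ one only gets $p^{-\beta} = o(n^{1/4})$, and the first Ross term $n^{-1/2}p^{-3\beta}$ can diverge (take $\beta = 1$, $p = n^{-1/5}$, which gives $n^{1/10}$). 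So with the crude moment bounds your argument only proves the theorem under the stronger condition $p = \omega(n^{-1/(6\beta)})$, not the stated one.

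The fix is precisely the sharper moment bound you wave off as unnecessary: bound $\E|X_i|^3 \le \|X_i\|_\infty\,\E[X_i^2]$ and $\E[X_i^4] \le \|X_i\|_\infty^2\,\E[X_i^2]$, using $\E[w_i(\bz)^2] = \sum_{\cS \in \cS_i^\beta} g(\cS)^2 \prod_{j \in \cS}\tfrac{1}{p_j(1-p_j)} = O\big((d/p)^{\beta}\big)$ rather than the worst-case $(d/p)^{2\beta}$. This yields $\E|X_i|^3 = O(n^{-3}p^{-2\beta})$ and $\E[X_i^4] = O(n^{-4}p^{-3\beta})$, so the two terms in \eqref{eq:d_W_bd} become $O\big((np^{4\beta})^{-1/2}\big)$ and $O\big((np^{3\beta})^{-1/2}\big)$, both of which vanish exactly when $p = \omega(n^{-1/(4\beta)})$ — matching the theorem's hypothesis. (For what it's worth, the paper's own appendix also uses the crude bounds and ends with the same $n^{-1/2}p^{-3\beta}$ expression, so you have faithfully reproduced its route, including its loose last step; but to prove the theorem as stated you need either the refined second-moment argument above or a strengthened assumption on $p$.)
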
}

Theorem \ref{thm:clt} states that our estimator is asymptotically normal, 
assuming the boundedness of the magnitude of the potential outcomes, the polynomial degree, and the network degree. \mcedit{For the complete proof, refer to Appendix \ref{sec:CLT_proof}.}
The proof follows from a straightforward application of Theorem 3.6 from \cite{ross2011fundamentals} with appropriate bounds for the moments $\E|X_i|^3$ and $\E[X_i^4]$, the dependency neighborhood size $D$, and the variance $\nu$.
The dependency neighborhood for variable $X_i$ is the set $\{j \in [n] ~\text{s.t.}~ \cN_i \cap \cN_j \neq \emptyset\}$, i.e. the set of individuals $j$ that share in-neighbors with individual $i$. This follows from the observation that $X_i$ and $X_j$ are both are a function of the treatment variables $z_k$ for shared in-neighbors $k \in \cN_i \cap \cN_j$. 
The size of the largest dependency neighborhood is thus bounded by $D \leq \din \dout$. The moments $\E|X_i|^3$, $\E[X_i^4]$ will be bounded as a function of $Y_{\max}$ as defined in \eqref{eq:Y_max} along with the network degree $d$, treatment probability $p$, and the polynomial degree $\beta$. 
The boundedness assumptions are used to argue that these moments do not grow too quickly in $n$, so that the Wasserstein distance in \eqref{eq:d_W_bd} converges to 0. The details of these calculations are deferred to \Cref{sec:CLT_proof}.
We remark that the strict boundedness assumption can be relaxed so that $d$, $Y_{\max}$, and $\beta$ are polylogarithmic with respect to $n$, so long as we tighten the lower bound on the growth of $p$ by a corresponding logarithmic factor.

\mcedit{\subsection{Variance Estimator} \label{ssec:variance_est}}

The central limit theorem result in Theorem \ref{thm:clt} implies that we can construct asymptotically valid confidence intervals by $ [\widehat{\TTE}- \nu \Phi^{-1}(1-\alpha), \widehat{\TTE} + \nu \Phi^{-1}(1-\alpha)]$ if we knew $\nu$, where $\Phi$ indicates the cdf of a standard normal. 
\mcreplace{While we are not able to construct a consistent estimator for $\nu$, we can construct conservative confidence intervals by substituting $\nu$ with a conservative bound on the variance as stated in Theorem \ref{thm:est_bias_var},
\begin{align*}
\nu = \sqrt{\Var[\widehat{\TTE}]} \leq \sqrt{\frac{\din \; \dout \; {Y_{\max}}^2}{n} \cdot \Big( \frac{\din^2}{p(1-p)} \Big)^{\beta}
        \mcdelete{+\frac{\sigma^2}{n} \left(\frac{\din}{p(1-p)}\right)^{\beta}}}.
\end{align*}
An open question for future work is to obtain tighter estimators of the variance $\nu$ to strengthen the constructed confidence intervals.}{Since the practitioner typically does not know $\nu$, we construct a conservative estimator for $\nu$ by applying the approach from Aronow and Samii \cite{aronow2013conservative, AronowSamii17}.}
\mcedit{We begin by rewriting the SNIPE estimator \mereplace{. Denote $\tau(\bz) = \{j ~\text{s.t.} \ z_j = 1\}$ and note that we can write}{as a sum over possible exposures $\bx$:
\[
    \widehat{\TTE} 
    = \tfrac{1}{n} \sum_{i \in [n]} Y_i(\bz) w_i(\bz) 
    = \tfrac{1}{n} \sum_{i \in [n]} \sum_{\bx \in \{0,1\}^{|\cN_i|}} \Ind(\bz_{\cN_i} = \bx) Y_i(\bx) w_i(\bx). 
\]}
Note that both $Y_i(\bx)$ and $w_i(\bx)$ are deterministic, and the only randomness is expressed in the indicator function $\Ind(\bz_{\cN_i} = \bx)$.
Overloading notation, we let $Y_i$ and $w_i$ be expressed only as a function of $\bz_{\cN_i}$, where we assume the order $\beta$ is clearly specified. Then, the variance of our estimator is given by 
\begin{align*}
\Var[\widehat{\TTE}] &=
\frac{1}{n^2}\sum_{i,j\in[n]} \sum_{\bx \in \{0,1\}^{|\cN_i|}} \sum_{\bx' \in \{0,1\}^{|\cN_j|}} Y_i(\bx) w_i(\bx) \ Y_j(\bx') w_j(\bx') \ \Cov\big(\Ind(\bz_{\cN_i} = \bx), \Ind(\bz_{\cN_j} = \bx')\big).
\end{align*}
}
\mcedit{If $\Prob(\{\bz_{\cN_i} = \bx\} \cap \{\bz_{\cN_j} = \bx'\}) > 0$, \cyreplace{we can approximate that term in the variance expression by}{an unbiased estimate for the corresponding term in the variance expression is given by} 
\[\frac{\Ind(\bz_{\cN_i} = \bx)\Ind(\bz_{\cN_j} = \bx')}{\Prob(\{\bz_{\cN_i} = \bx\} \cap \{\bz_{\cN_j} = \bx'\})} Y_i(\bx) w_i(\bx) Y_j(\bx') w_j(\bx')  \Cov(\Ind(\bz_{\cN_i} = \bx), \Ind(\bz_{\cN_j} = \bx')).\]
Otherwise, if $\Prob(\{\bz_{\cN_i} = \bx\} \cap \{\bz_{\cN_j} = \bx'\}) = 0$,  by the same technique as \cite{aronow2013conservative}, \cyedit{using the observation that $\Cov(\Ind(\bz_{\cN_i} = \bx), \Ind(\bz_{\cN_j} = \bx')) = \Prob(\bz_{\cN_i} = \bx)\Prob(\bz_{\cN_j} = \bx')$,} the corresponding term in the variance expression can be approximated by the inflated estimate
\[\frac12 \Big(\Ind(\bz_{\cN_i} = \bx) \Prob(\bz_{\cN_i} = \bx) Y_i^2(\bx) w_i^2(\bx)
+ \Ind(\bz_{\cN_j} = \bx') \Prob(\bz_{\cN_j} = \bx') Y_j^2(\bx') w_j^2(\bx')
\Big).\]
\cyedit{The expectation of this estimate will be higher than the true term in the variance via Cauchy-Schwarz inequality.}
Therefore, a conservative variance estimator $\widehat{\Var}(\widehat{\TTE}_\text{SNIPE})$ is given by
\begin{align*}
&\frac{1}{n^2}\sum_i \sum_{j \in \cM_i} \sum_{\bx \in \{0,1\}^{|\cN_i \cup \cN_j|}} \frac{\Ind(\bz_{\cN_i \cup \cN_j} = \bx)}{\Prob(\bz_{\cN_i \cup \cN_j} = \bx)} Y_i(\bx) w_i(\bx) Y_j(\bx) w_j(\bx) \; \Cov\Big(\Ind\big(\bz_{\cN_i} = \bx_{\cN_i}\big), \Ind\big(\bz_{\cN_j} = \bx_{\cN_j}\big) \Big) \\
&\quad \hspace{30mm} + \ \frac{1}{n^2}\sum_i \sum_{\bx \in \{0,1\}^{|\cN_i|}} \Ind(\bz_{\cN_i} = \bx) \Prob(\bz_{\cN_i} = \bx) Y_i^2(\bx) w_i^2(\bx) \sum_{j \in \cM_i} \left(2^{|\cN_j|} - 2^{|\cN_j \setminus \cN_i|}\right),
\end{align*}
where 
\[
\Cov\Big(\Ind\big(\bz_{\cN_i} = \bx_{\cN_i}\big), \Ind\big(\bz_{\cN_j} = \bx_{\cN_j}\big) \Big) = 
\prod_{k\in\cN_i \cup \cN_j} p_k^{x_k} (1-p_k)^{1-x_k}
\left(1 - \prod_{k'\in\cN_i \cap \cN_j} p_{k'}^{x_{k'}} (1-p_{k'})^{1-x_{k'}}\right).
\]
In the Section \ref{sec:experiments}, we compare the empirical variance of the SNIPE estimator $\widehat{\TTE}$ with this conservative variance estimate $\widehat{\Var}(\widehat{\TTE})$ in simulations. We also compare against the worst-case variance bound from Theorem \ref{thm:est_bias_var}. Most notable is that both the conservative variance estimate and the worst-case variance bound are orders of magnitude larger than the empirical variance, suggesting that there is significant work needed to develop tighter variance estimates.}

\section{Experimental Results} \label{sec:experiments}

Using computational experiments on simulated data, we compare the performance of our estimator with existing estimators. Using an Erd\H os-R\'enyi model, we generate random directed graphs of $n$ nodes for a population of $n$ individuals. 
Figure \ref{fig:nonuniform_resultsER} shows results from networks made using the Erd\H os-R\'enyi model with $n$ nodes and probability $p_{\text{edge}} = 10/n$ of an edge existing between any two nodes. Hence, the expected in-degree and out-degree of each node is $10$. For degree $\beta,$ we construct the same potential outcomes model as in \cite{YuCortezEichhorn22}:
\begin{equation}
    Y_i(\bz) = c_{i,\emptyset} + \sum_{j\in \cN_i} \tilde{c}_{ij}z_j + \sum_{\ell = 2}^{\beta} \left( \frac{\sum_{j \in \cN_i}\tilde{c}_{ij}z_j}{\sum_{j \in \cN_i}\tilde{c}_{ij}} \right)^{\ell},
\end{equation}
where $c_{i,\emptyset} \sim U[0,1]$, $\tilde{c}_{ii} \sim U[0,1]$, and for $i \neq j$, $\tilde{c}_{ij} = v_j |\cN_i|/\sum_{k: (k,j) \in E} |\cN_k|$ for $v_j \sim U[0,r]$, where $r$ denotes a hyperparameter that governs the magnitude of the network effects relative to the direct effects. 
We represent the magnitude of individual $j$'s influence by the parameter $v_j$. This influence is shared among individual $j$'s out-neighbors proportional to their in-degrees.


\subsection{Other Estimators}
\label{subsec:otherestimators}
We compare the performance of SNIPE with the performance of least-squares regression and difference-in-means estimators, also as in \cite{YuCortezEichhorn22}. Although the Horvitz-Thompson estimator is unbiased in this setting, under unit Bernoulli randomized design its variance is very high in practice. Thus, we omit this estimator from our experiemental results and comparison. Another related estimator is the H\'ajek estimator, which is only approximately unbiased but with lower variance than the Horvitz-Thompson estimator. However, under our randomized design, its variance is still very high in practice. Additionally, as we implemented a Bernoulli randomized design on a graph with expected network degree of 10, both the H\'ajek and Horvitz-Thompson estimators consistently took a value of $0$, giving us no meaningful results to consider. For these reasons, we chose to omit these two estimators from our experimental results and comparison.
The simplest difference-in-means estimator is the difference between the average outcome of individuals assigned to treatment and the average outcome of individuals assigned to control, given by
\begin{equation}
    \widehat{\TTE}_{\text{DM}} = \frac{\sum_{i\in[n]}z_i Y_i}{\sum_{i\in[n]}z_i} - \frac{\sum_{i\in[n]}(1-z_i) Y_i}{\sum_{i\in[n]}(1-z_i)}. \label{eq:DiffMeans-Stnd}
\end{equation}
This estimator does not take into account any information about each individual's neighborhood and is biased under network interference. We also consider a modified version of this estimator that uses information about the number of treated neighbors of each individual. Let $U_i$ denote the number of individuals in $\cN_i \setminus \{i\}$ assigned to treatment, and let $\tilde{U}_i$ denote the number of neighbors individuals in $\cN_i \setminus \{i\}$ assigned to control. Then, the estimator is given by
\begin{equation}
    \widehat{\TTE}_{\text{DM}(\lambda)} = \frac{\sum_{i\in[n]}z_i \Ind(U_i \geq \lambda) Y_i}{\sum_{i\in[n]} z_i \Ind(U_i \geq \lambda) } - \frac{\sum_{i\in[n]}(1-z_i)\Ind(\tilde{U}_i \geq \lambda) Y_i}{\sum_{i\in[n]}(1-z_i)\Ind(\tilde{U}_i \geq \lambda)}, \label{eq:DiffMeans-Thresh}
\end{equation}
for some user-defined tolerance $\lambda \in [0,1]$. We set $\lambda=0.75$ for our experiments. Note that $\widehat{\TTE}_{\text{DM}(\lambda)}$ counts an individual $i$'s outcome only when at least $\lambda$ of their neighborhood is assigned to the same treatment as them.

We also compare with least-squares regression models of degree $\beta$ which assume the potential outcomes model is given by 
\begin{equation} \label{eq:LS-Prop-model}
    Y_i(\bz) = g(z_i, \bar{z}_i) = \Big(\rho + \textstyle\sum_{k=1}^{\beta} \gamma_k \, X_i^k\Big) + z_i \Big(\tilde{\rho} + \textstyle\sum_{k=1}^{\beta-1} \tilde{\gamma}_k \, X_i^k\Big),
\end{equation}
for some covariate $X_i$. We consider two variations. In the first, we set $X_i$ equal to the number of treated neighbors. In the second, we let $X_i$ equal the proportion of treated neighbors. In both cases, we do not include $i$ in its neighborhood. The two sets of coefficients $(\rho, \gamma_1, \dots \gamma_{\beta})$ and $(\tilde{\rho}, \tilde{\gamma}_1, \dots \tilde{\gamma}_{\beta})$ allow for the model to be different when $i$ is treated vs not treated, and since we only allow up to degree $\beta$ interactions, the second summation stops at $\beta-1$. Overall, there are $2\beta+1$ coefficients in the model. Using least-squares regression, we determine the set of coefficients minimizing the least-squares predictive error on the data set $\{z_i, X_i, Y_i(\bz)\}_{i \in [n]}$. These coefficients define an estimate for the function $\hat{g}$ in Equation \ref{eq:LS-Prop-model}. When $X_i = U_i$, the number of treated neighbors, the estimate is given by
\begin{equation}
    \widehat{\TTE}_{\text{LS-Num}} = \tfrac{1}{n} \textstyle\sum_{i=1}^n (\hat{g}(1,|\cN_i|-1) - \hat{g}(0,0)).
\end{equation}
When we set $X_i = U_i/(|\cN_i| - 1)$, the proportion of treated neighbors, we have
\begin{equation}
    \widehat{\TTE}_{\text{LS-Prop}} = \tfrac{1}{n} \textstyle\sum_{i=1}^n (\hat{g}(1,1) - \hat{g}(0,0)).
\end{equation}

\subsection{Results and Discussion}

For each population size $n$, we sample $G$ networks from the Erd\H os-R\'enyi model described previously. For every configuration of parameters in the experiment, we sample $N$ treatment assignment vectors $\bz_1,\ldots,\bz_N$ from a uniform Bernoulli distribution with treatment probability $p$ to compute the TTE using each estimator. Each plot we include also shows the relative bias of the TTE estimates, averaged over the results from these $GN$ samples and normalized by the magnitude, for each estimator. The width of the shading around each line in the plots shows the standard deviation across the $GN$ estimates. For our experiments\footnote{The Python scripts for the experiments and the data used in our results are available at: \url{https://github.com/mayscortez/low-order-unitRD}}, we chose $G=10$ and $N=500$. 

\begin{figure}[t]
    \centering
    \begin{subfigure}[b]{0.32\textwidth}
        \centering
        \includegraphics[width=\textwidth]{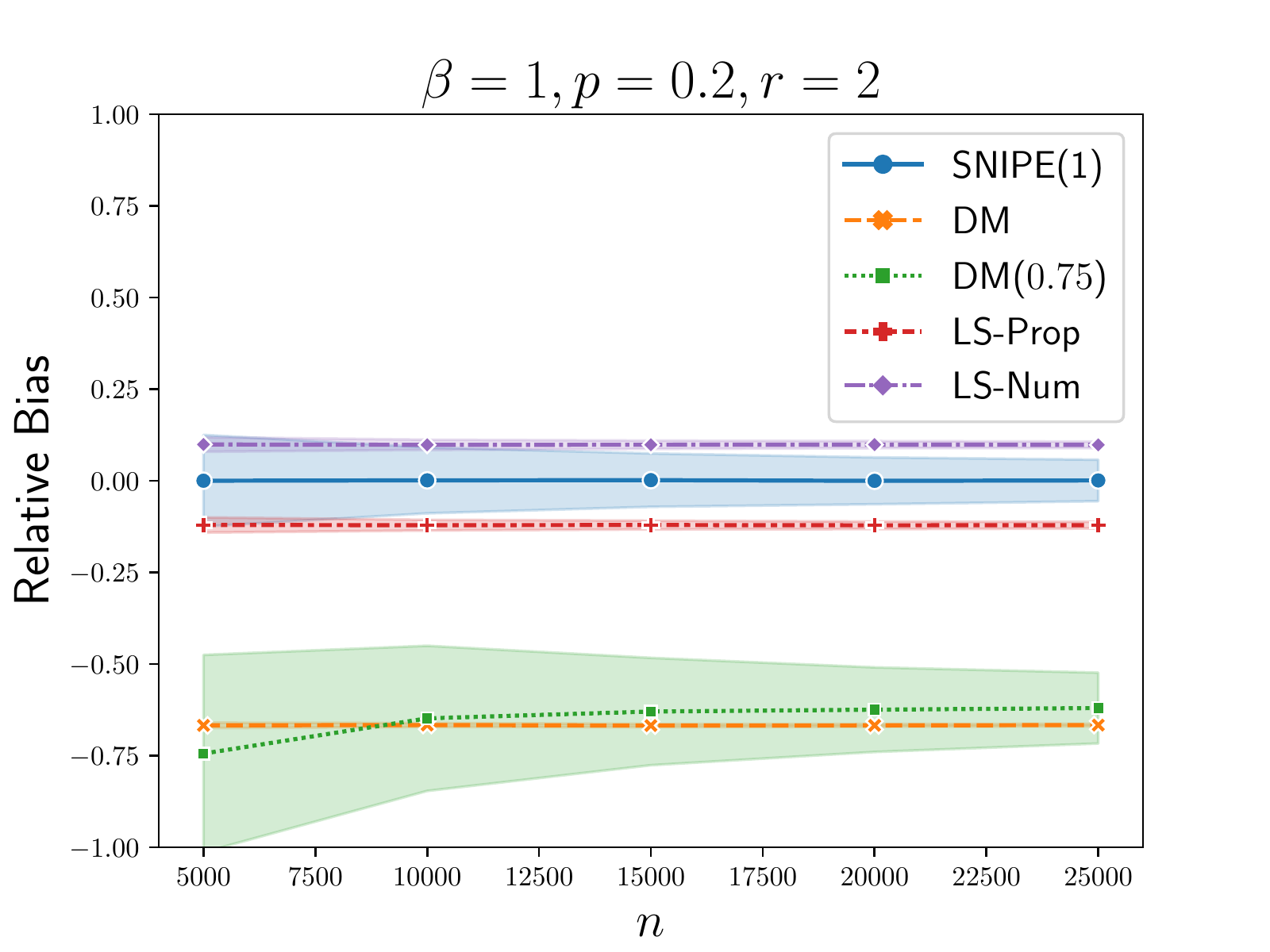}
        \includegraphics[width=\textwidth]{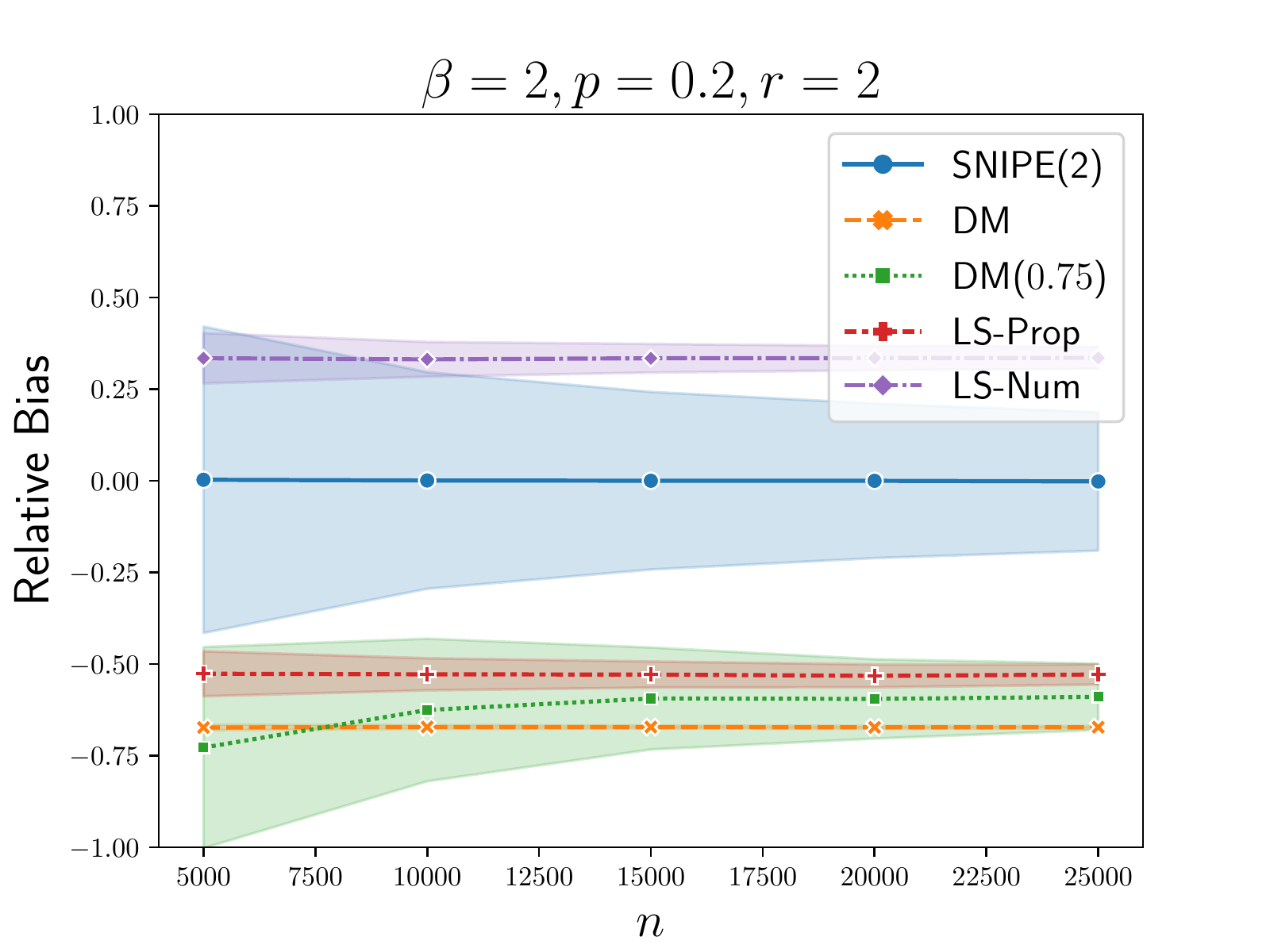}
        \caption{Varying population size}  \label{fig:sizeER}
    \end{subfigure}
    \begin{subfigure}[b]{0.32\textwidth}
        \centering
        \includegraphics[width=\textwidth]{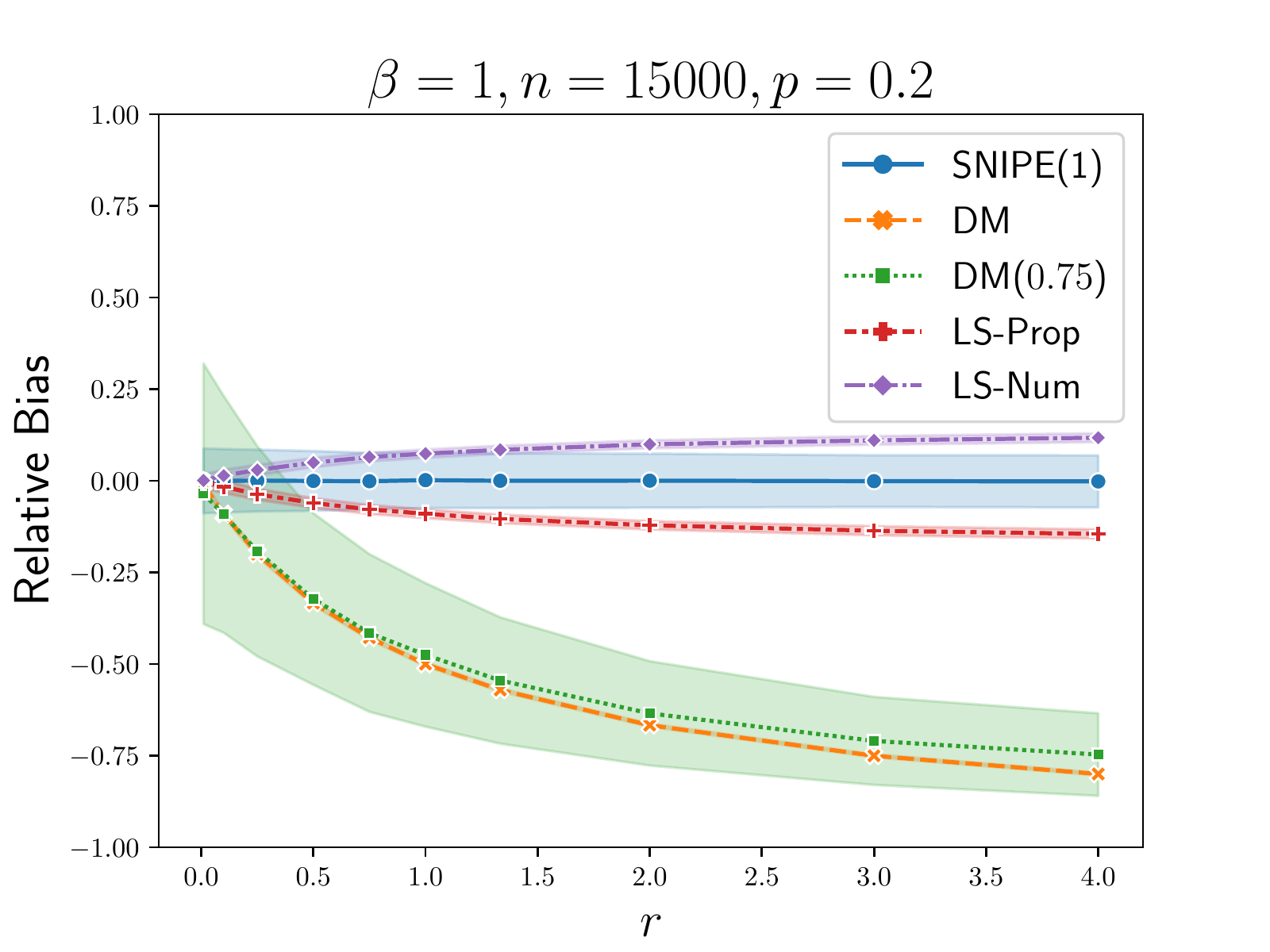}
        \includegraphics[width=\textwidth]{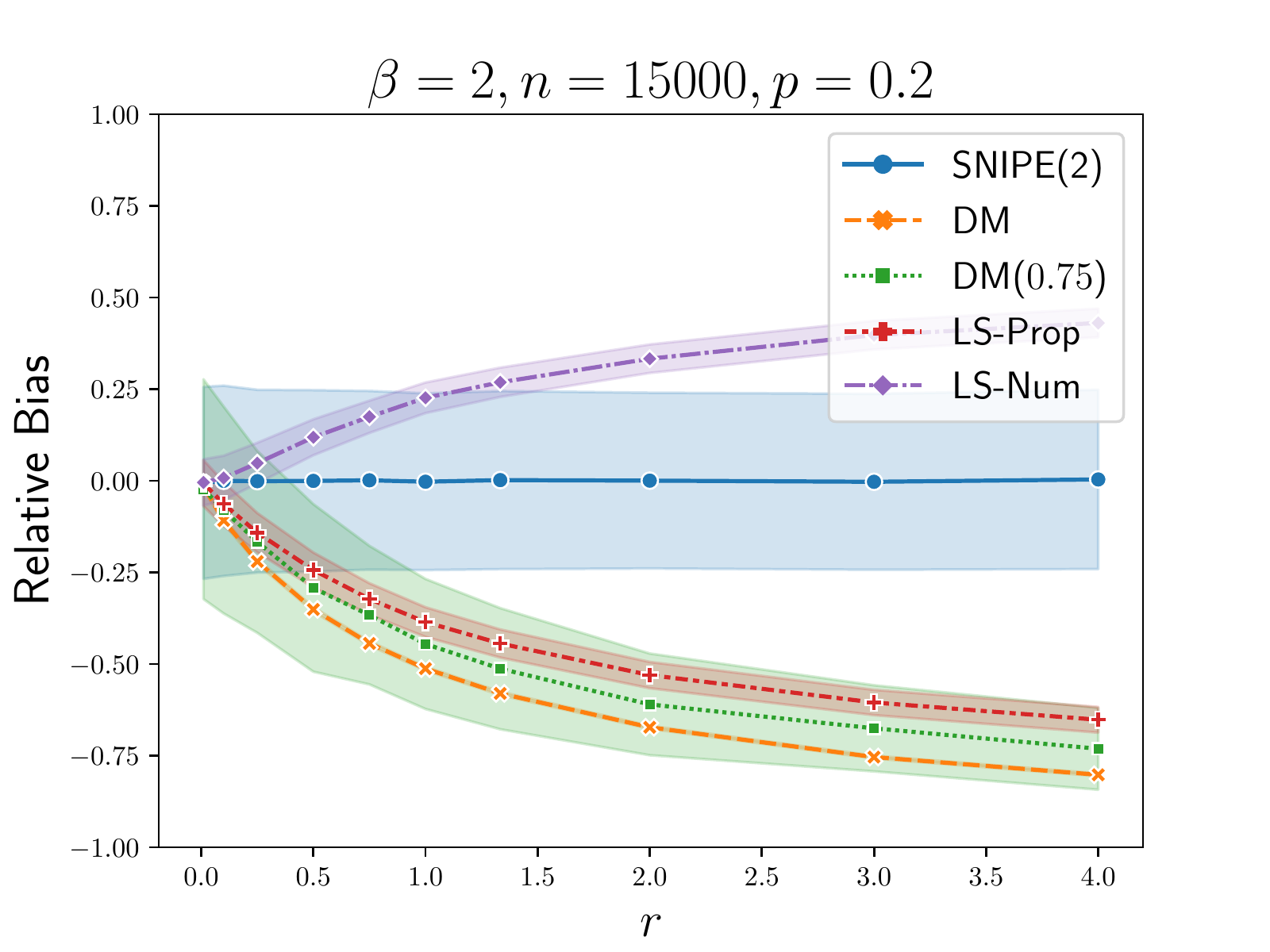}
        \caption{Varying direct:indirect effects}  \label{fig:ratioER}
    \end{subfigure}
    \begin{subfigure}[b]{0.32\textwidth}
        \centering
        \includegraphics[width=\textwidth]{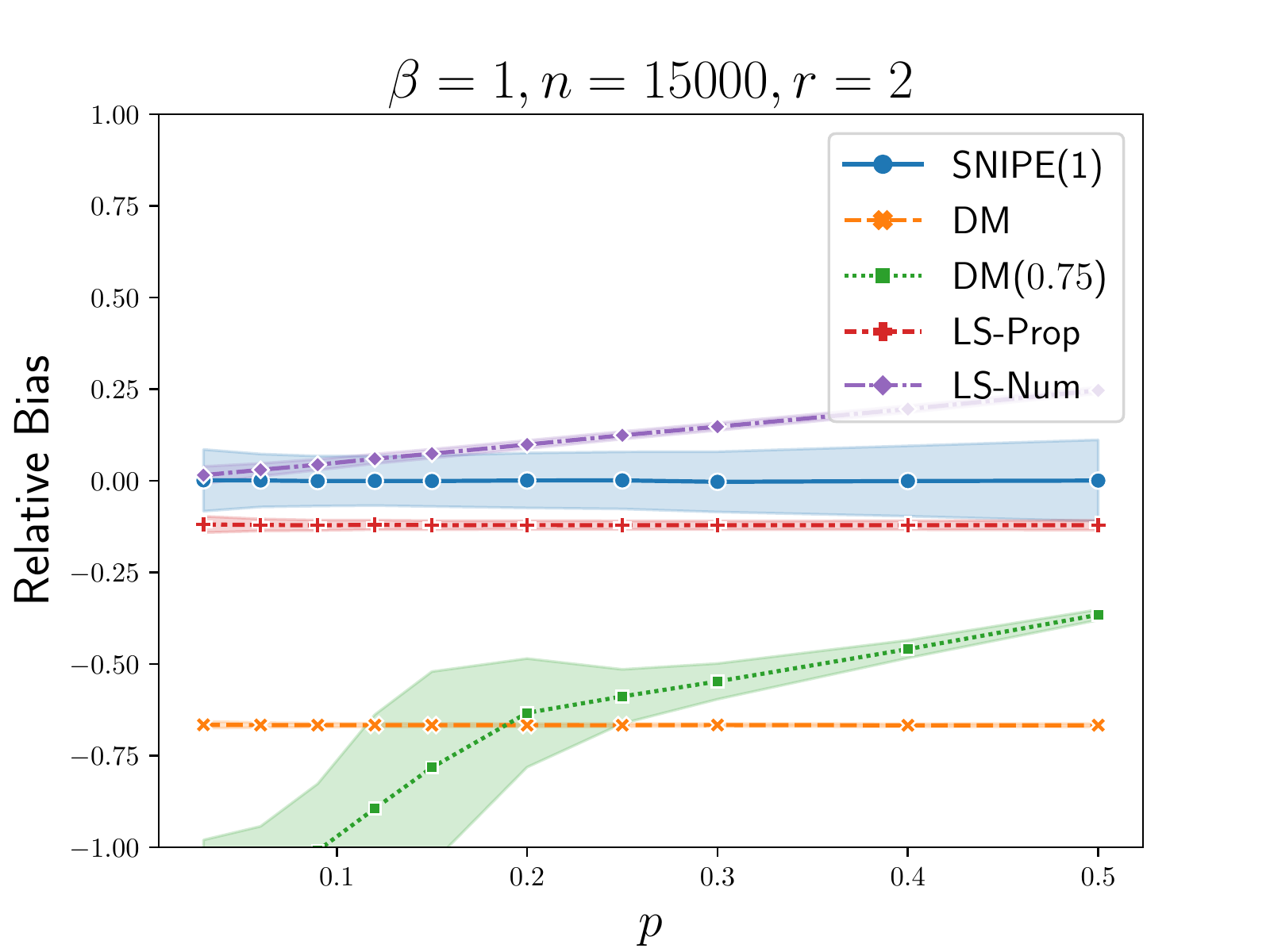}
        \includegraphics[width=\textwidth]{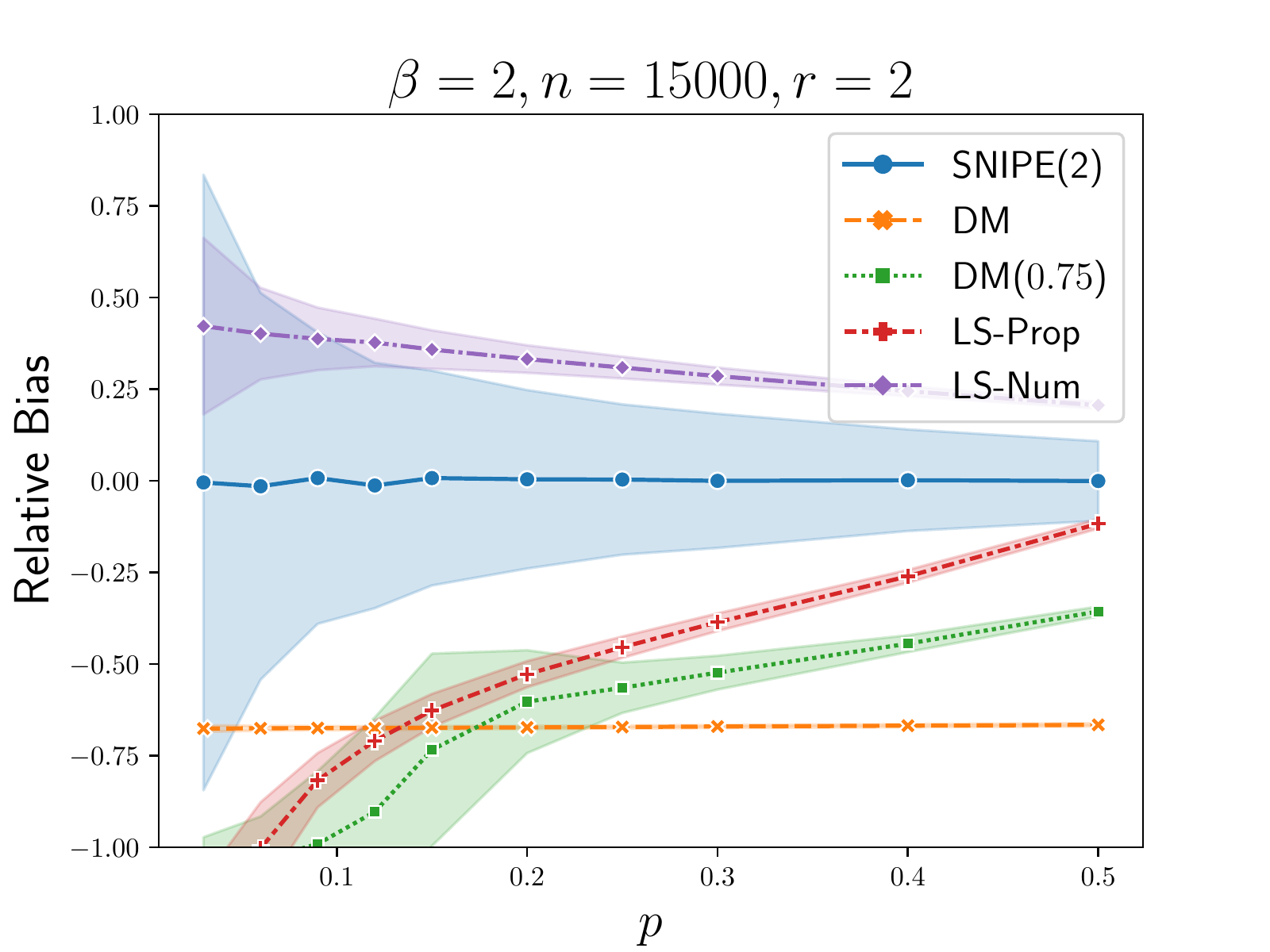}
        \caption{Varying treatment budget}  \label{fig:pER}
    \end{subfigure}
        \caption{Plots visualizing the performance of various $\TTE$ estimators under Bernoulli design on Erd\H os-R\'enyi networks for both linear and quadratic potential outcomes models. The height of each line on a plot depicts the experimental relative bias of the estimator and the shaded width depicts the experimental standard deviation. The SNIPE estimator is parametrized by $\beta$, the degree of the potential outcomes model.}\label{fig:nonuniform_resultsER}
\end{figure}

Figures \ref{fig:nonuniform_resultsER} and \ref{fig:nonuniform_resultsER_MSE} visualize the effects of various network or estimator parameters on the performance of each of the four TTE estimators described in Section \ref{subsec:otherestimators} and $\widehat{\TTE}_{\textrm{SNIPE}(\beta)}$, all under Bernoulli randomized design. In particular, we consider the effects of the population size ($n$), the treatment budget ($p$), the ratio between the network and direct effects ($r$), and the degree of the potential outcomes model ($\beta$). We list specific values for the parameters above each plot. Figure \ref{fig:nonuniform_resultsER} shows the bias and empirical standard deviation of each estimator, where the values are all normalized by the magnitude of the true TTE. Figure \ref{fig:nonuniform_resultsER_MSE} plots the empirical mean squared error (MSE) of each estimator, also normalized by the magnitude of the true TTE. The normalization can alternately be viewed as standardizing all models so that the ground truth TTE is 1.

The top row of plots in Figure \ref{fig:nonuniform_resultsER} features results for a linear ($\beta=1$) potential outcomes model while the bottom row shows results for a quadratic ($\beta=2$) potential outcomes model. As expected, the SNIPE estimator, shown in blue, has no relative bias and its variance decreases as $n$ increases.
With the exception of the modified difference-in-means estimator $\widehat{\TTE}_{\text{DM($0.75$)}}$ in green, the variances of the other estimators are lower than ours. However, the biases of the other estimators are larger than the standard deviation of our unbiased estimator overall. Moreover, as $r$ increases, the networks effects are more significant than the direct effects and we see the biases of the other estimators grow larger. Note that the variance of our estimator remains relatively constant as $r$ varies. When $r$ is close to $0$, there are essentially no network effects, SUTVA holds and as expected, all the estimators are unbiased.

Figure \ref{fig:nonuniform_resultsER_MSE} shows that for many parameter combinations the MSE of our estimator is lower than the other estimators; this is particularly the case for sufficiently large population sizes (large $n$) and sufficiently significant relative network effects (large $r$). In the top row of plots, corresponding to $\beta=1$, the difference in means estimators perform poorly relative to the other estimators so that they are beyond the upper limit of the displayed plot. While the performance of the least squares estimators and our estimator is comparable for $\beta=1$, the MSE of our estimator is solely due to variance, which will decrease with large $n$, yet the MSE of the least squares estimators is largely due to its bias, which will not decrease with large $n$, highlighting that they are not consistent estimators for our heterogeneous model.

\begin{figure}[t]
    \centering
    \begin{subfigure}[b]{0.32\textwidth}
        \centering
        \includegraphics[width=\textwidth]{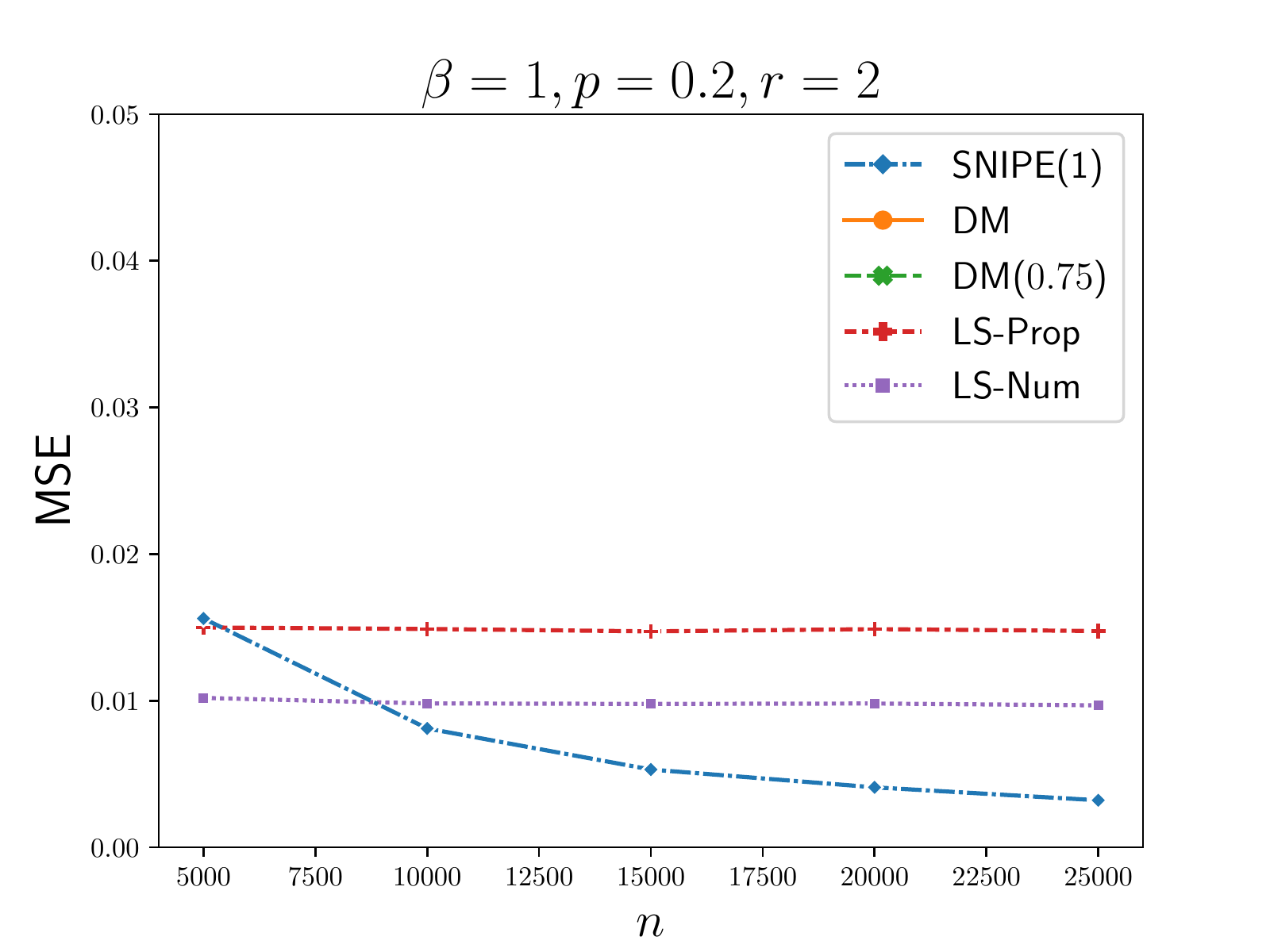}
        \includegraphics[width=\textwidth]{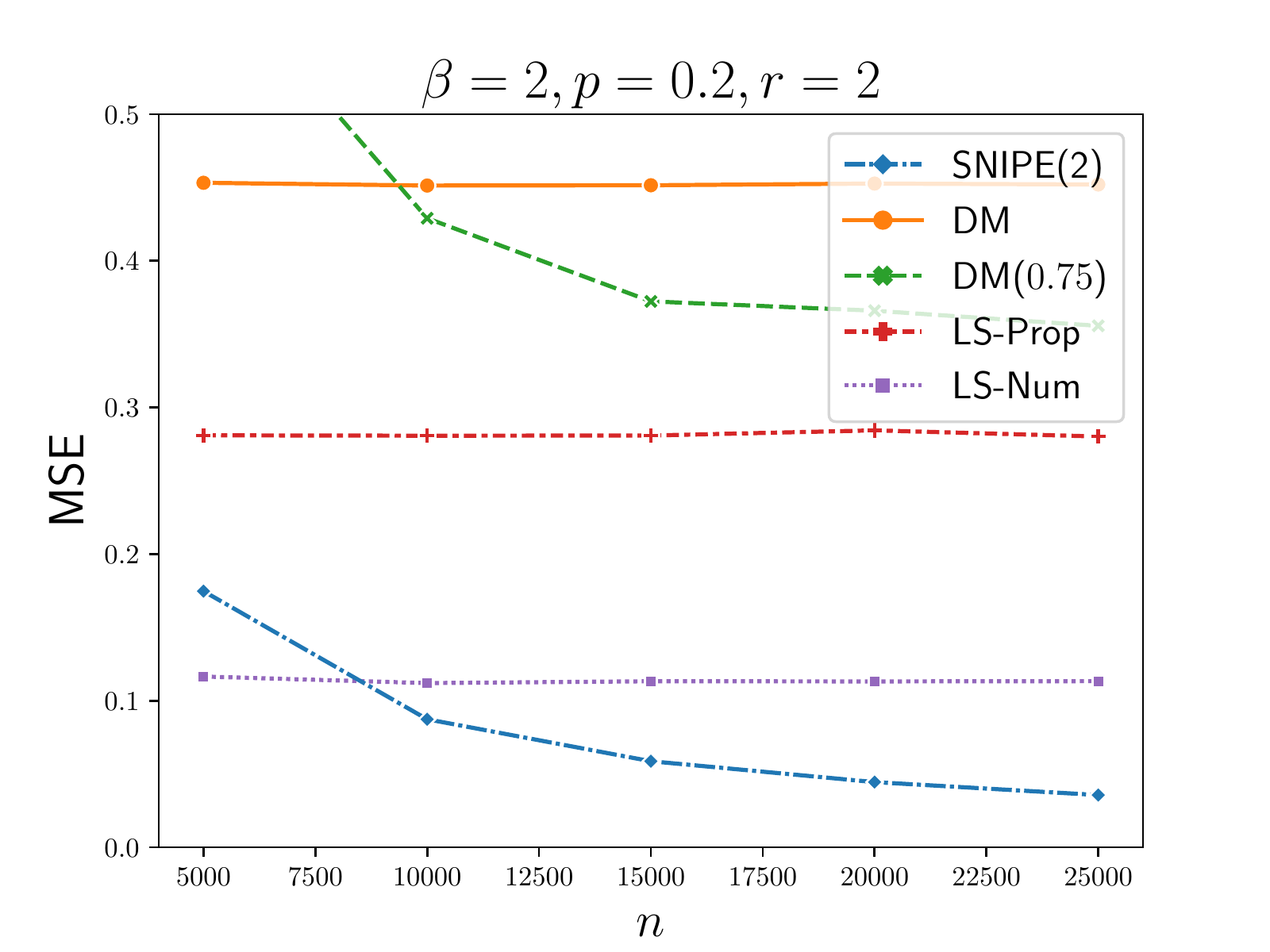}
        \caption{Varying population size}  \label{fig:sizeER_MSE}
    \end{subfigure}
    \begin{subfigure}[b]{0.32\textwidth}
        \centering
        \includegraphics[width=\textwidth]{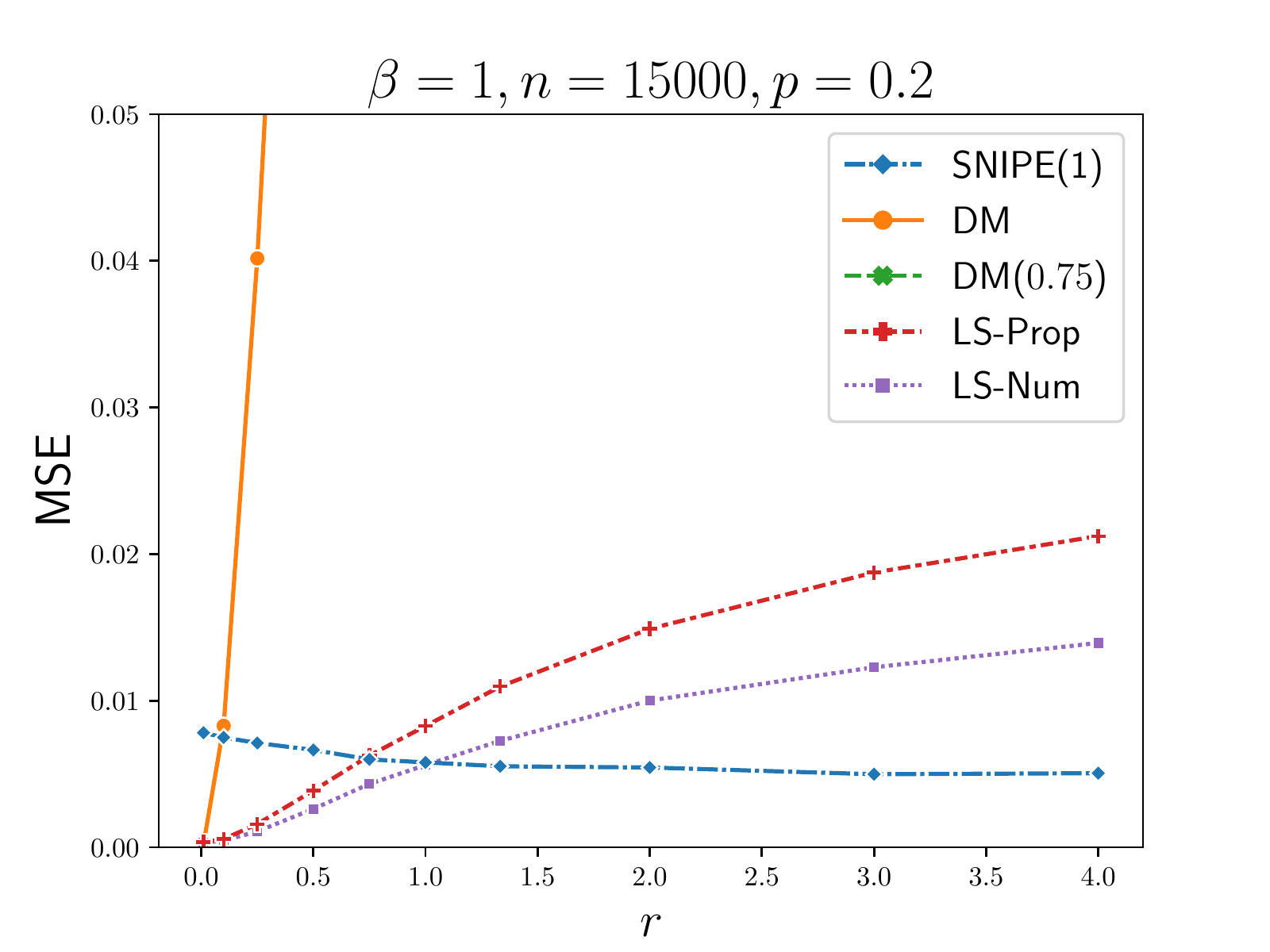}
        \includegraphics[width=\textwidth]{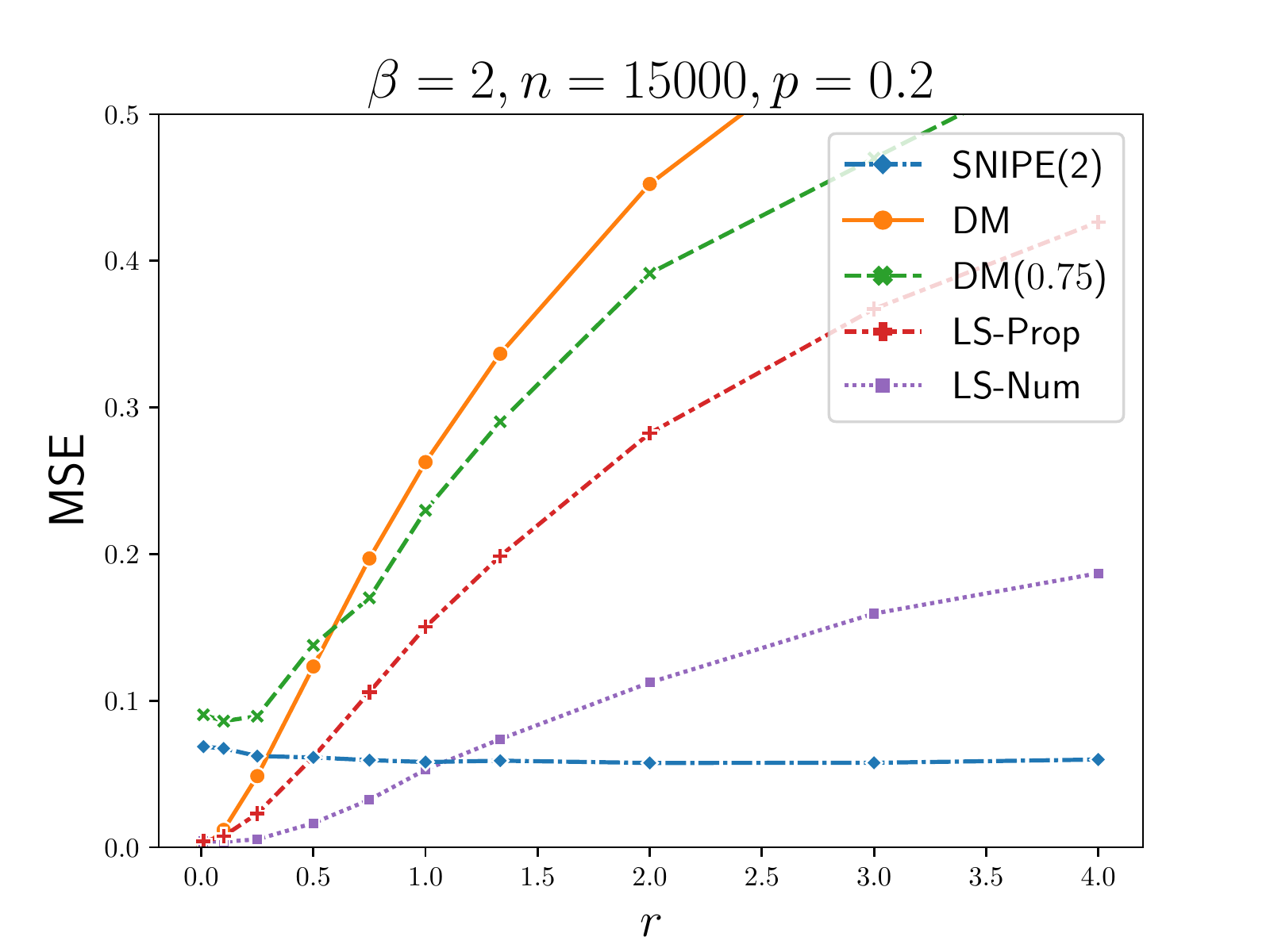}
        \caption{Varying direct:indirect effects}  \label{fig:ratioER_MSE}
    \end{subfigure}
    \begin{subfigure}[b]{0.32\textwidth}
        \centering
        \includegraphics[width=\textwidth]{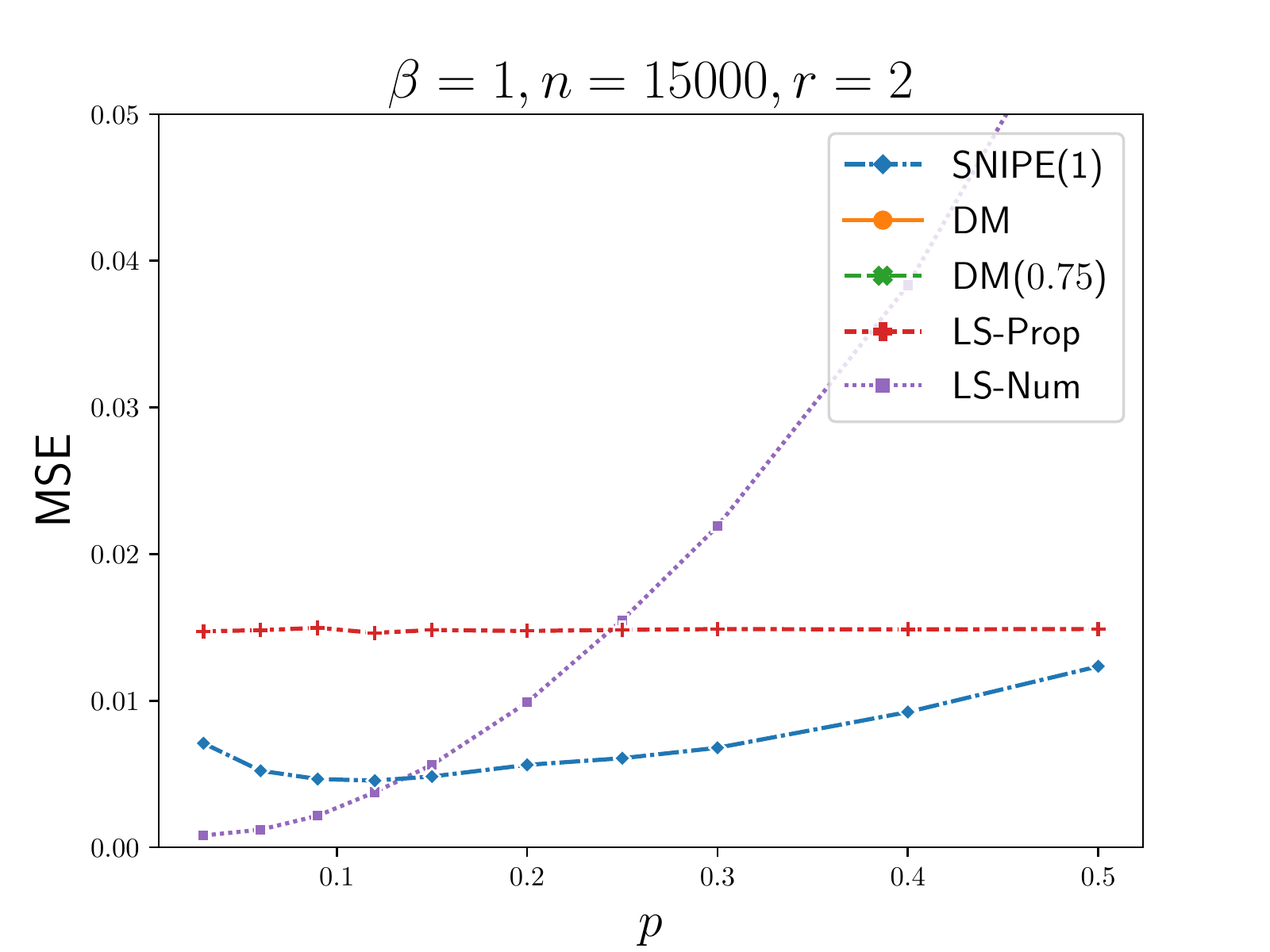}
        \includegraphics[width=\textwidth]{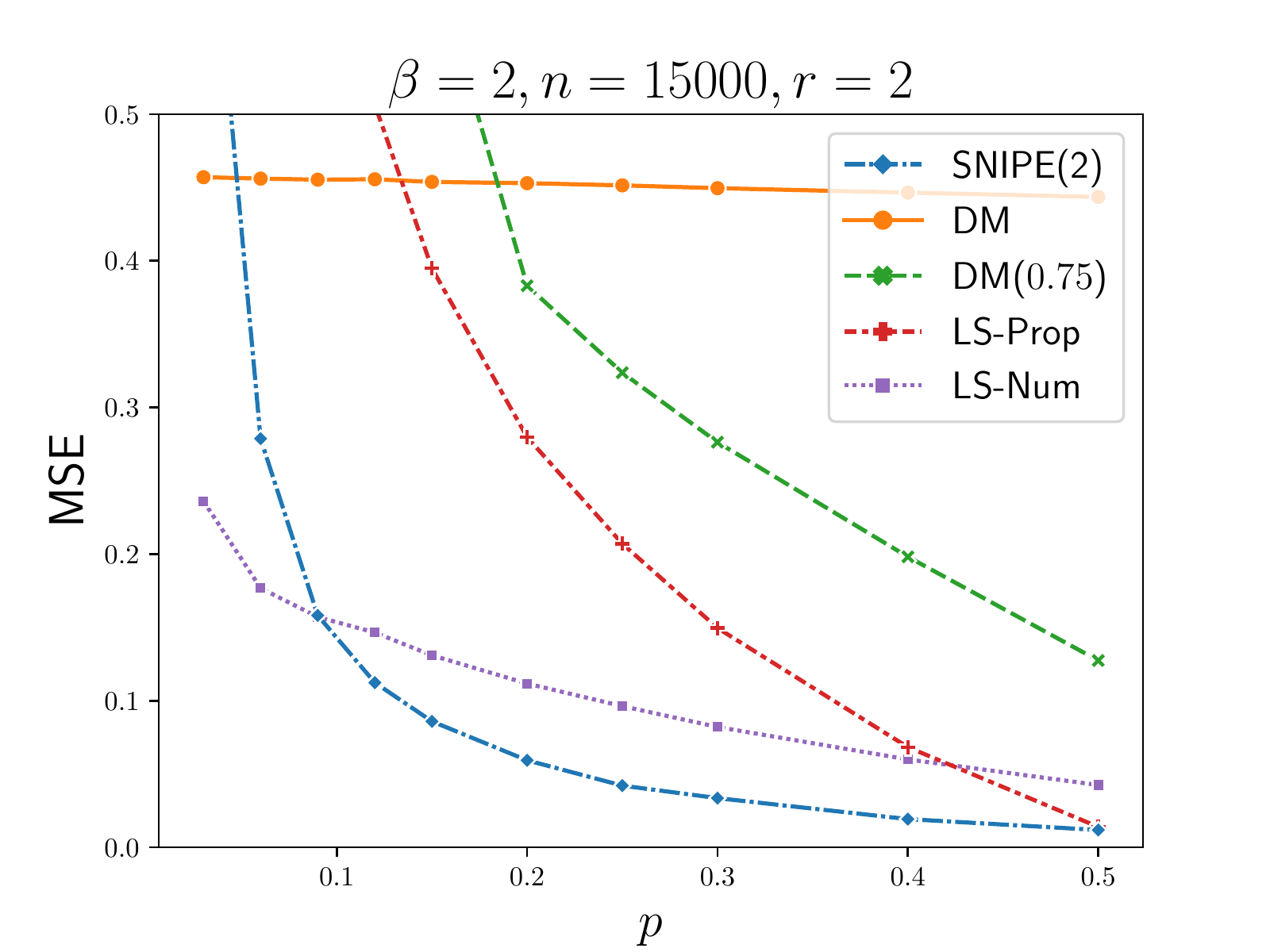}
        \caption{Varying treatment budget}  \label{fig:pER_MSE}
    \end{subfigure}
        \caption{Plots visualizing the MSE of various $\TTE$ estimators under Bernoulli design on Erd\H os-R\'enyi networks for both linear and quadratic potential outcomes models. The height of each line on a plot depicts the mean squared error when the model is normalized so that the true $\TTE$ is effectively equal to $1$. Alternatively, we can think of this as the variance of the normalized estimates. Our estimator under a $\beta$-order potential outcomes model is denoted SNIPE$(\beta)$ in the figure.} \label{fig:nonuniform_resultsER_MSE}
\end{figure}

\mcedit{\subsection{Variance Estimator Experiments}
We compare the empirical variance of SNIPE with the variance bound from Theorem \ref{thm:est_bias_var} and the variance estimator constructed from Aronow-Samii's method described in section \ref{ssec:variance_est}. As before, for each population size $n$, we sample $G$ networks from the Erd\H os-R\'enyi model described previously. 
For every configuration of parameters in the experiment, we sample $N$ treatment assignment vectors $\bz_1,\ldots,\bz_N$ from a uniform Bernoulli distribution with treatment probability $p$ to compute the TTE using each estimator.  For the variance experiments, we chose $G=10$ and $N=100$.}
\begin{table}
    \centering
\begin{tabular}{ p{3.2cm}||p{3.2cm}|p{3.2cm}|p{3.2cm}}
 \hline
 \multicolumn{4}{c}{Results for SNIPE$(1)$} \\
  & Experimental Variance &Variance Estimate&Variance Bound\\
 \hline
 \multirow{2}{4em}{$\mathbf{n \ (p=0.2, r=2)}$} &  &  & \\
 & & &\\
 $1000$ & $17.63$   & $16327.43$ & $488300.11$ \\
 $2500$ & $8.19$    & $3724.33$  & $245279.16$ \\
 $5000$ & $3.34 $   & $2270.81$  & $140688.10$ \\
 $7500$ & $2.33$    & $1408.50$  & $106981.33$ \\
 $10000$ & $1.95$   & $1437.32$  & $105354.62$ \\
 \multirow{2}{4em}{$\mathbf{p \ (n=5000, r=2)}$} & & & \\
 & & &\\
 $0.1$ & $2.93$   & $5439.83$  & $272939.86$ \\
 $0.2$ & $3.71$   & $2375.13$  & $174599.20$ \\
 $0.3$ & $4.33$   & $1259.70$  & $106516.48$ \\
 $0.4$ & $6.10$   & $1660.36$  & $122405.92$ \\
 $0.5$ & $8.06$   & $2197.06$  & $93585.34$ \\
 \multirow{2}{4em}{$\mathbf{r \ (n=5000, p=0.2)}$} & & & \\
 & & &\\
 $0.5$ & $1.06$   & $1480.23$  & $19410.59$ \\
 $1.0$ & $1.69$   & $1793.46$  & $44527.96$ \\
 $1.5$ & $2.50$   & $3802.44$  & $112805.72$ \\
 $2.0$ & $3.91$   & $2080.84$  & $142140.47$ \\
 & & &\\
 \hline
 \multicolumn{4}{c}{Results for SNIPE$(2)$} \\
 \hline
 & & &\\
  \multirow{2}{4em}{$\mathbf{n \ (p=0.2, r=2)}$} &  &  & \\
 & & &\\
 $1000$ & $255.59$   & $11046.03$ & $470718.02$ \\
 $2500$ & $92.21$    & $3873.10$  & $231156.58$ \\
 $5000$ & $46.36$    & $2962.08$  & $147815.75$ \\
 $7500$ & $29.80$    & $1495.31$  & $95530.25$ \\
 $10000$ & $21.28$   & $1463.03$  & $81113.89$ \\
 \multirow{2}{4em}{$\mathbf{p \ (n=5000, r=2)}$} & & & \\
 & & &\\
 $0.1$ & $114.95$  & $8635.44$  & $147108.70$ \\
 $0.2$ & $43.94$   & $2142.06$  & $133774.85$ \\
 $0.3$ & $24.36$   & $1373.10$  & $129746.07$ \\
 $0.4$ & $13.24$   & $1115.97$  & $128101.61$ \\
 $0.5$ & $9.19$    & $2435.29$  & $133271.42$ \\
 \multirow{2}{4em}{$\mathbf{r \ (n=5000, p=0.2)}$} & & & \\
 & & &\\
 $0.5$ & $12.99$   & $1627.89$  & $126414.68$ \\
 $1.0$ & $20.32$   & $1731.70$  & $129161.00$ \\
 $1.5$ & $30.94$   & $2054.55$  & $134921.48$ \\
 $2.0$ & $44.50$   & $2494.29$  & $144292.22$ \\
 \hline
\end{tabular}
    \caption{Table presenting the empirical variance of the SNIPE$(\beta)$ estimator for the $\TTE$ under Bernoulli$(p)$ design on Erd\H{o}s-R\'enyi networks with $\beta=1,2$. The variance bound is computed using the bound in Theorem \ref{thm:est_bias_var} and the variance estimate is computed using the Aronow-Samii estimator described in section \ref{ssec:variance_est}. The parameters $n$, $p$, and $r$ refer to the population size, the treatment probability, and the ratio of direct to indirect effects, respectively. }
    \label{tab:variance}
\end{table}
\mcedit{Table \ref{tab:variance} displays the effects of various network or estimator parameters on the experimental variance of SNIPE$(\beta)$ as well as the variance estimate and the theoretical variance bound, all under Bernoulli randomized design and all averaged over $GN$ samples of treatment vectors. As in previous experiments, we consider the effects of the population size ($n$), the treatment budget ($p$), the ratio between the network and direct effects ($r$), and the degree of the potential outcomes model ($\beta$). We list fixed values for the parameters in parentheses.}
\mcedit{The main observation we wish to draw attention to is the differences in the orders of magnitude amongst the empirical variance, the variance estimate, and the variance bound. It is clear from these results that obtaining a tighter variance estimator would be a valuable direction for future work.}
\section{Conclusions and Future Work}

We propose an estimator for the total treatment effect (TTE) under neighborhood interference and Bernoulli design when the graph is known. Our approach considers a potential outcomes model that is polynomial in the treatment vector $\bz$ with degree parameterized by $\beta$, which we assume to be much smaller than the maximum neighborhood size. This assumption is equivalent to constraining the order of interactions amongst treated neighbors to sets of size at most $\beta$. We derive theoretical bounds on the variance of our estimator under Bernoulli randomized design and show that we improve upon the variance of the Horvitz-Thompson estimator when $\beta$ is significantly lower than the maximum degree of the graph. We provide minimax lower bounds on the mean squared error of our estimator when the graph is $d-$regular and the treatment probabilities are the same for each individual. Furthermore, under additional boundedness conditions, we prove a central limit theorem for our estimator, allowing for conservative, asympotically valid confidence intervals using our proposed variance estimator. Through computational experiments, we illustrate that our estimator has lower MSE than the MSE of standard difference in means and least-squares estimators for the TTE. Our work uniquely complements the literature in that we consider how to incorporate and exploit structure in the potential outcomes model in a way that allows for a richer model class than the typical parametric model classes and does not reduce the effective treatment to a low-dimensional statistic. 

Our work presents many interesting and likely fruitful directions for future work. We summarize a few of these below.

\paragraph*{Optimized Experimental Designs} \medskip
\meedit{In this work, we studied the relationship between the complexity of the potential outcomes model (parameterized by $\beta$) and the difficulty of estimation. In this analysis, we made no structural assumptions on the network and restricted focus to independent Bernoulli experimental design. It is easy to conceive that a more careful selection of the experimental design, motivated by structural information of the causal network, could lead to improved performance of the estimator. For example, in graphs that are well-clustered, correlating the treatments within each cluster could allow some individuals to have more of their neighborhood treated, giving a better estimate of the magnitude of their treatment effect. The design philosophy of our estimator, as motivated in Section~\ref{subsec:pi_linear} through the lens of experiment replication, is not particular to Bernoulli design. As such, an enticing direction of future study would be to explore this estimator for other experimental designs and better understand the interplay between performance gains due to network structure and model structure.}

\paragraph*{Implications to Observational Studies} \medskip

\cyreplace{In this section, we consider the possible implications of our work to an unconfounded observational setting where the non-uniform treatment probabilities can be estimated and treatments are independent from each other conditioned on observed covariates. In particular, our theoretical results above hold for non-uniform Bernoulli designs where the probabilities $p_i$ can be arbitrarily set, as long as they are bounded away from zero and one. As a result, observational settings in which the treatments are independent after conditioning on observed covariates may be suitable to using a similar analysis. For example, consider a partially randomized setting that arises from non-compliance as described in \cite{DiTraglia2020}. They assume that treatment is offered to a random subset of individuals, but individuals decide whether or not to comply with the recommended treatment. They additionally posit the individualistic offer response assumption, under which an individual's decision to comply with the recommended treatment is not a function of whether or not the treatment was offered to other individuals in the population, and as such is also independent from the treatment outcomes of other individuals. As a result, the distribution over treatments can be modeled as a nonuniform Bernoulli randomized design, where $p_i$ depends on both the probability that the treatment is offered to individual $i$ as well as individual $i$'s probability of complying, which can be a function of the individual's type. If additionally we assume that $p_i$ is only a function of observable covariates, then we could plausibly estimate $p_i$ and estimate the total treatment effect using our proposed \mcdelete{pseudoinverse} estimator.}{While our stated theoretical results only hold for non-uniform Bernoulli designs, there are natural implications to the analysis of observational studies under appropriate unconfoundedness assumptions. In particular, if treatments across individuals are independent from each other conditioned on observed covariates, and if the conditional treatment probabilities could be estimated, then one could plausibly consider a plugin approach to modify our estimator for such observational data.} 
Formalizing how to extend our results to observational settings would be a fruitful and interesting direction for future work.

\paragraph*{Dealing with Model Misspecification} \medskip

Another interesting direction for future work centers around how to use our proposed class of estimators when the model parameter $\beta$ is unknown. In settings such as online social networks, it is reasonable to posit a low-degree interactions assumption on the network interference, which corresponds to adopting a potential outcomes model parameterized by a ground-truth value $\beta^{GT}$. To estimate the TTE, a researcher will select a value $\beta^{Exp}$ to use in defining their estimator. Without knowledge of the ground truth model, it is possible that $\beta^{GT} \ne \beta^{Exp}$. As this phenomenon of \textit{model misspecification} is pervasive through causal inference and more general machine learning domains, it would be useful to quantify the relationship between the degree of $\beta$-misspecification and any additional accrual of bias or variance. Another related question is whether a statistical test can be developed to aid in the correct choice of $\beta^{Exp}$ or to validate the low-degree polynomial structure of the model.

\section*{Acknowledgements}
 We gratefully acknowledge financial support from the National Science Foundation grants CCF-1948256 and CNS-1955997 and the National Science Foundation Graduate Research Fellowship grant DGE-1650441. We also thank Professor Nathan Kallus for his insightful feedback.

\bibliographystyle{plain}
\bibliography{refs}

\appendix 
\meedit{\section{The Explicit TTE Estimator for General  \texorpdfstring{$\beta$}{beta}} \label{sec:explicit_est}

Here, we derive an explicit formula for the $\TTE$ estimator under non-uniform Bernoulli design for general $\beta$. Recall (equation~\ref{eqn:TTE_hat implicit}) that the estimator has the form
\[
    \widehat{\TTE} = \tfrac{1}{n} \sum_{i=1}^{n} Y_i(\bz) \Big\langle \mathbb{E} \big[ \tbz_i \tbz_i^\intercal \big]^{-1} (\mathbf{1}_{|\cS_i^\beta|} - \mathbf{e}_1) ,  \tbz_i \Big\rangle,
\]
where each $\cS_i^\beta$ collects all subsets of $\cN_i$ with cardinality at most $\beta$ and each $\tbz_i$ is the vector of length $|S_i|$ with entries $(\tbz_i)_{\cS} = \prod_{j \in \cS} z_j$ indicating whether the entire subset $\cS$ has been assigned to treatment. Here, we index vectors and matrix entries by their corresponding sets (rather than numerical indices) for notational convenience. To begin, we focus our attention on the first argument vectors of these inner products. Entries of the matrix $\mathbb{E} \big[ \tbz_i \tbz_i^\intercal \big]$ have the form
\[
    \Big( \mathbb{E} \big[ \tbz_i \tbz_i^\intercal \big] \Big)_{\cS,\cT} 
    = \mathbb{E} \Big[ \prod_{j \in \cS} z_j \prod_{j' \in \cT} z_{j'} \Big]
    = \mathbb{E} \Big[ \prod_{j \in \cS \cup \cT} z_j \Big]
    = \prod_{j \in \cS \cup \cT} p_j.
\]
Here, the second equality uses the fact that each $z_j \in \{0,1\}$, and the third equality uses the independence of the treatment assignments. The following lemma establishes the invertibility of this matrix by giving an explicit expression for its inverse.

\begin{lemma} \label{lem:inverse_explicit}
    The matrix $\mathbb{E} \big[ \tbz_i \tbz_i^\intercal \big]$ is invertible, with entries of its inverse $A_i$ given by the formula 
     \[
        \big( A_i \big)_{\cS,\cT} = \prod_{j \in \cS} \tfrac{-1}{p_j} \prod_{k \in \cT} \tfrac{-1}{p_k} \sum_{\substack{\cU \in \cS_i^\beta \\ (\cS \cup \cT) \subseteq \cU}} \prod_{\ell \in \cU} \tfrac{p_\ell}{1-p_\ell}.
    \] 
\end{lemma}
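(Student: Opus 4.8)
The plan is to verify directly that $M A_i = I$, where $M := \mathbb{E}\big[\tbz_i\tbz_i^\intercal\big]$ has entries $M_{\cS,\cT} = \prod_{j\in\cS\cup\cT} p_j$ (as computed just above the lemma statement) and $A_i$ is the proposed inverse. Since both are finite square matrices indexed by $\cS_i^\beta$, this simultaneously establishes that $M$ is invertible and that $A_i = M^{-1}$ (both matrices are symmetric, so checking one side suffices). The entire argument is a bookkeeping exercise resting on the Boolean-lattice identity $\sum_{\cV\subseteq\cW}(-1)^{|\cV|} = \mathbb{I}(\cW = \varnothing)$, i.e.\ M\"obius inversion on the subset lattice, applied a handful of times.

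Fix $\cS,\cR\in\cS_i^\beta$. Expanding $(MA_i)_{\cS,\cR} = \sum_{\cT} M_{\cS,\cT}(A_i)_{\cT,\cR}$ and pulling the factor $\prod_{k\in\cR}\tfrac{-1}{p_k}$ out front,
\[
(MA_i)_{\cS,\cR} = \Big(\prod_{k\in\cR}\tfrac{-1}{p_k}\Big)\sum_{\cT\in\cS_i^\beta}\Big(\prod_{j\in\cS\cup\cT}p_j\Big)\Big(\prod_{j\in\cT}\tfrac{-1}{p_j}\Big)\sum_{\substack{\cU\in\cS_i^\beta\\ \cT\cup\cR\subseteq\cU}}\prod_{\ell\in\cU}\tfrac{p_\ell}{1-p_\ell}.
\]
The next step is to swap the order of the two sums: any valid term requires $\cR\subseteq\cU$, and given that, the condition $\cT\cup\cR\subseteq\cU$ reduces to $\cT\subseteq\cU$; moreover $\cT\subseteq\cU\in\cS_i^\beta$ already forces $\cT\in\cS_i^\beta$, so the cardinality constraint on $\cT$ is automatically satisfied. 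This leaves an outer sum over $\cU\in\cS_i^\beta$ with $\cR\subseteq\cU$, weighted by $\prod_{\ell\in\cU}\tfrac{p_\ell}{1-p_\ell}$, times an inner sum
\[
S(\cU) := \sum_{\cT\subseteq\cU}\Big(\prod_{j\in\cS\cup\cT}p_j\Big)\Big(\prod_{j\in\cT}\tfrac{-1}{p_j}\Big).
\]

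The crux is evaluating $S(\cU)$. Writing each $\cT\subseteq\cU$ as the disjoint union of $\cT\cap\cS\subseteq\cU\cap\cS$ and $\cT\setminus\cS\subseteq\cU\setminus\cS$, the summand factors across these two independent choices, yielding
\[
S(\cU) = \Big(\prod_{j\in\cS}p_j\Big)\Big(\sum_{\cT_1\subseteq\cU\cap\cS}\prod_{j\in\cT_1}\tfrac{-1}{p_j}\Big)\Big(\sum_{\cT_2\subseteq\cU\setminus\cS}(-1)^{|\cT_2|}\Big).
\]
The last factor equals $\prod_{j\in\cU\setminus\cS}(1+(-1)) = 0$ unless $\cU\subseteq\cS$, in which case $S(\cU) = \prod_{j\in\cS}p_j\prod_{j\in\cU}\tfrac{p_j-1}{p_j}$. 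Substituting back, the outer sum collapses to a sum over $\cU$ with $\cR\subseteq\cU\subseteq\cS$, and since $\tfrac{p_\ell}{1-p_\ell}\cdot\tfrac{p_\ell-1}{p_\ell} = -1$ the $\cU$-dependent part of each summand becomes simply $(-1)^{|\cU|}$, so that
\[
(MA_i)_{\cS,\cR} = \Big(\prod_{k\in\cR}\tfrac{-1}{p_k}\Big)\Big(\prod_{j\in\cS}p_j\Big)\sum_{\cR\subseteq\cU\subseteq\cS}(-1)^{|\cU|}.
\]
A last application of the vanishing identity gives $\sum_{\cR\subseteq\cU\subseteq\cS}(-1)^{|\cU|} = (-1)^{|\cR|}\prod_{j\in\cS\setminus\cR}(1+(-1))$, which is $0$ unless $\cS=\cR$ and equals $(-1)^{|\cS|}$ when $\cS=\cR$; in that case the surviving prefactor is $(-1)^{|\cS|}\prod_{j\in\cS}\tfrac{1}{p_j}\cdot\prod_{j\in\cS}p_j = (-1)^{|\cS|}$, so $(MA_i)_{\cS,\cS} = (-1)^{2|\cS|} = 1$. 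Hence $(MA_i)_{\cS,\cR} = \mathbb{I}(\cS=\cR)$, which is the claim.

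I do not anticipate a genuine obstacle; the only place demanding care is tracking which set-membership constraints remain active after each reindexing --- in particular confirming that the cardinality bound defining $\cS_i^\beta$ never obstructs a subset reindexing, since every set appearing is contained in either $\cS$ or $\cU$, both of size at most $\beta$ --- and getting the final sign cancellation right.
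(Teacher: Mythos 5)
Your proposal is correct and follows essentially the same route as the paper: a direct entrywise verification that $\mathbb{E}[\tbz_i\tbz_i^\intercal]A_i = I$, reversing the order of summation so the outer sum runs over $\cU$ and collapsing the inner sums via the alternating-sum identity $\sum_{\cW\subseteq\cU\setminus\cS}(-1)^{|\cW|}=\Ind(\cU\subseteq\cS)$ on the subset lattice. The only difference is cosmetic: you treat a general entry $(\cS,\cR)$ uniformly and finish with the alternating sum over $\cR\subseteq\cU\subseteq\cS$, whereas the paper splits into diagonal and off-diagonal cases, but the decomposition and cancellations are the same.
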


\begin{proof}
    We'll argue that $\mathbb{E} \big[ \tbz_i \tbz_i^\intercal \big] A_i = I_{|\cS_i^\beta|}$ entry-wise. Note that both $\mathbb{E} \big[ \tbz_i \tbz_i^\intercal \big]$ and $A_i$ are symmetric matrices, so their product will be as well. First, we consider the diagonal entries of this matrix. Given any $\cS \in \cS_i^\beta$,
    \begin{align*}
        \big( \mathbb{E} \big[ \tbz_i \tbz_i^\intercal \big] A_i \big)_{\cS,\cS} 
        &= \sum_{\cT \in \cS_i^\beta} \prod_{j \in \cS \cup \cT} \!\!p_j \:\:\big( A_i \big)_{\cT, \cS} \\
        &= \sum_{\cT \in \cS_i^\beta} (-1)^{|\cS \cup \cT|} \prod_{k \in (\cS \cap \cT)} \tfrac{-1}{p_k} \sum_{\substack{\cU \in \cS_i^\beta \\ (\cS \cup \cT) \subseteq \cU}} \prod_{\ell \in \cU} \tfrac{p_\ell}{1-p_\ell} \\
        &= \sum_{\substack{\cU \in \cS_i^\beta \\ \cS \subseteq \cU}} \prod_{\ell \in \cU} \tfrac{p_\ell}{1-p_\ell} \sum_{\cT \subseteq \cU} (-1)^{|\cS \cup \cT|} \prod_{k \in (\cS \cap \cT)} \tfrac{-1}{p_k} \tag{reverse sum} \\
        &= (-1)^{|\cS|} \sum_{\substack{\cU \in \cS_i^\beta \\ \cS \subseteq \cU}} \prod_{\ell \in \cU} \tfrac{p_\ell}{1-p_\ell} \sum_{\cV \subseteq \cS} \prod_{k \in \cV} \tfrac{-1}{p_k} \underbrace{\sum_{\cW \subseteq (\cU \setminus \cS)} (-1)^{|\cW|}}_{\mathbb{I} \big( \cU \subseteq \cS \big)}  \tag{$\cV = \cT \cap \cS, \cW = \cT \setminus \cS$} \\
        &= (-1)^{|\cS|} \prod_{\ell \in \cS} \tfrac{p_\ell}{1-p_\ell} \sum_{\cV \subseteq \cS} \prod_{k \in \cV} \tfrac{-1}{p_k} \tag{only non-zero term is $\cU = \cS$} \\
        &= (-1)^{|\cS|} \prod_{\ell \in \cS} \tfrac{p_\ell}{1-p_\ell} \prod_{k \in \cS} \Big( 1 - \tfrac{1}{p_k} \Big) \tag{distributivity} \\
        &= 1.
    \end{align*}
    
    Next, we consider the off-diagonal entries. By the symmetry of $\mathbb{E} \big[ \tbz_i \tbz_i^\intercal \big] A_i$ it suffices to consider entries $(\cS', \cS)$ for which $\cS' \setminus \cS \ne \varnothing$. We have,
    
    \begin{align*}
        \big( M_i A_i \big)_{\cS',\cS} 
        &= \sum_{\cT \in \cS_i^\beta} \prod_{j' \in \cS' \cup \cT} \!\!p_{j'} \:\:\big( A_i \big)_{\cT, \cS} \\
        &= \sum_{\cT \in \cS_i^\beta} \prod_{j' \in \cS' \cup \cT} p_{j'} \prod_{j \in \cS} \tfrac{-1}{p_j} \prod_{k \in \cT} \tfrac{-1}{p_k} \sum_{\substack{\cU \in \cS_i^\beta \\ (\cS \cup \cT) \subseteq \cU}} \prod_{\ell \in \cU} \tfrac{p_\ell}{1-p_\ell} \\
        &= \prod_{j \in \cS} \tfrac{-1}{p_j} \sum_{\substack{\cU \in \cS_i^\beta \\ \cS \subseteq \cU}} \prod_{\ell \in \cU} \tfrac{p_\ell}{1-p_\ell} \sum_{\cT \subseteq \cU} \prod_{j' \in \cS' \cup \cT} p_{j'} \prod_{k \in \cT} \tfrac{-1}{p_k} \tag{reverse sum} \\
        &= \prod_{j \in \cS} \tfrac{-1}{p_j} \sum_{\substack{\cU \in \cS_i^\beta \\ \cS \subseteq \cU}} \prod_{\ell \in \cU} \tfrac{p_\ell}{1-p_\ell} \prod_{j' \in \cS'} p_{j'} \sum_{\cV \subseteq \cS'} \prod_{k \in \cV} \tfrac{-1}{p_k} \underbrace{\sum_{\cW \subseteq \cU \setminus \cS'} (-1)^{|\cW|}}_{\mathbb{I} \big( \cU \subseteq \cS' \big)} \tag{$\cV = \cT \cap \cS', \cW = \cT \setminus \cS'$} \\[-16pt]
        &= 0.
    \end{align*}
    Here, the last line follows because any non-zero term in the outer sum must correspond to some $\cU$ such that $\cS \subseteq \cU \subseteq \cS' \implies \cS \subseteq \cS'$. Our earlier assumption that $\cS' \setminus \cS \ne \varnothing$ ensures there is no such $\cU$.
\end{proof}

Using this lemma, we consider the entries in the first argument vector of each inner product. We have,
\begin{align*}
    \Big( A_i(\mathbf{1}_{|\cS_i^\beta|} - \mathbf{e}_1)  \Big)_{\cS} 
    &= \sum_{\cT \in \cS_i^\beta} \big( A_i \big)_{\cS,\cT} - \big( A_i \big)_{\cS,\varnothing}\\
    &= \prod_{j \in \cS} \tfrac{-1}{p_j} \bigg[ \sum_{\cT \in \cS_i^\beta} \prod_{k \in \cT} \tfrac{-1}{p_k} \sum_{\substack{\cU \in \cS_i^\beta \\ (\cS \cup \cT) \subseteq \cU}} \prod_{\ell \in \cU} \tfrac{p_\ell}{1-p_\ell} 
    - \sum_{\substack{\cU \in \cS_i^\beta \\ \cS \subseteq \cU}} \prod_{\ell \in \cU} \tfrac{p_\ell}{1-p_\ell} \bigg] \\
    &= \prod_{j \in \cS} \tfrac{-1}{p_j} \sum_{\substack{\cU \in \cS_i^\beta \\ \cS \subseteq \cU}} \prod_{\ell \in \cU} \tfrac{p_\ell}{1-p_\ell} \bigg[  \sum_{\cT \subseteq \cU} \prod_{k \in \cT} \tfrac{-1}{p_k} - 1 \bigg] \tag{reverse sum and factor} \\
    &= \prod_{j \in \cS} \tfrac{-1}{p_j} \sum_{\substack{\cU \in \cS_i^\beta \\ \cS \subseteq \cU}} \prod_{\ell \in \cU} \tfrac{p_\ell}{1-p_\ell} \bigg[ \prod_{k \in \cU} \tfrac{p_k-1}{p_k} - 1 \bigg] \tag{distributivity} \\
    &= \prod_{j \in \cS} \tfrac{-1}{p_j} \sum_{\substack{\cU \in \cS_i^\beta \\ \cS \subseteq \cU}} g(\cU) \prod_{\ell \in \cU} \tfrac{-1}{1-p_\ell},
\end{align*}
where we define $\displaystyle g(\cS) := \prod_{s \in \cS} (1-p_s) - \prod_{s \in \cS} (-p_s)$. Substituting back into the inner product, we calculate
\begin{align*}
    \Big\langle A_i (\mathbf{1}_{|\cS_i^\beta|} - \mathbf{e}_1),    \tbz_i \Big\rangle 
    &= \sum_{\cS \in \cS_i^\beta} \prod_{j \in \cS} \tfrac{-z_j}{p_j} \sum_{\substack{\cU \in \cS_i^\beta \\ \cS \subseteq \cU}} g(\cU) \prod_{\ell \in \cU} \tfrac{-1}{1-p_\ell} \\
    &= \sum_{\cU \in \cS_i^\beta} g(\cU) \prod_{\ell \in \cU} \tfrac{-1}{1-p_\ell} \sum_{\cS \subseteq \cU} \prod_{j \in \cS} \tfrac{-z_j}{p_j} \\
    &= \sum_{\cU \in \cS_i^\beta} g(\cU) \prod_{\ell \in \cU} \tfrac{-1}{1-p_\ell} \prod_{j \in \cU} \Big( 1 - \tfrac{z_j}{p_j} \Big) \\
    &= \sum_{\cU \in \cS_i^\beta} g(\cU) \prod_{j \in \cU} \tfrac{z_j-p_j}{p_j(1-p_j)} .
\end{align*}
Finally, we replace $\cU$ with $\cS$ to conform to earlier notation and obtain the explicit form for our estimator
\[
    \widehat{TTE} = \tfrac{1}{n} \sum_{i=1}^{n} Y_i \sum_{\cS \in \cS_i^\beta} g(\cS) \prod_{j \in \cS} \tfrac{z_j-p_j}{p_j(1-p_j)}.
\]

}
\section{Proof of Theorem~\ref{thm:est_bias_var}} \label{sec:main_proof}

\paragraph*{Unbiasedness}

The key insight that we use in our unbiasedness calculations comes from the following lemma.

\begin{lemma} \label{lem:exp_prod}
    If $\{z_j\}_{j \in [n]}$ are mutually independent with $z_j \sim \textrm{Bernoulli}(p_j)$, then for any $\cS, \cS' \subseteq [n]$,
    \[
        \E \Big[ \prod_{j \in \cS} \frac{z_j - p_j}{p_j(1-p_j)} \prod_{j' \in \cS'} z_{j'} \Big] = \Ind \big( \cS \subseteq \cS' \big) \cdot \!\!\!\prod_{j' \in \cS' \setminus \cS} p_{j'}. 
    \]
\end{lemma}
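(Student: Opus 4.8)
The plan is to exploit the mutual independence of the $z_j$ to factor the expectation across index sets. First I would partition the union $\cS \cup \cS'$ into three disjoint pieces: the overlap $\cS \cap \cS'$, the part $\cS \setminus \cS'$ (which contributes only a centered factor $\tfrac{z_j - p_j}{p_j(1-p_j)}$), and the part $\cS' \setminus \cS$ (which contributes only a plain factor $z_{j'}$). Because the $z_j$ are independent, the expectation of the product equals the product of the per-index expectations over these three groups, so the computation reduces to evaluating one factor of each type.

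Then I would evaluate each group separately. For $j \in \cS \setminus \cS'$, the factor is $\E\big[\tfrac{z_j - p_j}{p_j(1-p_j)}\big] = 0$ since $\E[z_j] = p_j$; hence whenever $\cS \not\subseteq \cS'$ the entire expectation vanishes, which matches the indicator $\Ind(\cS \subseteq \cS')$. For $j' \in \cS' \setminus \cS$, the factor is simply $\E[z_{j'}] = p_{j'}$, producing the claimed product $\prod_{j' \in \cS' \setminus \cS} p_{j'}$. For $j \in \cS \cap \cS'$, the factor is $\E\big[\tfrac{(z_j - p_j) z_j}{p_j(1-p_j)}\big]$; using the binary-variable identity $z_j^2 = z_j$, this equals $\tfrac{\E[z_j] - p_j \E[z_j]}{p_j(1-p_j)} = \tfrac{p_j - p_j^2}{p_j(1-p_j)} = 1$.

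Combining the three cases: when $\cS \subseteq \cS'$ the overlap group (which is all of $\cS$) contributes $1$, the $\cS \setminus \cS'$ group is empty, and the $\cS' \setminus \cS$ group contributes $\prod_{j' \in \cS' \setminus \cS} p_{j'}$, giving exactly the stated formula; otherwise at least one factor from $\cS \setminus \cS'$ is zero and the whole expectation is $0$. This is essentially a routine calculation; the only point requiring any care — and the nearest thing to an obstacle — is handling the overlap terms correctly, where a centered weight and a raw indicator act on the same variable, so one must invoke $z_j^2 = z_j$ rather than naively treating the two factors as independent.
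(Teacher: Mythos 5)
Your proposal is correct and follows essentially the same argument as the paper: both factor the expectation by independence over the three disjoint groups $\cS \setminus \cS'$, $\cS' \setminus \cS$, and $\cS \cap \cS'$, noting that the centered factors in $\cS \setminus \cS'$ have expectation zero (yielding the indicator), the plain factors give $p_{j'}$, and the overlap factors equal $1$ via $z_j^2 = z_j$. No gaps.
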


\begin{proof}
    By the mutual independence of the $\{z_j\}$, we can rewrite this expectations as a product, separating the variables into three groups.
    \[
        \E \Big[ \prod_{j \in \cS} \frac{z_j - p_j}{p_j(1-p_j)} \prod_{j' \in \cS'} z_{j'} \Big] 
        =
        \prod_{j \in \cS \setminus \cS'} \E\left[\frac{z_j - p_j}{p_j(1-p_j)}\right] \prod_{j' \in \cS' \setminus \cS} \E[z_{j'}] \prod_{j'' \in \cS \cap \cS'} \E\left[\frac{z_{j''}(z_{j''}-p_{j''})}{p_{j''}(1-p_{j''})}\right].
    \]
    Note that the expectations in the first product each simplify to $0$, so this expectation is non-zero only when $\cS \subseteq \cS'$. The expectations in the second product simplify to $p_{j'}$, and those in the third product each simplify to 1. These observations imply the lemma.
\end{proof}

The critical feature of this lemma is that this indicator function simplifies sums over arbitrary sets to sums over subsets $\cS \subseteq \cS'$. This additional structure \mereplace{permits simplification through techniques including the binomial theorem and M\"{o}bius inversion}{permits simplifications using the distributive property
\[
   \sum_{\cS \subseteq \cS'} \prod_{j \in \cS} a_j = \prod_{j \in \cS'} \big( 1 + a_j \big).
\]}

\meedit{
We leverage this fact in the following calculation. Given any $\cS' \in \cS_i^\beta$, we may simplify
\begin{align*}
    \E \Big[ \sum_{\cS \in \cS_i^\beta} g(\cS) \prod_{j \in \cS} \tfrac{z_j - p_j}{p_j(1-p_j)} \prod_{j' \in \cS'} z_{j'} \Big]
    &= \sum_{\cS \in \cS_i^\beta} g(\cS) \cdot \E \Big[ \prod_{j \in \cS} \frac{z_j - p_j}{p_j(1-p_j)} \prod_{j' \in \cS'} z_{j'} \Big] \tag{linearity} \\
    &= \sum_{\cS \subseteq \cS'} g(\cS) \!\!\prod_{j' \in \cS' \setminus \cS} p_{j'} \tag{Lemma~\ref{lem:exp_prod}} \\
    &= \prod_{j' \in \cS'} p_{j'} \sum_{\cS \subseteq \cS'} g(\cS) \!\prod_{j \in \cS} \tfrac{1}{p_j} \\
    &= \prod_{j' \in \cS'} p_{j'} \Big( \sum_{\cS \subseteq \cS'} \prod_{j \in \cS} \tfrac{1-p_j}{p_j} - \sum_{\cS \subseteq \cS'} \prod_{j \in \cS} (-1) \Big) \tag{definition of $g(\cS)$} \\
    &= \prod_{j' \in \cS'} p_{j'} \Big( \prod_{j \in \cS'} \big( 1 + \tfrac{1-p_j}{p_j} \big) - \mathbb{I}(\cS' = \varnothing) \Big) \tag{distributivity} \\
    &= 1 - \mathbb{I}(\cS' = \varnothing) \\
    &= \mathbb{I}(\cS' \ne \varnothing).
\end{align*}

Applying the linearity of expectation and the previous result, we calculate
\begin{align*}
    \E \Big[ \widehat{\TTE} \Big] 
    &= \tfrac{1}{n} \sum_{i=1}^{n} \E \Big[ Y_i \sum_{\cS \in \cS_i^\beta} g(\cS) \prod_{j \in \cS} \tfrac{z_j-p_j}{p_j(1-p_j)} \Big] \\
    &= \tfrac{1}{n} \sum_{i=1}^{n} \sum_{\cS' \in \cS_i^\beta} c_{i,\cS'} \: \E \Big[ \sum_{\cS \in \cS_i^\beta} g(\cS) \prod_{j \in \cS} \tfrac{z_j-p_j}{p_j(1-p_j)} \prod_{j' \in \cS'} z_{j'} \Big] \\
    &= \tfrac{1}{n} \sum_{i=1}^{n} \sum_{\substack{\cS' \in \cS_i^\beta \\ \cS' \ne \varnothing}} c_{i,\cS'} \\
    &= \TTE.
\end{align*}

}

\paragraph*{Variance Bound}

To bound the variance of this estimator, we make use of the following lemma to bound the magnitude of each $g(\cS)$ coefficient. 

\begin{lemma} \label{lem:gs_bound}
    For any $\cS \subseteq [n]$, $|g(\cS)| \leq 1$.
\end{lemma}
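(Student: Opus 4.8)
The plan is to split on the parity of $|\cS|$. Write $P := \prod_{s \in \cS}(1-p_s)$ and $Q := \prod_{s \in \cS} p_s$, and observe that since each $p_s \in [p,1-p] \subseteq [0,1]$, both $P$ and $Q$ are products of numbers in $[0,1]$, hence $P, Q \in [0,1]$. Using $\prod_{s\in\cS}(-p_s) = (-1)^{|\cS|} Q$, we have $g(\cS) = P - (-1)^{|\cS|} Q$. The case $\cS = \varnothing$ gives $|g(\varnothing)| = 0$ by the definition $g(\varnothing) = 0$, so assume $\cS \ne \varnothing$.

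If $|\cS|$ is even, then $g(\cS) = P - Q$, which is a difference of two numbers in $[0,1]$, so $|g(\cS)| \leq 1$ immediately. If $|\cS|$ is odd, then $g(\cS) = P + Q \geq 0$, so it remains only to show $P + Q \leq 1$. First I would reduce this to the following auxiliary claim: for any $k \geq 1$ and any $a_1, \dots, a_k \in [0,1]$, one has $\prod_{i=1}^k a_i + \prod_{i=1}^k (1-a_i) \leq 1$. Applying this with $\{a_i\}$ the list of $p_s$ for $s \in \cS$ (so $k = |\cS| \geq 1$) yields $Q + P \leq 1$, completing the odd case.

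To prove the auxiliary claim I would induct on $k$. For $k = 1$ the left-hand side equals $a_1 + (1-a_1) = 1$. For the inductive step, set $A := \prod_{i=1}^{k-1} a_i$ and $B := \prod_{i=1}^{k-1}(1-a_i)$; by the induction hypothesis $A + B \leq 1$, and clearly $A, B \geq 0$. Then
\[
    \prod_{i=1}^k a_i + \prod_{i=1}^k (1-a_i) = a_k A + (1 - a_k) B,
\]
which is a convex combination of $A$ and $B$ since $a_k \in [0,1]$, hence is at most $\max(A, B) \leq A + B \leq 1$. This closes the induction.

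I do not expect any genuine obstacle here; the only non-bookkeeping ingredient is the convexity observation in the inductive step, which handles the odd-cardinality case where the two product terms add rather than cancel. Everything else is a direct bound of products of numbers in $[0,1]$.
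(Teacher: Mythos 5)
Your proof is correct, but it takes a slightly different route from the paper's. You split on the parity of $|\cS|$, dispose of the even case by noting that a difference of two numbers in $[0,1]$ has absolute value at most $1$, and handle the odd case via an induction on $|\cS|$ proving $\prod_s p_s + \prod_s (1-p_s) \leq 1$, with the convex-combination observation driving the inductive step. The paper avoids the parity split and the induction altogether: it factors out a single element $i \in \cS$, writes $g(\cS) = (1-p_i)\prod_{s \in \cS\setminus\{i\}}(1-p_s) + p_i\prod_{s \in \cS\setminus\{i\}}(-p_s)$, applies the triangle inequality, and then crudely bounds each remaining product by $1$, giving $(1-p_i)+p_i = 1$ in one line. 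The two arguments share the same essential ingredient — the complementary weights $p_i$ and $1-p_i$ forming a convex combination — but the paper applies it once with the trivial bound on the tail products, whereas you apply it recursively; your approach also isolates the only genuinely nontrivial case (odd $|\cS|$, where the terms add rather than partially cancel), which makes the mechanism a bit more transparent, at the cost of a longer write-up. Either proof is acceptable; there is no gap in yours.
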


\begin{proof}
    First, note that $|g(\emptyset)| = 0 \leq 1$. Now, for any non-empty set $\cS$, let $i \in \cS$. Then,
    \begin{align*}
        \big| g(\cS) \big| &= \Big|\prod_{s \in \cS} (1-p_s) - \prod_{s \in \cS} (-p_s)\Big| \\
        &= \Big|(1-p_i) \prod_{s \in \cS \setminus \{i\}} (1-p_s) + p_i \prod_{s \in \cS \setminus \{i\}} (-p_s) \Big| \\
        &\leq (1-p_i) \prod_{s \in \cS \setminus \{i\}} (1-p_s) + p_i \prod_{s \in \cS \setminus \{i\}} p_s \tag{triangle inequality} \\
        &\leq 1-p_i + p_i \\
        &= 1.
    \end{align*}         
\end{proof}

This next lemma is used to bound the covariance terms that appear in our final calculation.

\begin{lemma} \label{lem:cov_bound}
    Suppose that $\{z_j\}_{j \in [n]}$ are mutually independent, with $z_j \sim \emph{Bernoulli}(p_j)$. Then, for any $\cS, \cS', \cT, \cT' \subseteq [n]$,
    \[
        0 \leq \emph{Cov} \Big[ \prod_{j \in \cS} \frac{z_j - p_j}{p_j(1-p_j)} \prod_{j' \in \cS'} \!\!z_{j'} \:,\: \prod_{k \in \cT} \frac{z_k - p_k}{p_k(1-p_k)} \prod_{k' \in \cT'} \!\!z_{k'} \Big]  \leq \Ind(\cS \triangle \cT \subseteq \cS' \cup \cT') \cdot \mereplace{(p (1-p))^{-\beta}}{\Big(\tfrac{1}{p(1-p)}\Big)^{|\cS \cap \cT|}},
    \]
    where $\cS \triangle \cT = (\cS \cup \cT) \setminus (\cS \cap \cT)$ indicates the symmetric difference of $\cS$ and $\cT$.
\end{lemma}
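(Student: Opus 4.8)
The plan is to expand the covariance into a difference of two expectations and handle each one using Lemma~\ref{lem:exp_prod}-style factorization over independent coordinates. Write $A = \prod_{j \in \cS} \frac{z_j - p_j}{p_j(1-p_j)} \prod_{j' \in \cS'} z_{j'}$ and $B = \prod_{k \in \cT} \frac{z_k - p_k}{p_k(1-p_k)} \prod_{k' \in \cT'} z_{k'}$, so that $\mathrm{Cov}(A,B) = \E[AB] - \E[A]\E[B]$. Both $\E[A]$ and $\E[B]$ are governed by Lemma~\ref{lem:exp_prod}: $\E[A] = \Ind(\cS \subseteq \cS') \prod_{j' \in \cS' \setminus \cS} p_{j'}$ and similarly for $\E[B]$. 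For the product term $AB$, I would group the $n$ coordinates into the relevant overlap classes — indices in $\cS \cap \cT$ (where both centered factors appear), indices in $\cS \triangle \cT$ (where exactly one centered factor appears), indices appearing in $\cS' \cup \cT'$ but not $\cS \cup \cT$ (pure treatment indicators), and indices untouched by any set — and use mutual independence to write $\E[AB]$ as a product of per-coordinate (or per-group) expectations.

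The key computation is to track when the centered factors survive. For an index $j$ carrying a lone centered factor $\frac{z_j - p_j}{p_j(1-p_j)}$ and possibly a treatment indicator $z_j$: if $j \notin \cS' \cup \cT'$ then $\E\!\left[\frac{z_j-p_j}{p_j(1-p_j)}\right] = 0$, killing the whole expectation; this is exactly the source of the indicator $\Ind(\cS \triangle \cT \subseteq \cS' \cup \cT')$. If $j \in \cS' \cup \cT'$, then $\E\!\left[z_j \cdot \frac{z_j-p_j}{p_j(1-p_j)}\right] = 1$. For an index $j \in \cS \cap \cT$, the factor is $\left(\frac{z_j-p_j}{p_j(1-p_j)}\right)^2$, and $\E\!\left[\left(\frac{z_j-p_j}{p_j(1-p_j)}\right)^2\right] = \frac{1}{p_j(1-p_j)} \leq \frac{1}{p(1-p)}$; with an extra $z_j$ attached (if $j$ also lies in $\cS' \cup \cT'$) the expectation $\E\!\left[z_j\left(\frac{z_j-p_j}{p_j(1-p_j)}\right)^2\right] = \frac{1}{p_j}$, which is at most $\frac{1}{p}$. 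Either way, each coordinate of $\cS \cap \cT$ contributes a factor bounded by $\frac{1}{p(1-p)}$. Coordinates in $(\cS'\cup\cT')\setminus(\cS\cup\cT)$ contribute $\E[z_j] = p_j \leq 1$. So $\E[AB]$ is non-negative and bounded above by $\Ind(\cS \triangle \cT \subseteq \cS' \cup \cT') \cdot \left(\frac{1}{p(1-p)}\right)^{|\cS \cap \cT|}$.

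For the lower bound on the covariance I would argue $\E[AB] \geq \E[A]\E[B]$, i.e. non-negative correlation. One clean way: observe that $A$ and $B$ are each products of independent factors, and a factor $\frac{z_j-p_j}{p_j(1-p_j)}$ shared between $A$ and $B$ contributes $\E[(\cdot)^2] = \frac{1}{p_j(1-p_j)} \geq 1 \geq \left(\E\!\left[\frac{z_j-p_j}{p_j(1-p_j)}\right]\right)^2 \cdot (\text{anything})$, and similarly a shared indicator $z_j$ satisfies $\E[z_j^2] = p_j \geq p_j^2$. More carefully, one factorizes both $\E[AB]$ and $\E[A]\E[B]$ over the coordinate groups and compares coordinatewise: on shared coordinates the ``joint'' factor dominates the product of ``marginal'' factors (each being a second moment versus a squared first moment, hence $\geq$ by Jensen, or simply because $z_j^2 = z_j$ makes things transparent), while on non-shared coordinates the factors coincide; if any coordinate forces $\E[A]$ or $\E[B]$ to vanish the inequality $\E[AB] \geq 0 = \E[A]\E[B]$ is immediate once we also check $\E[AB] \geq 0$ directly from the coordinatewise factorization. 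I expect the \textbf{main obstacle} to be the bookkeeping: carefully partitioning the indices into the (up to) several overlap regions and verifying that in every region both the non-negativity and the upper bound hold simultaneously, without double-counting a coordinate that lies in several of $\cS, \cS', \cT, \cT'$ at once. Once the coordinate-by-coordinate casework is laid out cleanly, both inequalities follow by taking products, since each per-coordinate factor is itself non-negative and appropriately bounded.
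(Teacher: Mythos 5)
Your proposal is correct and follows essentially the same route as the paper: expand the covariance, evaluate $\E[A]\E[B]$ via Lemma~\ref{lem:exp_prod}, factorize $\E[AB]$ coordinatewise using independence (and $z_j^2=z_j$), and do per-coordinate casework in which lone centered factors outside $\cS'\cup\cT'$ yield the indicator, coordinates in $\cS\cap\cT$ contribute at most $\tfrac{1}{p(1-p)}$ (using $p_j\in[p,1-p]$), and the remaining coordinates contribute factors in $[0,1]$. Your justification of non-negativity via per-coordinate comparison of joint versus marginal factors (Jensen/positive correlation of monotone factors) is a mild repackaging of the paper's observation that the product term dominates the term $\E[A]\E[B]$ factor by factor, so no substantive difference.
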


\begin{proof}
    We reason separately about the two terms in the covariance expansion. By Lemma~\ref{lem:exp_prod},
    \begin{equation} \label{eq:cov_term2}
        \E \Big[ \prod_{j \in \cS} \frac{z_j - p_j}{p_j(1-p_j)} \prod_{j' \in \cS'} \!\!z_{j'} \Big] \E \Big[ \prod_{k \in \cT} \frac{z_k - p_k}{p_k(1-p_k)} \prod_{k' \in \cT'} \!\!z_{k'} \Big]
        = \Ind \Big( \begin{matrix} \cS \subseteq \cS', \\ \cT \subseteq \cT' \end{matrix} \Big) \prod_{j' \in \cS'\setminus\cS} \!\!p_{j'} \prod_{k' \in \cT'\setminus\cT} \!\!p_{k'}.
    \end{equation}
    Next, we reason about the expectation of the product term. Since the $z_j$ are Bernoulli random variables, we can combine the products over $\cS'$ and $\cT'$, giving
    \begin{equation} \label{eq:cov_term1}
        \E \Big[ \prod_{j \in \cS} \frac{z_j - p_j}{p_j(1-p_j)} \prod_{k \in \cT} \frac{z_k - p_k}{p_k(1-p_k)} \prod_{j' \in \cS' \cup \cT'} z_{j'}\Big]. 
    \end{equation}
    We partition the elements of $\cS \cup \cS' \cup \cT \cup \cT'$ based on which of the products they are present in:
    
    \begin{enumerate}
        \item $j \in \cS \cap \cT \cap (\cS' \cup \cT')$: $j$ contributes a factor of $\E \big[ \frac{z_j^3-2z_j^2p_j+z_jp_j^2}{p_j^2(1-p_j)^2} \big] = \frac{1}{p_j}$.
        \item $j \in \cS \cap \cT \setminus (\cS' \cup \cT')$: $j$ contributes a factor of $\E \big[ \frac{z_j^2-2z_jp_j+p_j^2}{p_j^2(1-p_j)^2} \big] = \frac{1}{p_j(1-p_j)}$. 
        \item $j \in \cS \cap (\cS' \cup \cT') \setminus \cT$: $j$ contributes a factor of $\E \big[ \frac{z_j^2-z_jp_j}{p_j-p_j^2} \big] = 1$. 
        \item $j \in \cT \cap (\cS' \cup \cT') \setminus \cS$: $j$ contributes a factor of $\E \big[ \frac{z_j^2-z_jp_j}{p_j-p_j^2} \big] = 1$.  
        \item $j \in \cS \setminus \cT \setminus (\cS' \cup \cT')$: $j$ contributes a factor of $\E \big[ \frac{z_j-p_j}{p_j-p_j^2} \big] = 0$. 
        \item $j \in \cT \setminus \cS \setminus (\cS' \cup \cT')$: $j$ contributes a factor of $\E \big[ \frac{z_j-p_j}{p_j-p_j^2} \big] = 0$. 
        \item $j \in (\cS' \cup \cT') \setminus \cS \setminus \cT$: $j$ contributes a factor of $\E \big[ z_j \big] = p_j$. 
    \end{enumerate}
    Cases 5 and 6 ensure that \eqref{eq:cov_term1} is non-zero only when $\cS \subseteq (\cT \cup \cS' \cup \cT')$ and  $\cT \subseteq (\cS \cup \cS' \cup \cT')$, or equivalently when $\cS \triangle \cT \subseteq \cS' \cup \cT'$. This condition is necessary for \eqref{eq:cov_term2} to be non-zero. In addition, note that each $j$ from case 7 contributing a factor of $p_j$ to \eqref{eq:cov_term1} also contributes at least one factor of $p_j$ to \eqref{eq:cov_term2}. The remaining $j$ from other cases contribute a factor of at least 1. Notably, both \eqref{eq:cov_term1} and \eqref{eq:cov_term2} are non-negative, with \eqref{eq:cov_term1} dominating \eqref{eq:cov_term2}, so that the covariance is always bounded below by zero, and upper bounded by \eqref{eq:cov_term1}. \mereplace{\eqref{eq:cov_term1} will be largest when $|\cS \cap \cT| = \beta$ and are disjoint from $\cS' \cup \cT'$ such that all individuals in $\cS \cap \cT$ fall into case 2, allowing us to bound the covariance by $(p(1-p))^{-\beta}$.}{In \eqref{eq:cov_term1}, we can upper bound the contribution of each $j \in \cS \cap \cT$ by $(p(1-p))^{-1}$; note that our definition of $p$ ensures that $p_j(1-p_j) \geq p(1-p)$ for each $j$. We upper bound the contribution of each other $j$ by 1 which establishes the stated bound on the covariance}
\end{proof}

We are ready to bound the variance. If $\cN_i \cap \cN_{i'} = \emptyset$, then $Y_i(\bz) w_i(\bz)$ and $Y_{i'}(\bz) w_{i'}(\bz)$ are functions of disjoint sets of independent variables. Thus, $\Cov \big[ Y_i(\bz) w_i(\bz),Y_{i'}(\bz) w_{i'}(\bz) \big] = 0$. We let $\cM_i$ denote the set of individuals $i'$ such that $\cN_i \cap \cN_{i'} \neq \emptyset$, i.e. all individuals $i'$ that share an in-neighbor with individual $i$. Note that $|\cM_i| \leq \din \dout$. Applying the bilinearity of covariance and the triangle inequality, we have
\begin{align*}
    \Var[\widehat{TTE}] 
    &\leq \tfrac{1}{n^2}\sum_{i=1}^n \sum_{i' \in \cM_i} \sum_{\cS' \in \cS_i^\beta} |c_{i,\cS'}| \sum_{\cT' \in \cS_{i'}^\beta} |c_{i',\cT'}|
    \sum_{\cS \in \cS_i^\beta} |g(\cS)|
    \sum_{\cT \in \cS_{i'}^\beta} |g(\cT)| \\
    &\hspace{40pt} 
    \bigg| \Cov \Big[ \prod_{j \in \cS} \tfrac{z_j - p_j}{p_j(1-p_j)} \prod_{j' \in \cS'} \!\!z_{j'} \:,\: \prod_{k \in \cT} \tfrac{z_k - p_k}{p_k(1-p_k)} \prod_{k' \in \cT'} \!\!z_{k'} \Big] \bigg|.
\end{align*}
Plugging in our bounds from Lemmas~\ref{lem:gs_bound} and~\ref{lem:cov_bound}, we can simplify this bound:
\meedit{
\[
    \Var[\widehat{TTE}] 
    \leq \tfrac{1}{n^2}\sum_{i=1}^n \sum_{i' \in \cM_i} \sum_{\cS' \in \cS_i^\beta} |c_{i,\cS'}| \sum_{\cT' \in \cS_{i'}^\beta} |c_{i',\cT'}| \sum_{\cS \in \cS_i^\beta} \sum_{\cT \in \cS_{i'}^\beta} \Ind(\cS \triangle \cT \subseteq \cS' \cup \cT') \cdot \Big(\tfrac{1}{p(1-p)}\Big)^{|\cS \cap \cT|} \\
\]
Via the change of variables $\cU = \cS \cap \cT$, $\cS'' = \cS \setminus \cU$, $\cT'' = \cT \setminus \cU$, we may rewrite this
\[
    \tfrac{1}{n^2}\sum_{i=1}^n \sum_{i' \in \cM_i} \sum_{\cS' \in \cS_i^\beta} |c_{i,\cS'}| \sum_{\cT' \in \cS_{i'}^\beta} |c_{i',\cT'}| \sum_{\cU \in \cS_i^\beta} \Big(\tfrac{1}{p(1-p)}\Big)^{|\cU|} \# \bigg\{ (\cS'',\cT'') \in (\cS' \cup \cT')^2 \colon \begin{matrix} \cS'' \cap \cT'' = \varnothing \\ |\cS''|,|\cT''| \leq \beta - |\cU| \end{matrix} \bigg\} 
\]
\begin{align*}
    &\leq \tfrac{1}{n^2}\sum_{i=1}^n \sum_{i' \in \cM_i} \sum_{\cS' \in \cS_i^\beta} |c_{i,\cS'}| \sum_{\cT' \in \cS_{i'}^\beta} |c_{i',\cT'}| \sum_{\cU \in \cS_i^\beta} \Big(\tfrac{1}{p(1-p)}\Big)^{|\cU|} (2\beta)^{2\beta - 2|\cU|} \\
    &\leq \tfrac{1}{n^2} (2\beta)^{2\beta} \sum_{i=1}^n \sum_{i' \in \cM_i} \sum_{\cS' \in \cS_i^\beta} |c_{i,\cS'}| \sum_{\cT' \in \cS_{i'}^\beta} |c_{i',\cT'}| \sum_{\cU \in \cS_i^\beta} \Big(\tfrac{1}{4\beta^2 p(1-p)}\Big)^{|\cU|} \\
    &\leq \tfrac{\din \dout {Y_{\max}}^2}{n} (2\beta)^{2\beta} \sum_{k=0}^{\beta} \binom{\din}{k} \Big(\tfrac{1}{4\beta^2 p(1-p)}\Big)^{k} \\
    &\leq \tfrac{\din \dout {Y_{\max}}^2}{n} \Big( \tfrac{e\din}{\beta} \cdot \max(4\beta^2, \tfrac{1}{p(1-p)}) \Big)^\beta
\end{align*}
Here, the final inequality makes use of the bound $\sum_{k=0}^{\beta} \binom{d}{k} \leq \big( \tfrac{ed}{\beta} \big)^\beta$. Note that when $\beta = 1$, this bound simplifies to 
\[
    \frac{e \; \din^2 \; \dout \; {Y_{\max}}^2}{n p (1-p)}.
\]
}

\meedit{\section{Proof of Theorem \ref{thm:lower_bound}}\label{sec:lower_bd_pf}

We use a variation of LeCam's method, which allows us to recast the hardness of TTE estimation through the lens of hypothesis testing. We consider the following setting. 
\begin{itemize}
    \item The causal network is a $d$-regular directed graph (so $\din = \dout = d$ for all nodes) on $n$ nodes.
    \item The coefficients $\{ c_{i,\cS} \}$ are drawn from one of two possible Gaussian distributions, $\Gamma_0$ and $\Gamma_1$. The coefficients are mutually independent under both distributions with marginal probabilities
    \[
        \Gamma_0 \colon \quad c_{i,\cS} \sim \begin{cases}
            0 & |\cS| < \beta, \\
            N \Big( \delta \binom{d}{\beta}^{-1}, \binom{d}{\beta}^{-1} \Big) & |\cS| = \beta,
        \end{cases}
        \hspace{40pt}
        \Gamma_1 \colon \quad c_{i,\cS} \sim \begin{cases}
            0 & |\cS| < \beta, \\
            N \Big( -\delta \binom{d}{\beta}^{-1}, \binom{d}{\beta}^{-1} \Big) & |\cS| = \beta,
        \end{cases}
    \]
    where $\delta$ is a parameter that we will fix later.
    \item All units have a uniform treatment probability $p$. 
\end{itemize}
Using the mutual independence assumption, we see that under $\Gamma_0$, each $Y_i(\textbf{1})  \sim N \big( \delta, 1 \big)$ and $\TTE \sim N \big(\delta, \tfrac{1}{n} \big)$. Under $\Gamma_1$, each $Y_i(\textbf{1}) \sim N \big( -\delta, 1 \big)$ and $\TTE \sim N \big(-\delta, \tfrac{1}{n} \big)$.

We wish to compute a lower bound on the mean squared error of any estimator for $\TTE$ in this setting. To begin this calculation, we have
\begin{align}
    \inf_{\widehat{\TTE}} \sup_{\mathbf{c}} \E\limits_{\bz} \Big[ \big( \widehat{\TTE} - \TTE \big)^2 \Big| \;\mathbf{c}\; \Big] 
    &\geq \inf_{\widehat{\TTE}} \sup_{\mathbf{c}} \tfrac{\delta^2}{100} \Pr \big( \,|\widehat{\TTE} - \TTE| \geq \tfrac{\delta}{10} \,\big|\,\mathbf{c}\, \big) \notag \\
    &\geq \tfrac{\delta^2}{100} \inf_{\widehat{\TTE}} \: \max_{\Gamma \in \{\Gamma_0, \Gamma_1\}} \: \underset{\mathbf{c} \sim \Gamma}{\E} \Big[ \Pr \big( \,|\widehat{\TTE} - \TTE| \geq \tfrac{\delta}{10} \,\big|\,\mathbf{c}\, \big) \Big] \label{eq:infmax}.
\end{align}
Here, the first inequality lower bounds the conditional expectation by $\tfrac{\delta^2}{100}$ whenever $|\widehat{\TTE} - \TTE| \geq \tfrac{\delta}{10}$ and by $0$ whenever $|\widehat{\TTE} - \TTE| < \tfrac{\delta}{10}$. The second inequality replaces the supremum over all possible $\mathbf{c}$ with a maximum over two possible distributions over $\mathbf{c}$. 

Now, consider designing a hypothesis test $\Psi$ to distinguish these models, i.e. a test for $\Ind \big( \E_{\mathbf{c}}[\TTE] > 0 \big)$. Each estimator $\widehat{\TTE}$ gives rise to a decision rule $\widehat{\Psi} = \Ind \big( \widehat{\TTE} > 0 \big)$. If $\widehat{\Psi}$ is incorrect, then one of the following two scenarios must have occurred:
\begin{enumerate}
    \item $\big|\TTE - \E_{\mathbf{c}}[\TTE] \big| \geq \tfrac{9\delta}{10}$ \hfill by a Gaussian tail bound this has probability $< \exp \big( \tfrac{ -81n \delta^2}{200} \big)$
    \item $\big|\widehat{\TTE} - \TTE \big| \geq \tfrac{\delta}{10}$.
\end{enumerate}
When neither of these conditions is true, then $\big| \widehat{\TTE} - \E_{\mathbf{c}}[\TTE] \big| < \delta$, so $\widehat{\TTE}$ and $\E_{\mathbf{c}}[\TTE]$ have the same sign. Applying a union bound, we have
\[
    \Pr \Big( \widehat{\Psi} \ne \Ind \big( \E_{\mathbf{c}}[\TTE] > 0 \big) \Big) < \exp \big( \tfrac{ -81n \delta^2}{200} \big) + \Pr \big( \,|\widehat{\TTE} - \TTE| \geq \tfrac{\delta}{10} \big)
\]
Rearranging this inequality and plugging into \eqref{eq:infmax}, we may continue the simplification
\begin{align*}
    &\geq \tfrac{\delta^2}{100} \inf_{\widehat{\TTE}} \: \max_{\Gamma \in \{\Gamma_0, \Gamma_0\}} \: \underset{\mathbf{c} \sim \Gamma}{\E} \Big[  \Pr \big( \widehat{\Psi} \ne \Ind(\E[\TTE]>0) \,\big|\,\mathbf{c}\, \big) - \tfrac{\delta^2}{100} \exp \big( \tfrac{ -81n \delta^2}{200} \big) \\
    &\geq \tfrac{\delta^2}{100} \inf_{\Psi} \: \max_{\Gamma \in \{\Gamma_0, \Gamma_0\}} \: \underset{\mathbf{c} \sim \Gamma}{\E} \Big[  \Pr \big( \Psi \ne \Ind(\E[\TTE]>0) \,\big|\,\mathbf{c}\, \big) \Big] - \tfrac{\delta^2}{100} \exp \big( \tfrac{ -81n \delta^2}{200} \big) \\
    &\geq \tfrac{\delta^2}{200} \inf_{\Psi} \: \Big( \underset{\mathbf{c} \sim \Gamma_0}{\E} \big[ \Pr ( \Psi \ne 0 )  \big] + \underset{\mathbf{c} \sim \Gamma_1}{\E} \big[ \Pr ( \Psi \ne 1 ) \big] \Big) - \tfrac{\delta^2}{100} \exp \big( \tfrac{ -81n \delta^2}{200} \big) \\
    &= \tfrac{\delta^2}{200} \Big( 1 - \| P_0 - P_1 \|_{\textrm{TV}} \Big) - \tfrac{\delta^2}{100} \exp \big( \tfrac{ -81n \delta^2}{200} \big) \\
    &\geq \tfrac{\delta^2}{200} \Big( 1 - \sqrt{1 - \exp\big( -D_{\textrm{KL}}(P_0 \| P_1)\big)} \; \Big) - \tfrac{\delta^2}{100} \exp \big( \tfrac{ -81n \delta^2}{200} \big)
\end{align*}
Here, the second line follows because we have expanded the support of the infimum to all hypothesis tests, not just those that make use of an estimator $\widehat{TTE}$. The third line lower bounds the maximum over the distributions by an average. The $P_i$ in the fourth line represent the joint distribution over $(\bz, \mathbf{c})$ when $\mathbf{c}$ is drawn from $\Gamma_i$. The last line is an application of the Bretagnolle-Huber inequality.

Next, we derive an upper bound for this KL-Divergence. Applying the definition, we have
\[
    D_{\textrm{KL}} ( P_0 \| P_1 ) 
    = \E_{P_0} \Big[ \log \Big( \tfrac{P_0(\bY,\bz)}{P_1(\bY,\bz)} \Big) \Big] 
    = \E_{P_0} \Big[ \log \Big( \tfrac{P_0(\bY | \bz)}{P_1(\bY | \bz)} \Big) \Big]. 
\]
Here, the second equality uses the fact that the treatment assignments are independent from the random model coefficients. Now, conditioned on the treatment assignment $\bz$, the outcomes $\mathbf{Y}$ are distributed according to a Gaussian with independent coordinates. If we let $\cT(\bz) = \{ i \in [n] \colon z_i = 1 \}$ denote the set of treated individuals under $\bz$, then the marginal distribution of $Y_i(\bz)$ conditioned on $\bz$ is 
\[
    Y_i(\bz) \sim N \bigg( \pm \Big( \begin{smallmatrix} |\cN_i \cap \cT(\bz)| \\ \beta \end{smallmatrix} \Big) \Big( \begin{smallmatrix} d \\ \beta \end{smallmatrix} \Big)^{-1} \delta, \Big( \begin{smallmatrix} |\cN_i \cap \cT| \\ \beta \end{smallmatrix} \Big) \Big( \begin{smallmatrix} d \\ \beta \end{smallmatrix} \Big)^{-1} \bigg),
\]
where the positive expectation comes from $P_0$ and the negative expectation comes from $P_1$. Plugging in the density of the Gaussian into our KL-Divergence formula, we find that
\begin{align*}
    D&_{\textrm{KL}} ( P_0 \| P_1 ) \\ 
    &= \E_{P_0} \bigg[ \tfrac{-1}{2} \big( \begin{smallmatrix} d \\ \beta \end{smallmatrix} \big) \sum_{i=1}^{n} \Big( \begin{smallmatrix} |\cN_i \cap \cT(\bz)| \\ \beta \end{smallmatrix} \Big)^{-1} \Big[ \Big( Y_i(\bz) - \Big( \begin{smallmatrix} |\cN_i \cap \cT(\bz)| \\ \beta \end{smallmatrix} \Big) \big( \begin{smallmatrix} d \\ \beta \end{smallmatrix} \big)^{-1} \delta \Big)^2 - \Big( Y_i(\bz) + \Big( \begin{smallmatrix} |\cN_i \cap \cT(\bz)| \\ \beta \end{smallmatrix} \Big) \big( \begin{smallmatrix} d \\ \beta \end{smallmatrix} \big)^{-1} \delta \Big)^2 \Big] \bigg] \\
    &= 2\delta \sum_{i=1}^{n} \E_{P_0}[ Y_i(\bz)] \\
    &= 2\delta^2 \sum_{i=1}^{n} \big( \begin{smallmatrix} d \\ \beta \end{smallmatrix} \big)^{-1} \E_{\bz} \Big[ \Big( \begin{smallmatrix} |\cN_i \cap \cT(\bz)| \\ \beta \end{smallmatrix} \Big) \Big] \\
    &= 2\delta^2 \sum_{i=1}^{n} \big( \begin{smallmatrix} d \\ \beta \end{smallmatrix} \big)^{-1} \sum_{\substack{\cS \subseteq \cN_i \\ \cS = \beta}} \E_{\bz} \Big[ \Ind \big( \cS \subseteq \cT(\bz) \big) \Big] \\
    &= 2n\delta^2 p^\beta 
\end{align*}

Plugging into our earlier results, we find that
\[
    \inf_{\widehat{\TTE}} \sup_{\mathbf{c}} \E\limits_{\bz} \Big[ \big( \widehat{\TTE} - \TTE \big)^2 \Big| \;\mathbf{c}\; \Big] \geq \tfrac{\delta^2}{200} \Big( 1 - \sqrt{1 - \exp\big( -2n\delta^2 p^\beta \big)} - 2 \exp \big( \tfrac{ -81n \delta^2}{200} \big) \Big).
\]
Taking $\delta^2 = \tfrac{8}{3np^\beta}$, we obtain the upper bound
\[
    \tfrac{1}{75np^\beta} \Big( 1 - \sqrt{1 - \exp( \tfrac{-16}{3})} - 2 \exp \big( \tfrac{-27}{25p^\beta} \big) \Big).
\]
This is $\Omega\Big( \tfrac{1}{np^{\beta}} \Big)$ as long as $p^\beta < 0.16$. 

}
\section{Proof of Theorem~\ref{thm:clt}} \label{sec:CLT_proof}

\begin{proof}[Proof of Theorem~\ref{thm:clt}]
    We apply Theorem 3.6 from \cite{ross2011fundamentals} to the following defined random variables,
    \[ X_i := \tfrac{1}{n}\Big(\Yobs{i}w_i(\bz) - \E\big[\Yobs{i}w_i(\bz)\big]\Big), \quad \nu^2 := \Var\Big(\sum_{i \in [n]} X_i\Big), \quad \text{and} \quad W := \tfrac{1}{\nu} \sum_{i\in [n]}X_i, \]
    where $w_i(\bz) = \sum_{\substack{\cS \subseteq \cN_i \\ |\cS| \leq \beta}}g(\cS) \prod_{j \in \cS}\big(\frac{z_j}{p_j}-\frac{1-z_j}{1-p_j}\big)$.
    By construction, it follows that
    \[
        W = \tfrac{1}{n \nu} \sum_{i \in [n]} \Big( \Yobs{i}w_i(\bz) - \E\big[\Yobs{i}w_i(\bz)\big]\Big) 
        = \tfrac{1}{\nu}  \big(\widehat{\TTE} - \TTE\big),
    \]
    such that $\widehat{\TTE} = W\nu + \TTE$. The proof follows from verifying the conditions used in Theorem 3.6 of \cite{ross2011fundamentals}, computing appropriate bounds for the moments of $X_i$, and using the fact that $\nu^2 = \Omega(1/n)$ \mcedit{by Assumption \ref{assp:nondegeneracy} and the variance bound in Theorem \ref{thm:est_bias_var}}.
   
    First, we note that $\E[X_i] = 0$ by construction.
    To upper bound $\E[X_i^4]$, note that that for all $i$, we have 
    \begin{equation}
    \label{eqn:w_iBound}
        |w_i(\bz)| = \Big|\sum_{\substack{\cS \subseteq \cN_i \\ |\cS| \leq \beta}}g(\cS) \prod_{j \in \cS}\Big(\frac{z_j}{p_j}-\frac{1-z_j}{1-p_j}\Big)\Big| 
        \leq \Big|\sum_{\substack{\cS \subseteq \cN_i \\ |\cS| \leq \beta}} \frac{1}{p^{|\cS|}}\Big| \leq \Big(\frac{d}{p}\Big)^\beta.
    \end{equation}
    The second inequality in \eqref{eqn:w_iBound} follows from Lemma \ref{lem:gs_bound}, which upper bounds $|g(\cS)| \leq 1$. Furthermore we use the assumption that for all $i$, $p_i \in [p,1-p]$ and hence $|(z_j/p_j - (1-z_j)/(1-p_j))| \leq 1/p$. The final inequality in \eqref{eqn:w_iBound} follows from the fact that $|\cS| \leq \beta$ and the number of subsets of $\cN_i$ for any $i$ is bounded above by $d^\beta$.
    
    Recognizing that $\E[\Yobs{i} w_i(\bz)] = \sum_{\substack{\cS' \subseteq \cN_i \\ 1 \leq |\cS'|\leq \beta}} c_{i,\cS'} \leq Y_{\max},$ we obtain a bound $|X_i| \leq \frac{1}{n}\big(Y_{\max} (\frac{d}{p})^{\beta} + Y_{\max}\big)$ using the triangle inequality and the bound on $|w_i(\bz)|$ from \eqref{eqn:w_iBound}. From this, we can bound
    \begin{align}
    \E[X_i^4] &\leq \frac{1}{n^4}\E\bigg[\bigg(Y_{\max} \Big(\frac{d}{p}\Big)^{\beta} + \ Y_{\max}\bigg)^4 
    \bigg] \nn \\
    &= \frac{Y_{\max}^4}{n^4}\left[\Big(\frac{d}{p}\Big)^{4\beta} + 4\Big(\frac{d}{p}\Big)^{3\beta} + 6\Big(\frac{d}{p}\Big)^{2\beta} + 4\Big(\frac{d}{p}\Big)^{\beta} + 1\right]\mcedit{.} \nn\\
    &\quad \hspace{40mm}
    \end{align} 
    Since $d/p > 1$ and $\beta \geq 1$, we can then bound 
    \begin{equation}
       \E[X_i^4] = O\left(\frac{1}{n^4} \bigg(\frac{d}{p}\bigg)^{4\beta} Y_{\max}^4\right). \label{eqn:4thMomentBd}
    \end{equation}
    
    In the same way, we can bound 
    \begin{align}
        \E[|X_i|^3] &\leq \frac{1}{n^3} \E\left[\bigg(Y_{\max} \Big(\frac{d}{p}\Big)^{\beta} + \ Y_{\max}\bigg)^3 \right]\nn \\
        &= \frac{Y_{\max}^3}{n^3}\left[\Big(\frac{d}{p}\Big)^{3\beta}+3\Big(\frac{d}{p}\Big)^{2\beta}+3\Big(\frac{d}{p}\Big)^{\beta}+1\right].\nn
    \end{align} 
    Since $d/p > 1$ and $\beta \geq 1$, we obtain the bound
    \begin{equation}
        \E[|X_i|^3] = O\left(\frac{1}{n^3} \bigg(\frac{d}{p}\bigg)^{3\beta}  Y_{\max}^3\right). \label{eqn:3rdMomentBd}
    \end{equation}
    
    Let $\cM_i$ denote the set of individuals $i'$ such that $\cN_i \cap \cN_{i'} \neq \emptyset$, i.e. all individuals $i'$ that share an in-neighbors with individual $i$. The set $\cM_i$ characterizes the \textit{dependency neighborhood} of $i$, as $X_i$ and $X_j$ are dependent if and only if there is a shared neighbor $k$ such that $X_i$ and $X_j$ both depend on $z_k$. It follows that the maximum size of any dependency neighborhood $D = \max_{i\in[n]}|\cM_i| \leq \din \dout \leq \dmax^2$. Then, by plugging in the bounds for $D, \E[|X_i|^3]$ and $\E[X_i^4]$ into Theorem 3.6 of \cite{ross2011fundamentals} results in the following bound
    \begin{equation}
    \label{eqn:WasserBound}
        d_\text{W}(W,Z) \leq O\left(\frac{\dmax^4}{\nu^3} \frac{Y_{\max}^3 \dmax^{3\beta}}{n^2 p^{3\beta}} + \frac{\dmax^3}{\nu^2} \sqrt{ \frac{Y_{\max}^4 \dmax^{4\beta}}{n^3 p^{4\beta}}}\right).
    \end{equation}
    where recall that $Z$ is a standard normal random variable.
    \mcreplace{Our final step is to lower bound $\nu^2$ and argue that under the boundedness asumptions, the Wasserstein distance between $W$ and $Z$ goes to $0$ as $n \to \infty$.
    
    It follows by the law of total variance that
    
\begin{align*}
    \nu^2 &= \Var\big(\widehat{\TTE}\big) \\
    &= \E_{\bz}\left[ \Var_{\epsilon}\left[ \left.\frac{1}{n}\sum_{i=1}^n \Yobs{i}w_i(\bz) ~\right|~ \bz\right] \right] + \Var_{\bz}\left[ \E_{\epsilon}\left[ \left. \frac{1}{n}\sum_{i=1}^n \Yobs{i}w_i(\bz) ~\right|~ \bz\right] \right] \\
    &= \E\left[ \frac{\sigma^2}{n^2}\sum_{i=1}^n w_i(\bz)^2 \right] + \Var\left[ \frac{1}{n}\sum_{i=1}^n Y_i(\bz) w_i(\bz) \right]
\end{align*}
Recall that  $g(\cS) = \prod_{j \in \cS} (1-p_j) - \prod_{j \in \cS} (-p_j)$ for $\cS \subseteq \cN_i$, and for $\cS, \cS' \subset \cN_i$,
\[\E\left[\prod_{j \in \cS}\Big(\frac{z_j}{p_j}-\frac{1-z_j}{1-p_j}\Big) \prod_{j' \in \cS'}\Big(\frac{z_{j'}}{p_{j'}}-\frac{1-z_{j'}}{1-p_{j'}}\Big)\right] = \Ind(\cS = \cS') \prod_{j \in \cS} \frac{1}{p_j (1-p_j)}.\]
The first term in the variance can be lower bounded by 
\begin{align*}
\E\left[ \frac{\sigma^2}{n^2}\sum_{i=1}^n w_i(\bz)^2 \right]
    &= \frac{\sigma^2}{n^2}\sum_{i=1}^n \E\left[ \left(\sum_{\substack{\cS \subseteq \cN_i \\ |\cS| \leq \beta}}g(\cS) \prod_{j \in \cS}\Big(\frac{z_j}{p_j}-\frac{1-z_j}{1-p_j}\Big)\right)^2 \right] \\  
    &= \frac{\sigma^2}{n^2}\sum_{i=1}^n \sum_{\substack{\cS \subseteq \cN_i \\ |\cS| \leq \beta}}g(\cS)^2 \prod_{j \in \cS} \frac{1}{p_j (1-p_j)} \\
    &= \frac{\sigma^2}{n^2}\sum_{i=1}^n \sum_{\substack{\cS \subseteq \cN_i \\ |\cS| \leq \beta}} \left(\prod_{j \in \cS} (1-p_j)^2 + \prod_{j \in \cS} p_j^2 - 2 \prod_{j \in \cS} (1-p_j) (-p_j) \right) \frac{1}{\prod_{j \in \cS} p_j (1-p_j)} \\
    &= \frac{\sigma^2}{n^2}\sum_{i=1}^n \sum_{\substack{\cS \subseteq \cN_i \\ |\cS| \leq \beta}} \left(\prod_{j \in \cS} \frac{1-p_j}{p_j} + \prod_{j \in \cS} \frac{p_j}{1-p_j} - 2 (-1)^{|\cS|} \right) \\
    &\geq \frac{2 \sigma^2}{n^2}\sum_{i=1}^n \sum_{k = 1}^{\min(\lceil \beta/2\rceil, |\cN_i|)} \binom{|\cN_i|}{2k-1} \\
    &\geq \frac{2 \sigma^2}{n},
\end{align*}
where the last inequality uses that $|\cN_i| \geq 1$ since it must at least contain $i$ itself, and the second to last inequality uses the fact that for $|\cS|$ odd, $\left(\prod_{j \in \cS} \frac{1-p_j}{p_j} + \prod_{j \in \cS} \frac{p_j}{1-p_j} - 2 (-1)^{|\cS|} \right) \geq 2$.

The second term in the variance can be expanded to
\begin{align*}
    \Var\left[ \frac{1}{n}\sum_{i=1}^n Y_i(\bz) w_i(\bz) \right]
    &= \frac{1}{n^2} \sum_{\substack{i,\cS' \subseteq \cN_i \\ |\cS'| \leq \beta}} c_{i,\cS'}  \sum_{\substack{i',\cT' \subseteq \cN_{i'} \\ |\cT'| \leq \beta}} c_{i',\cT'} \sum_{\substack{\cS \subseteq \cN_i \\ |\cS| \leq \beta}} g(\cS) \sum_{\substack{\cT \subseteq \cN_{i'} \\ |\cT| \leq \beta}} g(\cT) \\
    &\hspace{3em}\times\Cov\left[ \prod_{j \in \cS} \Big( \frac{z_j}{p_j} - \frac{1-z_j}{1-p_j} \Big) \prod_{j' \in \cS'} z_{j'} \:,\: \prod_{k \in \cT} \Big( \frac{z_k}{p_k} - \frac{1-z_k}{1-p_k} \Big) \prod_{k' \in \cT'} z_{k'}\right].
\end{align*}
By Lemma \ref{lem:cov_bound}, the covariance terms are nonnegative. By the assumption that $p_i \leq \frac12$ for all $i$, it follows that $g(\cS)$ is always nonnegative. Let $\cM_i := \{i' ~\text{s.t.}~ \cN_i \cap \cN_{i'} \neq \emptyset\}$ denote the set of individuals that share in-neighbors with individual $i$. Additionally, by the assumption that the coefficients $c_{i,\cS}$ are all the same sign and that minimally $|c_{i,\emptyset}| \geq \rho > 0$, then it follows that 
\begin{align*}
    \Var\left[ \frac{1}{n}\sum_{i=1}^n Y_i(\bz) w_i(\bz) \right]
    &\geq \frac{1}{n^2} \sum_{i} c_{i,\emptyset}  \sum_{i' \in \cM_i} c_{i',\emptyset} \sum_{\substack{\cS \subseteq \cN_i \cap \cN_{i'} \\ |\cS| \leq \beta}} g(\cS)^2
    \Var\left[ \prod_{j \in \cS} \Big( \frac{z_j}{p_j} - \frac{1-z_j}{1-p_j} \Big)\right], \\
    &= \frac{1}{n^2} \sum_{i} c_{i,\emptyset}  \sum_{i' \in \cM_i} c_{i',\emptyset} \sum_{\substack{\cS \subseteq \cN_i \cap \cN_{i'} \\ |\cS| \leq \beta}} g(\cS)^2
    \prod_{j \in \cS} \frac{1}{p_j (1-p_j)},\\
    &\overset{(a)}{\geq} \frac{1}{n^2} \sum_{i} c_{i,\emptyset}^2 \sum_{\substack{\cS \subseteq \cN_i \\ |\cS| \leq \beta}} g(\cS)^2
    \prod_{j \in \cS} \frac{1}{p_j (1-p_j)},\\
    &\overset{(b)}{\geq} \frac{1}{n^2} \sum_{i} c_{i,\emptyset}^2 \sum_{j \in \cN_i} \frac{1}{p_j (1-p_j)},\\
    &\overset{(c)}{\geq} \frac{4\rho^2}{n},
\end{align*}
where (a) uses the fact that $\cM_i$ minimally includes individual $i$ itself, and (b) limits the second sum to the singleton sets, for which $g(\{j\}) = 1$, and (c) uses the fact that $1/p_j(1-p_j) \geq 4$ and $|\cN_i|\geq 1$ as it must contain $i$ itself.

Plugging in our lower bound on $\nu^2$ into \eqref{eqn:WasserBound}}{Since Assumption \ref{assp:nondegeneracy} implies that $\nu^2 \geq O(1/n)$}, it follows that
\mcreplace
  {\begin{align*}
        d_\text{W}(W,Z) 
        &= O\left(\frac{(Y_{\max}^3 + \sigma^3)\dmax^{3\beta+4}}{n^{1/2} (\sigma^2 + \rho^2)^{3/2} p^{3\beta}} +  \frac{(Y_{\max}^4 + \sigma^4)^{1/2}\dmax^{2\beta+3}}{n^{1/2} (\sigma^2 + \rho^2)  p^{2\beta}}\right).
    \end{align*}}
    {
    \begin{align*}
         d_\text{W}(W,Z)
        &= O\left(\frac{Y_{\max}^3\dmax^{3\beta+4}}{n^{1/2} p^{3\beta}} +  \frac{Y_{\max}^2\dmax^{2\beta+3}}{n^{1/2}  p^{2\beta}}\right).
    \end{align*}
    }
By boundedness of $Y_{\max}, \dmax, \beta,$ and as $p = \omega(n^{-1/4\beta})$, the Wasserstein distance between $W$ and $Z \sim \textrm{N}(0,1)$ goes to $0$ as $n\to \infty$.
As $\widehat{\TTE} = W\nu + \TTE$, it follows that the distribution of $\widehat{\TTE}$ converges to a normal with mean $\TTE$ and variance $\nu^2$.
\end{proof}

\meedit{\section{Other Estimands} \label{sec:proof_other_est}



\paragraph*{Average Treatment Effect} \medskip

The \textit{average treatment effect} (ATE) measures the average effect that one's own treatment has on their outcome (assuming no one else is treated).
\begin{equation*}
    \textrm{ATE} := \tfrac{1}{n} \sum_{i=1}^{n} \Big( Y_i(\mathbf{e_i}) - Y_i(\mathbf{0}) \Big) = \tfrac{1}{n} \sum_{i=1}^{n} c_{i,\{i\}}.
\end{equation*}
The estimator for ATE takes the form
\[
    \widehat{\textrm{ATE}} = \tfrac{1}{n} \sum_{i=1}^{n} Y_i(\bz) \sum_{\cS \in \cS_i^\beta} (A_i)_{\cS,\{i\}} \prod_{j \in \cS} z_j,
\]
where $A_i$ is the inverse of $\E[ \tbz_i \tbz_i^\intercal]$ as defined in \Cref{lem:inverse_explicit}. Plugging in the explicit form of these matrix entries, we have:
\begin{align*}
    \widehat{\textrm{ATE}} &= \tfrac{1}{n} \sum_{i=1}^{n} Y_i(\bz) \sum_{\cS \in \cS_i^\beta} \tfrac{-1}{p_i} \prod_{j \in \cS} \tfrac{-z_j}{p_j} \sum_{\substack{\cU \in \cS_i^\beta \\ (\cS \cup \{i\}) \subseteq \cU}} \prod_{\ell \in \cU} \tfrac{p_\ell}{1-p_\ell} \\
    &= \tfrac{1}{n} \sum_{i=1}^{n} Y_i(\bz) \cdot \tfrac{-1}{p_i} \sum_{\substack{\cU \in \cS_i^\beta \\ i \in \cU}} \prod_{\ell \in \cU} \tfrac{p_\ell}{1-p_\ell} \sum_{\cS \subseteq \cU} \prod_{j \in \cS} \tfrac{-z_j}{p_j}  \tag{reverse inner sums} \\
    &= \tfrac{1}{n} \sum_{i=1}^{n} Y_i(\bz) \cdot \tfrac{-1}{p_i} \sum_{\substack{\cU \in \cS_i^\beta \\ i \in \cU}} \prod_{\ell \in \cU} \tfrac{p_\ell}{1-p_\ell} \prod_{j \in \cU} \tfrac{p_j-z_j}{p_j} \\
    &= \tfrac{1}{n} \sum_{i=1}^{n} Y_i(\bz) \cdot \tfrac{-1}{p_i} \sum_{\substack{\cU \in \cS_i^\beta \\ i \in \cU}} \prod_{j \in \cU} \tfrac{p_j-z_j}{1-p_j}.
\end{align*}

\paragraph*{Conditional Average Treatment Effect} \medskip

Given any subdemographic of the population $\mathcal{D} \subseteq [n]$, the \textit{Conditional average treatment effects} (CATE) of $\mathcal{D}$ is the average effect to an individual of the demographic that is treated in isolation
\[
    \textrm{CATE}(\mathcal{D}) := \tfrac{1}{|\mathcal{D}|} \sum_{i \in \mathcal{D}} \Big( Y_i(\mathbf{e_i}) - Y_i(\mathbf{0}) \Big) = \tfrac{1}{|\mathcal{D}|} \sum_{i \in \mathcal{D}} c_{i,\{i\}}.
\]
By the same calculation as above, we estimate the conditional average treatment effect
\[
    \widehat{\textrm{CATE}(\mathcal{D})}
    = \tfrac{1}{|\mathcal{D}|} \sum_{i \in \mathcal{D}} \widehat{c_{i,\{i\}}} 
    = \tfrac{1}{|\mathcal{D}|} \sum_{i \in \mathcal{D}} Y_i(\bz) \cdot \tfrac{-1}{p_i} \sum_{\substack{\cU \in \cS_i^\beta \\ i \in \cU}} \prod_{j \in \cU} \tfrac{p_j-z_j}{1-p_j}.
\]

\paragraph*{Size-Dependent Treatment Effects} \medskip

Although not standard in the literature, as its significance is largely brought about by the low-degree polynomial structure of our potential outcomes model, another family of causal estimands can be used to understand the magnitude of the treatment effects that individuals experience as a result of different sized subsets of their neighborhood being treated. We define the $\alpha$-treatment effect,
\[
    \textrm{TE}(\alpha) := \tfrac{1}{n} \sum_{i=1}^{n} \sum_{\substack{\cS' \subseteq \cN_i \\ |\cS'| = \alpha}} c_{i,\cS'},
\]
for some parameter $\alpha \leq \beta$. That is, the $\alpha$-treatment effect measures only those effects that subsets $\cS'$ of size $\alpha$ have on the outcome of each individual and averages the cumulative effect over the individuals. For example, even though the polynomial degree of a model could be large, the causal effects associated to higher order interactions could be small such that the potential outcomes could be well approximated with a linear model ($\beta = 1$). In such an event, one would expect that \textrm{TE}(1) would be close to \TTE, and \textrm{TE}($\alpha$) would be significantly smaller in magnitude for $\alpha > 1$. As $\textrm{TE}(\alpha)$ is again a linear combination of the model parameters, we can obtain an unbiased estimate using the \mcreplace{pseudoinverse estimator}{same framework as before}. In this case, the estimator takes the form
\begin{align*}
    \widehat{\textrm{TE}(\alpha)} 
    &= \tfrac{1}{n} \sum_{i=1}^{n} Y_i(\bz) \sum_{\cS \in \cS_i^\beta} \sum_{\substack{\cT \subseteq \cN_i \\ |\cT| = \alpha}} (A_i)_{\cS,\cT} \prod_{j \in \cS} z_j \\
    &= \tfrac{1}{n} \sum_{i=1}^{n} Y_i(\bz) \sum_{\cS \in \cS_i^\beta} \sum_{\substack{\cT \subseteq \cN_i \\ |\cT| = \alpha}} \prod_{j \in \cS} \tfrac{-z_j}{p_j} \prod_{k \in \cT} \tfrac{-1}{p_k} \sum_{\substack{\cU \in \cS_i^\beta \\ (\cS \cup \cT) \subseteq \cU}} \prod_{\ell \in \cU} \tfrac{p_\ell}{1-p_\ell} \\
    &= \tfrac{1}{n} \sum_{i=1}^{n} Y_i(\bz) \sum_{\substack{\cT \subseteq \cN_i \\ |\cT| = \alpha}} \prod_{k \in \cT} \tfrac{-1}{p_k} \sum_{\substack{\cU \in \cS_i^\beta \\ \cT \subseteq \cU}} \prod_{\ell \in \cU} \tfrac{p_\ell}{1-p_\ell} \sum_{\cS \subseteq \cU} \prod_{j \in \cS} \tfrac{-z_j}{p_j} \tag{reorder sums} \\
    &= \tfrac{1}{n} \sum_{i=1}^{n} Y_i(\bz) \sum_{\substack{\cT \subseteq \cN_i \\ |\cT| = \alpha}} \prod_{k \in \cT} \tfrac{-1}{p_k} \sum_{\substack{\cU \in \cS_i^\beta \\ \cT \subseteq \cU}} \prod_{j \in \cU} \tfrac{p_j-z_j}{1-p_j} \tag{distributivity}.
\end{align*}}

\end{document}